\documentclass[a4paper,reprint,onecolumn,notitlepage,aip,cha,nofootinbib]{revtex4-2}
\usepackage[margin=1in,includeheadfoot]{geometry}
\usepackage[english]{babel}
\usepackage[T1]{fontenc}
\usepackage{tgtermes} 
\usepackage{tgheros} 

\usepackage{amsmath}
\usepackage{amssymb}
\usepackage{amsthm}
\usepackage{mathtools}
\usepackage{bm}
\usepackage{braket}

\usepackage{graphicx}
\usepackage[dvipsnames]{xcolor}
\usepackage{tikz}
\usetikzlibrary{3d}

\usepackage{bbm}
\usepackage{array}
\usepackage{hhline}
\usepackage{mdframed}
\mdfsetup{
  innertopmargin=2pt,
  innerbottommargin=8pt
}

\PassOptionsToPackage{hyphens}{url} 
\usepackage[breaklinks=true]{hyperref}
\bibpunct{[}{]}{;}{n}{}{} 

\usepackage{enumerate}
\usepackage[shortlabels]{enumitem}
\newmdtheoremenv{theorem}{Theorem}
\newtheorem{lemma}[theorem]{Lemma}
\newtheorem{corollary}[theorem]{Corollary}
\newtheorem{proposition}[theorem]{Proposition}

\theoremstyle{remark}

\newtheorem{example}{Example}[section]

\theoremstyle{definition}
\newmdtheoremenv{definition}[theorem]{Definition}

\numberwithin{equation}{section}
\numberwithin{theorem}{section}
\renewcommand{\theequation}{\arabic{section}.\arabic{equation}}
\renewcommand{\thetheorem}{\arabic{section}.\arabic{theorem}}

\newcommand{\rmi}{\mathrm{i}}

\newcommand{\rmd}{\mathrm{d}}

\newcommand{\R}{\mathbb{R}}
\newcommand{\C}{\mathbb{C}}

\renewcommand{\H}{\mathcal{H}}
\renewcommand{\S}{\mathcal{S}}
\newcommand{\V}{\mathcal{V}}
\newcommand{\densmat}{\mathcal{E}}
\newcommand{\puremf}{\mathbb{P}} 

\newcommand{\Bpure}{\mathcal{B}_\mathrm{p}}

\newcommand{\Bens}{\mathcal{B}_\mathrm{e}}

\newcommand{\Bcrit}{\mathcal{B}_\mathrm{c}}
\newcommand{\Breg}{\mathcal{B}_\mathrm{r}}
\newcommand{\Fpure}{\mathcal{F}_\mathrm{p}}

\newcommand{\Fens}{\mathcal{F}_\mathrm{e}}
\newcommand{\FHK}{\mathcal{F}_\mathrm{HK}}
\DeclareMathOperator{\trace}{Tr}

\DeclareMathOperator{\conv}{conv}
\DeclareMathOperator{\rank}{rank}
\DeclareMathOperator{\dom}{dom}
\DeclareMathOperator{\specdown}{spec^\downarrow}
\DeclareMathOperator{\sgrad}{sgrad}
\DeclareMathOperator{\Ad}{Ad}
\DeclareMathOperator{\aff}{aff}
\DeclareMathOperator{\relint}{rel\,int}
\DeclareMathOperator{\relboundary}{\partial_{\rm{rel}}}

\newcommand{\rr}{\bm{r}}
\newcommand{\Fw}{\mathcal{F}_w}
\DeclareMathOperator{\densmap}{\mu}
\newcommand{\up}{\uparrow}
\newcommand{\dn}{\downarrow}
\newcommand\ketbra[1]{\ket{#1}\!\bra{#1}}
\newcommand{\Hpos}{H_+}
\newcommand{\Lsa}{\mathcal{L}_\mathrm{sa}}

\def\CC{\mathbbm{C}}

\begin{document}

\title{Unified Framework for Functional Theories of Quantum Systems}

\author{Chih-Chun Wang}
\affiliation{Arnold Sommerfeld Center for Theoretical Physics, \mbox{Ludwig-Maximilians-Universit\"at M\"unchen, Germany}}

\author{Julia Liebert}
\affiliation{Department of Chemistry, Princeton University, Princeton, New Jersey, USA}
\affiliation{Arnold Sommerfeld Center for Theoretical Physics, \mbox{Ludwig-Maximilians-Universit\"at M\"unchen, Germany}}

\author{Markus Penz}
\email{m.penz@inter.at}
\affiliation{Arnold Sommerfeld Center for Theoretical Physics, \mbox{Ludwig-Maximilians-Universit\"at M\"unchen, Germany}}
\affiliation{Department of Computer Science, Oslo Metropolitan University, Oslo, Norway}

\author{Christian Schilling}
\email{c.schilling@lmu.de}
\affiliation{Arnold Sommerfeld Center for Theoretical Physics, \mbox{Ludwig-Maximilians-Universit\"at M\"unchen, Germany}}

\begin{abstract}
We introduce and study a unified framework for density-functional theory and its variants for quantum systems on finite-dimensional Hilbert spaces. These theories seek to reduce the complexity inherent in the many-body quantum problem by describing ground states through reduced variables. The central ingredients of our unified framework are a generalized choice of basic observables, whose expectation values define precisely those reduced variables, and a fixed part of the Hamiltonian characterizing the class of quantum systems under consideration. It is this minimal structure, which we call the scope of a functional theory, that is necessary and sufficient for the formulation of a functional theory. In particular, it allows one to define the universal functionals, establish their convexity and differentiability properties, address representability questions, and prove a Hohenberg--Kohn-type uniqueness result.
A purification construction also relates ensemble and weighted-ensemble functionals to the pure-state variant. 
Particular emphasis is placed on functional theories with Lie-algebra observable structures, connecting the variational framework to symplectic geometry. The result of this work is a systematic mathematical formulation in which structural results can be proved once and applied across a broad class of finite-dimensional functional theories.
\end{abstract}

\maketitle
\tableofcontents

\section{Introduction and outline}
\label{sec:intro}

The success of density-functional theory (DFT)~\cite{hohenberg-kohn1964,KS1965,Kohn1999-Nobel-Lecture,Burke2012} as an in principle exact reformulation of the quantum many-body ground-state problem has motivated both rigorous mathematical foundations~\cite{Levy79,levy1982,Lieb1983,Garrigue2018HK,Sutter2024,CarvalhoCorso2025} and numerous extensions based on reduced variables other than, or in addition to, the particle density. Examples include current DFT~\cite{Vignale1987,vignale-rasolt-geldart1990,Tellgren2012,Penz2023-CDFT-Review}, spin DFT~\cite{Barth1972,Gunnarsson1976,Vignale1988}, one-particle reduced density matrix functional theory (RDMFT)~\cite{Gilbert1975, Levy79,Valone80, PG16}, as well as functional theories for model Hamiltonians~\cite{maoTestingDensityFunctional2021, Bakkestuen2025-Dicke}. Although these theories differ in their choice of reduced variables and the definition of the universal functional, they share a common variational structure: external fields couple linearly to selected observables, ground-state energies arise by minimization, and universal functionals are obtained by constrained search over quantum states. Many mathematical results are therefore not specific to a particular functional and reduced variable, still they are often proved separately in each theory.

In this work, we develop a unified mathematical framework for ground-state functional theories on finite-dimensional Hilbert spaces. Instead of focusing exclusively on the usual one-particle density variable of DFT, we consider a general space of basic observables and define the reduced variable as the linear map that assigns to each observable its expectation value in a given quantum state.
Different choices of this observable space then recover the usual density, spin-density and current-related variables, one-particle reduced density matrices, and other reduced descriptions. 
A formulation on finite-dimensional Hilbert spaces arises not only after discretization for numerical purposes but is also intrinsic to lattice models, tight-binding and Hubbard-type systems, quantum spin models, and related finite quantum systems. It retains the central structural questions of functional theories while avoiding some of the technical difficulties of the infinite-dimensional Banach space setting~\cite{ENGLISCH1983,Lammert2007,Kryachko2014,teale2022round-table,MY-DFT-Perspective}. Recently growing interest in DFT in finite-dimensional settings~\cite{Wu2006,Pastor2011a,nagyDensityFunctionalTheory2013,nagyQuantumPhaseTransitions2013,Palamara2024,Zhao2025Fractionalized,Penz2025-cs-imag-time,Cances2026-TDDFT-geom-lattice}, which even revealed new mathematical structures~\cite{Xu2022,Song2023,Song2025,penz2023geometry,wangGeometryGeneralizedDensity2026}, makes this elaboration particularly timely.

Similar generalizations for functional theories have been proposed before on various occasions~\cite{Bauer1983,Schonhammer1995,higuchi2004arbitrary,higuchi2004arbitrary2, Liebert2023-refining}, but to date those developments have not been organized within a systematic mathematical framework. 
In filling this gap, the main contributions of our work are:
\begin{itemize}[leftmargin=1em]
    \item We provide a unified framework by identifying the minimal structure necessary to define the reduced variables and the universal functionals, hence the whole functional theory for ground-state properties. 
    We call this structure the `scope' of a functional theory (see Definition~\ref{def:scope}). 
    The derived results show all aspects that are important in DFT and related areas, in particular $N$- and $v$-representability, continuity and differentiability of the universal functionals, and the Hohenberg--Kohn theorem.
    On the other hand, the framework is general enough to encompass practically all existing variants of functional theories 
    for ground states in the finite-dimensional setting, as illustrated in Table~\ref{tab:settings} and exemplified in Sections~\ref{sec:ex-geom} and \ref{sec:ex-ft}.

    \item This unification extends to different variants of the universal functional. In Section~\ref{sec:change-of-scope} we show how the scope can be reduced and demonstrate that the ensemble functional and a functional defined from weighted ensembles ($w$-ensemble functional) are both related to the pure-state functional of a \emph{purified} functional theory. This allows us to deduce properties of these two functionals from statements about the pure-state functional.
    
    \item With a precise notion of a functional theory, we are able to move the topic to the mathematical realm and remove any ambiguities from its formulation.
    The unfolding theory has interesting links to various fields of mathematics, like convex analysis, matrix analysis (in particular joint numerical ranges), Lie algebras, and symplectic geometry.
    Results about functional theories can be proved once and will then hold for any variant within the unified framework. This also offers a systematic guide for constructing new functional theories: the choice of basic observables determines the reduced variable, the representability domain, and the universal functionals.
\end{itemize}

The content of the paper is summarized as follows. After the formal definition of a `scope', which fully defines a functional theory in Section~\ref{sec:def}, we study the geometrical properties of the sets of allowed values for the reduced variables in Section~\ref{sec:geom}. These sets are called `observable ranges' and are nothing but the joint numerical ranges of the basic observables. 
Section~\ref{sec:ex-geom} illustrates the observable ranges with several examples.
Section~\ref{sec:functionals} proceeds with the definition and the properties of the universal functionals as the central objects of the theory that allow one to determine the ground-state energy, highlighting our point of view that the existence of the universal functionals is not due to coincidences or specialized constructions, but rather follows from natural assumptions about the family of quantum systems (defined by the scope) one chooses to study. Their effective domains are precisely the observable ranges from the section before. This section also answers important questions about $v$-representability that are of fundamental importance for functional theories, since they allow to find all values for the reduced variables that actually belong to a valid ground state. Interestingly, in the case of the pure-state constrained-search functional, we find that representability can only be guaranteed if excited states are also included. In addition, this section discusses several variants of the Hohenberg--Kohn theorem that state uniqueness properties of representability, and introduces links between representability and Gâteaux differentiability. Then, in Section~\ref{sec:ex-ft}, we illustrate our framework with some of the most important examples of functional theories for applications in chemistry and physics, including DFT on a lattice, RDMFT, and versions of $N$-qubit models. Section~\ref{sec:change-of-scope} shows that by changing the scope, we either achieve a reduced functional theory or, by extending it, can relate the convex ensemble constrained-search functional and the non-convex pure-state constrained-search functional through a purification technique. We also discuss a similar purification technique for $w$-ensemble functional theories. The last large topic presented in Section~\ref{sec:Lie} is a specialized treatment of functional theories where the observables carry a Lie-algebra structure and for which the symplectic geometry of the state manifold becomes relevant. In this setting, techniques from the theory of moment maps provide powerful tools for the analysis of $N$-representability, resulting in a solution to the one-body $N$-representability problem~\cite{klyachkoQuantumMarginalProblem2006, altunbulakPauliPrincipleRepresentation2008, AK08, klyachkoPauliExclusionPrinciple2009}. This is particularly valuable because the general $N$-representability problem for $p$-body reduced density matrices with $p>1$ is known to be QMA-hard~\cite{liuQuantumComputationalComplexity2007}. We conclude this formulation of a comprehensive framework for functional theories on finite-dimensional Hilbert spaces with an outlook on possible extensions in Section~\ref{sec:outlook}.

\begin{table*}[ht!]
\centering
\renewcommand{\arraystretch}{1.3}
\begin{tabular}{|>{\raggedright\arraybackslash}p{2.9cm}|>{\raggedright\arraybackslash}p{2.7cm}|>{\raggedright\arraybackslash}p{3.6cm}|>{\raggedright\arraybackslash}p{4.0cm}|>{\raggedright\arraybackslash}p{1.92cm}|}
\hline
\textbf{setting} & \textbf{Hilbert space} & \textbf{basic observables} & \textbf{Hamiltonian fixed part (e.g.)} & \textbf{found in Sec.}\\
\hhline{|=|=|=|=|=|}

lattice DFT for $N$ spinless fermions on $M$ sites &
$\H = \wedge^N \C^{M}$ &
site occupations
$n_i=a_i^* a_i$ &
hopping and interaction Hamiltonian $H_0 = \sum_{i\neq j} (t_{ij}a_i^* a_j + w_{ij}a_i^* a_j^* a_i a_j)$ &
\ref{sec:lattice-geom}, \ref{sec:lattice-DFT}, \ref{sec:symplectic:lattice-dft}
\\
\hline

spin system, $N$-qubit model &
$\H=(\C^{2})^{\otimes N}$ &
local spin operators $X_i, Y_i, Z_i$ (full Bloch vector or reduced set) &
spin chain 
$H_0 = -\sum_{i=1}^{N-1} X_iX_{i+1}$ &
\ref{sec:Bloch}, \ref{sec:geometry_double_qubit}, \ref{sec:spin-chain}, \ref{sec:double-qubit-FT}
\\
\hline

momentum functional theory for bosons on periodic $L^d$-site lattice &
subspace $\mathcal{H}_{\bm{p}}$ with fixed total momentum $\bm{p}\in (\mathbb{Z}/L\mathbb{Z})^d$ &
momentum occupations
$n_{\bm{k}}$&
on-site interaction $H_0=\sum_i b_i^*b_i(b_i^*b_i-1)$ &
\ref{sec:momentum-ft}
\\
\hline

lattice RDMFT for $N$ spinless fermions on $M$ sites &
$\H = \wedge^N \C^{M}$ &
self-adjoint one-body operators spanned by $a_i^*a_j$
&
interaction Hamiltonian
$W= \sum_{i\neq j} w_{ij}a_i^* a_j^* a_i a_j$
&
\ref{sec:RDMFT}, \ref{sec:LieRDMFT}
\\
\hline
spin-independent RDMFT &
$\H = \wedge^N (\C^{M}\otimes\C^2)$ &
generators of $\rm U(d)$ $E_{ij} = a_{i\uparrow}^* a_{j\uparrow}+ a_{i\downarrow}^* a_{j\downarrow}$ &
spin-independent ($\mathrm{SU}(2)$-invariant) interaction Hamiltonian $W$
& \ref{sec:symplectic:spin}
\\
\hline
general spin-adapted RDMFT &
$\H = \wedge^N (\C^{M}\otimes\C^2)$ &
generators of $\rm U(d)$ $E_{ij}$ and generators of $\rm U(2)$ $S_{\sigma\sigma'}=\sum_{i=1}^Ma_{i\sigma}^* a_{i\sigma'}$&
interaction Hamiltonian $W$ 
& \ref{sec:symplectic:spin}
\\
\hline

spin-resolved lattice DFT (Hubbard model) &
$\H = \wedge^N (\C^{M}\otimes\C^2)$ &
spin-resolved site occupations
$n_{i\sigma}=a_{i\sigma}^* a_{i\sigma}$ where $\sigma\in\{\uparrow,\downarrow\}$&
Hubbard Hamiltonian
$H_0 = -\sum_{i\neq j, \sigma\in\{\uparrow,\downarrow\}} t_{ij}a_{i\sigma}^* a_{j\sigma} + U\sum_i a_{i\uparrow}^* a_{i\downarrow}^* a_{i\uparrow} a_{i\downarrow}$
& -- \\
\hline
lattice current DFT &
$\H = \wedge^N \C^{M}$ &
site occupations $n_i=a_i^* a_i$ and link currents
$J_{ij}=-\rmi\!\left(t_{ij}a_i^* a_j-t_{ji}a_j^* a_i
\right)$
&
same as in lattice DFT (but with $t_{ij}=\overline{t_{ji}}\in\C$)
& --
\\
\hline
\end{tabular}
\caption{Examples of functional theory settings within the unified framework. The basic observables always define the reduced variable that enters the functionals by expectation values.}
\label{tab:settings}
\end{table*}

\section{Scope of a functional theory}
\label{sec:def}

In this section, we specify the fundamental mathematical structures that
constitute a functional theory's `scope', from which all other objects of the
theory will be derived.

Functional theories aim to reformulate the ground-state problem in terms of reduced variables rather than the full quantum state. 
The reduced variables are given as the expectation values of observables from a chosen subspace of $\Lsa(\H)$, the space of all self-adjoint operators on $\H$. 
At the same time, 
this subspace plus a fixed $H_0\in\Lsa(\H)$ defines the set of all possible 
Hamiltonians of the theory, 
allowing one to capture whole physical fields adequately in one single setting. For example, in quantum chemistry, $H_0$ will always include the Coulomb repulsion and kinetic energy of electrons, and only the external potential from the nuclei is allowed to vary. We now formalize this as follows:

\begin{definition}\label{def:scope}%
    The defining \emph{scope} of a functional theory is a tuple $\S=(\V, \H, \iota, H_0)$, where $\V$ is a real vector space, $\H$ a finite-dimensional Hilbert space, $\iota: \V\rightarrow \Lsa(\H)$ a linear map, and $H_0\in\Lsa(\H)$ a self-adjoint operator on $\H$.\\
    If $\iota(\V)$ is such that all its elements commute pairwise, then it is called an \emph{abelian scope}.
\end{definition}

Various aspects of the structure encoded in Definition~\ref{def:scope} have appeared in previous work on functional theories~\cite{Liebert2023-refining, LMS24-jctc, wangGeometryGeneralizedDensity2026, Liebert25-phdthesis}. The present formulation identifies this structure in a concise and rigorous manner and uses it as the foundation for a unified mathematical framework. In particular, Definition~\ref{def:scope} determines the admissible Hamiltonians and reduced variables, and leads to a canonical definition of the eponymous universal functionals introduced in Section~\ref{sec:functionals}.

We give the separate notion of an `abelian scope', since some of the following results will only hold for the abelian case.
The operator $H_0$ from the definition is seen as the fixed, universal part of a Hamiltonian describing a quantum system, i.e., the part of the system that we cannot manipulate. A scope then yields an affine subspace of  $\Lsa(\H)$ that describes the set of all Hamiltonians under consideration,
\begin{equation}\label{eq:ham}
    H(v) := H_0 + \iota(v),
\end{equation}
where $v\in\V$ parameterizes the variable part of the Hamiltonian, called the \textit{(external) potential}. 
For example, consider a system consisting of $N$ identical particles on $M$ lattice sites (see Section~\ref{sec:lattice-geom}).
We can choose $\V=\mathbb{R}^M$ and $\iota: \R^M\rightarrow \Lsa(\H), v\mapsto \sum_{i=1}^M v_ia_i^* a_i$ to parameterize the space of all local potentials. This already demonstrates one reason behind this kind of parametrization, since the same $\V$ can be mapped to potential operators acting on any number (or kind) of particles.

Note that if $\iota$ is not injective or $I_{\H} \in \iota(\V)$, then the
parameterization of the Hamiltonians carries redundancy. It can be desirable to
choose a scope that avoids such redundancy, especially with respect to results
relating to `critical values' (see Section~\ref{sec:crit} and note its relevance
for the strong Hohenberg--Kohn result in Theorem~\ref{th:strongHK}).

In practice, we can always choose a basis $(e_1,\ldots,e_m)$ of $\V$ that is
mapped to $B_i=\iota(e_i)\in\Lsa(\H)$. For every $v\in\V$ we can then write
\begin{equation}
    \iota(v)=\sum_{i=1}^m v_i\iota(e_i)=\sum_{i=1}^m v_iB_i.
\end{equation}
The $B_i$ can be physically interpreted as \emph{control} and \emph{measurement} observables. The first interpretation is clear from the form of the Hamiltonian in Eq.~\eqref{eq:ham}.
On the other hand, 
their connection to measurements is through the definition of a reduced variable as the collection of their expectation values, as we elaborate below. 
This reduced variable then generalizes the notion of particle density in ordinary DFT, as seen in the example discussed in Section~\ref{sec:lattice-geom}.

Let $\Psi \in \mathcal{H}$ be a normalized vector, and let $P_\Psi$ be the corresponding  rank-one orthogonal projector (also written as $\ket{\Psi}\!\bra{\Psi}$). We define its \textit{density} with respect to the scope $\S$ as
the linear functional on $\V$ given by
\begin{equation}\label{eq:density-map}
    \densmap(P_\Psi) := (v \mapsto \braket{\Psi, \iota(v)\Psi} 
     =\trace(P_\Psi\iota(v)))
    \in \V^*.
\end{equation}
Here and in the following, we adopt the convention that the inner product $\braket{\cdot,\cdot}$ on $\mathcal{H}$ is antilinear in the first entry and we sometimes use the notation $\langle A\rangle_\Psi=\langle\Psi,A\Psi\rangle$ for the expectation value.
Hence, for any quantum state $P_\Psi$, the associated density $\rho = \mu(P_\Psi)$ belongs to the dual $\V^*$, which can be identified with $\V$ after choosing an inner product on $\V$.
As such, a density takes a potential as an element of $\V$ and maps it to a scalar value that represents the potential energy within that potential.
In the case of DFT (Section~\ref{sec:lattice-DFT}), the elements of $\V^*$ would then be the usual one-particle densities.
With respect to a chosen basis $(e_i)_{i=1}^m$ for $\V$, 
the components of $\densmap(P_\Psi)$ are given by the tuple of expectation values $(\braket{\Psi, B_1\Psi}, \ldots, \braket{\Psi, B_m\Psi})\in \mathbb{R}^m$ with $B_i=\iota(e_i)$ as above.

The principal aim of a functional theory is now to describe the system in terms of the densities $\rho = \mu(P_\Psi)\in \V^*$ 
instead of the complete quantum state $\Psi$. 
This is possible in the variational problem for the ground-state energy since we can pass from expectation values of the Hamiltonian to constrained-search functionals on $\V^*$, defined in Section~\ref{sec:functionals}, for the class of physical situations covered by the scope.
In choosing the observable space $\iota(\V)$, a compromise between the necessary flexibility (degree of variety of physical situations) and the complexity of the ensuing theory (which can be measured by $m=\dim\V$) must be made.
Having $m\ll\dim\H$ leads to a considerable reduction in complexity compared to, e.g., a many-particle Hilbert space.
On the other hand, as a trade-off, one loses the linear structure of quantum mechanics, 
which is replaced by 
a non-linear but lower-dimensional theory for obtaining ground-state properties of quantum systems further discussed in Section~\ref{sec:functionals}.
The expectation value of $H(v)$, which yields the total energy of the system and allows to define ground states, can be rewritten with the density map as
\begin{equation}\label{eq:H-mu}
    \langle\Psi, H(v)\Psi\rangle = \langle\Psi,H_0\Psi\rangle +  \langle \mu(P_\Psi),v \rangle.
\end{equation}
Here, the first bracket on the right-hand side denotes the inner product in $\H$ while the second is the dual pairing on $\V^*\times\V$.
Minimizing over all normalized states defines the (ground-state) energy functional $E(v)$ (see Eq.~\eqref{eq:E-def}). 

Finally, we define the density of a general ensemble state $\Gamma \in \Lsa(\mathcal{H})$ with $\trace\Gamma=1$ and $\Gamma\geq 0$
by affine extension of Eq.~\eqref{eq:density-map},
\begin{align}\label{eq:mu-DM}
    &\densmap(\Gamma) := (v\mapsto\trace(\Gamma v)) \in \V^*,\\
    &\trace(H(v)\Gamma) = \trace(H_0\Gamma) + \langle \mu(\Gamma),v \rangle, \label{eq:Hv-exp}
\end{align}
which includes pure states by setting $\Gamma=P_\Psi$.
The sets of all pure states and density matrices will be introduced together with the set of all attainable densities $\densmap(P_\Psi)$ and $\densmap(\Gamma)$, which form the basic domains of the theory in the next section.

\section{Geometry of quantum states and observable ranges}
\label{sec:geom}

In this section, we study the properties of the images of the sets of pure and
ensemble states under the density map $\mu$. 
As will become clear in Section~\ref{sec:functionals},
these sets are of central importance since they are exactly the domains of the universal functionals. Furthermore, the notion of criticality discussed in 
Section~\ref{sec:crit} facilitates the rigorous formulation of the 
strong Hohenberg-Kohn theorem and a regularity result presented in Section~\ref{sec:hk_theorems}.
Since our analysis will depend on
methods from differential geometry, we start by introducing several geometric
notions associated with quantum states.

\subsection{Quantum states}
Given a finite-dimensional Hilbert space $\H$, 
the set of all density operators is the compact and convex set
\begin{equation}\label{eq:densmat}
    \densmat(\H) :=
    \{\Gamma\in \Lsa(\mathcal{H}) : \trace\Gamma=1, \Gamma\geq 0 \}.
\end{equation}
The set of pure states (rank-one density matrices) is identified with the projective Hilbert space and is given by
\begin{equation}\label{eq:pure-states}\begin{aligned}
    \mathbb{P}(\mathcal{H})&:= \{P\in\densmat(\H) : \rank P= 1\} = \{ P_\Psi : \Psi\in\H, \|\Psi\|=1\} \subseteq \densmat(\H).
\end{aligned}\end{equation}
While $\mathbb{P}(\mathcal{H})$ is a manifold, $\densmat(\H)$ is not (unless $\dim \H \le 1$).
Intuitively, this is due to the different ranks of its elements and the existence of a boundary due to positive-semidefiniteness.
Note that both $\mathbb{P}(\mathcal{H})$ and $\densmat(\H)$ are compact as subsets of $\Lsa(\mathcal{H})$.

Since the natural $\mathrm{U}(\mathcal{H})$-action on $\mathbb{P}(\mathcal{H})$ is transitive, at any pure state $P\in \mathbb{P}(\mathcal{H})$, the tangent space $T_P\mathbb{P}(\mathcal{H})$ is spanned by vectors of the form 
\begin{equation}
\frac{\rmd}{\rmd t} e^{tS} P e^{-tS}\Big|_{t=0}=[S,P],
\end{equation}
where $S\in \rmi \Lsa(\H)$.
That is,
\begin{equation}\label{eq:tangent-space}
    T_P\mathbb{P}(\mathcal{H})= \{[S,P] : S\in \rmi \Lsa(\mathcal{H}) \} \subseteq \Lsa.
\end{equation}
This characterization of the tangent space will be useful in Section~\ref{sec:Lie}, when we discuss a symplectic-geometric formulation of functional theories. 

\subsection{Observable ranges and \texorpdfstring{$N$}{N}-representability}

For the rest of this section, let $\S =(\V,\H, \iota,H_0)$ be a scope according to Definition~\ref{def:scope} and let $\mu: \densmat(\H)\rightarrow \V^*$ denote the corresponding density map.

The basic geometric objects in our mathematical investigation of generalized functional theories will be the effective domains of the constrained-search functionals defined in Section~\ref{sec:functionals}. 
As will become clear in Section~\ref{sec:functionals}, these sets are given by the images of the sets $\puremf(\mathcal{H})$ and $\densmat(\H)$ under the density map $\densmap$. 

\begin{definition}
    We define the \emph{pure-state} and \emph{ensemble observable ranges} as the images of $\puremf(\H)$ and $\densmat(\H)$ under $\densmap$:
    \begin{equation}
        \Bpure := \densmap(\puremf(\H)) 
        \subseteq \V^*,\qquad \Bens := \densmap(\densmat(\H)) 
        \subseteq \V^*.
    \end{equation}
\end{definition}

We would like to remark that these objects bear different names 
in different fields.
In DFT (in finite dimensions), they are the set of $N$-representable densities~\cite{penz2021-Graph-DFT}, and they coincide since all observables commute (see Proposition~\ref{prop:ranges}\ref{prop:ranges:item:commute} below). In the context of RDMFT, characterizing the set $\Bpure$ is related to the quantum marginal problem~\cite{Schilling2014} and its full characterization involves the so-called generalized Pauli constraints (see also Example~\ref{example:fermion_rdmft}). 
In matrix analysis, the sets are known as the ``joint numerical ranges''~\cite{Muller2020,Plaumann2021}, the study of which can be traced back to a question on the convexity of the numerical range of a single complex matrix put forward by \citet{toeplitz1918} in 1918, which led to the celebrated Toeplitz--Hausdorff theorem~\cite{Davis1971}. As of today, finding conditions for the convexity of $\Bpure$, which is in general a difficult problem, constitutes an entire mathematical subfield. Here, we will only note a few interesting results. 
\citet{Au-Yeung1979} show that for the specific case $\dim\iota(\V)=3$ and $\dim\H\geq 3$ the set $\Bpure$ is always convex and also discuss the situation for general $\dim\iota(\V)$ and for ensemble states. In Section~\ref{sec:Bloch}, we discuss the classical example of the Bloch sphere with $\dim\iota(\V)=3$ and $\dim\H=2$, where $\Bpure$ is not convex. \citet{Gutkin2004} connect the convexity of $\Bpure$ to a property of constant multiplicity of eigenvalues and discusses the convex hull of the joint numerical range that equals $\Bens$ by Proposition~\ref{prop:ranges}\ref{prop:ranges:item:ch}.
The work of \citet{Li2011} link the topic to quantum error correction and provide an infinite-dimensional extension.
\citet{Plaumann2021} deal with properties of $\Bens$ and treat it by means of algebraic geometry, while \citet{Jarov2023} are also concerned with a characterization of the set $\Bens$ and are able to connect it in a one-to-one fashion to expectation values of thermal states.

Since the sets $\Bpure$ and $\Bens$ are the effective domains of the functionals later introduced in Section~\ref{sec:functionals}, it is important to understand their geometric properties, which will be useful for optimization problems.

\begin{proposition}\label{prop:ranges}
    The following properties hold for the observable ranges.
    \begin{enumerate}[(i)]
        \item\label{prop:ranges:item:ch} $\Bens = \conv(\Bpure)$, where $\conv(\cdot)$ denotes the convex hull. In particular, $\Bpure\subseteq\Bens$ and $\Bens$ is convex.
        \item\label{prop:ranges:item:compact} $\Bpure,\Bens \subsetneq \V^*$ are compact.
        \item\label{prop:ranges:item:commute} If the scope $\S$ is abelian, then $\Bpure=\Bens$ and the set is a convex polytope.
        \item\label{prop:ranges:item:aff} 
        If $\iota: \V\rightarrow \Lsa(\H)$ is injective and $I_{\H}\notin \iota(\V)$, then
        $\aff(\Bpure) = \aff(\Bens) = \V^*$,
        where $\mathrm{aff}(\cdot)$ denotes the affine hull. 
    \end{enumerate}
\end{proposition}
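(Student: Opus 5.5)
The plan is to derive everything from the fact that $\densmap$ is the restriction to $\densmat(\H)$ of the linear map $\Lsa(\H)\to\V^*$, $A\mapsto(v\mapsto\trace(A\,\iota(v)))$. Part (i) then follows from the spectral theorem. Every $\Gamma\in\densmat(\H)$ can be written as $\Gamma=\sum_k\lambda_k P_{\Psi_k}$ with $\lambda_k\ge0$ and $\sum_k\lambda_k=1$, so $\densmat(\H)=\conv(\puremf(\H))$. Linear maps commute with taking convex hulls, which gives $\Bens=\densmap(\conv\puremf(\H))=\conv(\densmap(\puremf(\H)))=\conv(\Bpure)$. The inclusion $\Bpure\subseteq\Bens$ and the convexity of $\Bens$ are then immediate.

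For (ii), I would use that $\puremf(\H)$ and $\densmat(\H)$ are compact, as noted above, and that $\densmap$ is continuous because it is linear on a finite-dimensional space. Hence both images are compact. For the proper inclusion $\subsetneq\V^*$: if $\V\neq\{0\}$, a nonempty compact subset of a nonzero vector space is bounded and therefore not the whole space. If $\V=\{0\}$, the claim degenerates, so I would simply note that $\V\neq\{0\}$ is tacitly assumed.

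For (iii), I would simultaneously diagonalize the commuting family $B_i=\iota(e_i)$ in an orthonormal eigenbasis $\phi_1,\dots,\phi_n$ of $\H$, with $B_i\phi_j=\beta_{ij}\phi_j$. For a normalized $\Psi=\sum_j c_j\phi_j$ one gets $\langle\Psi,B_i\Psi\rangle=\sum_j|c_j|^2\beta_{ij}$. The weights $|c_j|^2$ range over the entire probability simplex as $\Psi$ varies, so $\Bpure=\conv\{\densmap(P_{\phi_j})\}_{j}$. This is a polytope, hence convex, and so $\Bens=\conv(\Bpure)=\Bpure$ by (i).

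Part (iv) is the main obstacle, since the hypotheses have to be used in exactly the right way. I would argue by contradiction. If $\aff(\Bens)\neq\V^*$, then $\Bens$ lies in an affine hyperplane $\{\rho:\langle\rho,v\rangle=c\}$ for some $v\neq0$ and $c\in\R$. This means $\trace(P\,\iota(v))=c$ for all $P\in\puremf(\H)$, i.e.\ $\langle\Psi,(\iota(v)-cI_\H)\Psi\rangle=0$ for every unit vector $\Psi$. A self-adjoint operator with vanishing quadratic form is zero (by polarization), so $\iota(v)=cI_\H$. If $c\ne0$, then $I_\H=\iota(v/c)\in\iota(\V)$, which is excluded. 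If $c=0$, then $\iota(v)=0$ with $v\neq0$, contradicting injectivity. Hence $\aff(\Bens)=\V^*$. Since $\Bens=\conv(\Bpure)$ and affine hulls are unchanged by taking convex hulls, $\aff(\Bpure)=\aff(\Bens)=\V^*$ as well.
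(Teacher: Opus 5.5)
Your proof is correct. Parts (i) and (iii) follow the paper's argument: the spectral decomposition of $\Gamma$ pushed through the linear map $\densmap$, and simultaneous diagonalization with weights $|c_j|^2$ filling the simplex. In (ii) you do slightly more than the paper, whose proof only establishes compactness. Your boundedness argument justifies the strict inclusion $\subsetneq\V^*$, and you rightly note that it fails in the degenerate case $\V=\{0\}$.

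The genuine difference is in (iv). The paper extends $\densmap$ linearly to $\Lsa(\H)$ and writes $\aff(\Bens)=\densmap(\Gamma_0)+\densmap(I_\H^\perp)$. It then proves surjectivity of $\densmap$ on traceless operators constructively. Using a complement $\Lsa(\H)=\iota(\V)\oplus\R I_\H\oplus R$ and surjectivity of $\iota^*$, it builds a functional vanishing on $I_\H$ that pulls back to a prescribed $\rho$. It then represents that functional by a traceless $\Gamma$ via the Hilbert--Schmidt pairing.

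You argue in the dual picture instead. A proper affine hull lies in a hyperplane $\{\rho:\langle\rho,v\rangle=c\}$ with $v\neq0$, and polarization forces $\iota(v)=cI_\H$. The two hypotheses then exclude $c\neq0$ and $c=0$ respectively. The two arguments are transposes of each other: $\densmap$ restricted to $I_\H^\perp$ is onto $\V^*$ exactly when $\iota(v)\in\R I_\H$ forces $v=0$. Your version is shorter and needs no choice of complements or dualization. It also shows that the no-redundancy hypothesis is necessary as well as sufficient, since any $v\neq0$ with $\iota(v)\in\R I_\H$ confines $\Bens$ to a hyperplane.

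The paper's version buys an explicit preimage construction. It also gives the description $\aff(\Bens)=\densmap(\Gamma_0)+\densmap(I_\H^\perp)$, which stays meaningful when the hypotheses are dropped. That is the setting of the later remark about surjectivity onto $T_{\densmap(P)}\aff(\Bpure)$.
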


\begin{proof}\ref{prop:ranges:item:ch} The set $\Bens$ is convex since it is the image of a convex set under a linear map. 
    Clearly, $\Bens \supseteq \conv \Bpure$ because $\Bens \supseteq \Bpure$ and $\Bens$ is convex. For any $\rho \in \Bens$, there is an ensemble state $\Gamma = \sum_i t_i P_i$ (finite sum) with $P_i\in\puremf(\H)$, $t_i\in[0,1]$, and $\sum_i t_i=1$ such that $\densmap(\Gamma) = \rho$. Since $\densmap$ is linear, we have
    \begin{equation}
       \rho = \sum_i t_i \densmap(P_i) \in \conv\Bpure.
    \end{equation}
    This shows $\Bens \subseteq \conv\Bpure$.\\
    \ref{prop:ranges:item:compact} The sets $\puremf(\H)$ and $\densmat(\H)$ are compact, so their images under the continuous map $\densmap$ are also compact.\\
    \ref{prop:ranges:item:commute} 
    Since all $\iota(v)$, $v\in\V$, commute, we can find an orthonormal basis $(\Phi_k)_{k=1}^{d}$ of $\H\cong\C^d$ such that there exist $\omega_1, \ldots, \omega_{d} \in \V^*$ satisfying
    \begin{equation}
    \forall v\in \V: \iota(v) \Phi_k = \braket{\omega_k, v}\Phi_k.
    \end{equation}
   Any $\Psi\in\H$ can be written as $\Psi=\sum_{k=1}^{d} c_k\Phi_k$, and so for all $v\in \V$
    \begin{equation}
    \begin{aligned}
    &\braket{\densmap(P_\Psi), v}
    = \braket{\Psi,\iota(v)\Psi} = 
    \sum_{k,\ell=1}^{d} \overline c_k c_{\ell}\braket{\Phi_k,\iota(v)\Phi_\ell}
    = \sum_{k=1}^{d} |c_k|^2 \braket{\omega_k, v}.
    \end{aligned}
    \end{equation}
    Since this holds for arbitrary $v\in \V$, we have
    $\densmap(P_\Psi) = \sum_{k=1}^d |c_k|^2 \omega_k \in \conv(\omega_1, \ldots, \omega_{d})$. This shows $\Bpure \subseteq \conv(\omega_1, \ldots, \omega_{d})$. Conversely, take any $\rho \in \conv(\omega_1, \ldots, \omega_{d})$, which can be written as a convex combination $\rho = \sum_{k=1}^{d} t_k \omega_k$. 
    Let $\Psi := \sum_{k=1}^d \sqrt{t_k}\Phi_k$, which is normalized. 
    Then 
    \begin{equation}
    \densmap\left(P_\Psi\right) = \sum_{k=1}^d t_k \omega_k = \rho,
    \end{equation}
    which shows $\conv(\omega_1, \ldots, \omega_{d})\subseteq \Bpure$.
    Hence, $\Bpure$ is a convex polytope. Since $\Bens = \conv\Bpure$, the two sets must coincide.\\
    \ref{prop:ranges:item:aff} 
    For the proof, extend $\mu$ linearly to all of $\Lsa(\H)$.
    First, note that $\mathrm{aff}(\Bpure)$ and $\mathrm{aff}(\Bens)$
    agree because $\Bens = \mathrm{conv}(\Bpure)$. 
    Since the density map $\mu$ is linear, we have $\mathrm{aff}(\Bens) = \mu(\mathrm{aff}(\densmat(\H)) =  \mu(\Gamma_0 + I_\H^\perp) = \mu(\Gamma_0) + \mu(I_\H^\perp)$, where $\Gamma_0$ is any ensemble state, and $I_\H^\perp$ denotes the orthogonal complement of $\mathbb{R}\cdot I_\H\subseteq \Lsa(\H)$
    with respect to the Hilbert--Schmidt inner product $\langle A,B\rangle_{\rm HS}=\trace(A^*B)$.
    Thus, it suffices to show that $\mu(I_\H^\perp) = \V^*$. 
    Since $I_\H\notin \iota(\V)$, we can find a (not necessarily orthogonal) direct sum decomposition $\Lsa(\H) = \iota(\V) \oplus (\R\cdot I_\H) \oplus R$.
    Let $\rho\in\V^*$ be any vector. 
    Because $\iota$ is injective, the dual $\iota^*: \Lsa(\H)^*\rightarrow\V^*$ is surjective, so we can find a $\tilde\rho\in \Lsa(\H)^*$ such that $\iota^*(\tilde \rho) = \rho$ and 
    $\tilde\rho (I_\H) = 0$. Let $\Gamma\in \Lsa(\H)$ be such that $\tilde \rho = \trace(\Gamma \cdot)$. Then $\mu(\Gamma) = \rho$, showing that $\mu$ is surjective from $I_\H^\perp$ onto $\V^*$.
 \end{proof}

\begin{example}
Take $\H = \mathbb{C}^2$. Consider the following two scopes
\begin{equation}
\begin{aligned}
    &\S_1 = (\mathbb{R}^3, \H, v\mapsto v_1X+v_2Y + v_3Z, H_0)\\
    &\S_2 = (\mathbb{R}^4, \H, v\mapsto v_1X+v_2Y + v_3Z + v_4I_{\H}, H_0),
\end{aligned}
\end{equation}
where $X,Y,Z$ are the Pauli matrices and $H_0\in\Lsa(\C^2)$ is arbitrary (irrelevant for the geometry of $\Bpure$ and $\Bens$).
Then $\Bpure^{(\S_1)}$ is the Bloch sphere (unit sphere) in $\mathbb{R}^3$ and 
$\Bpure^{(\S_2)} = \{(x,y,z, 1) \in \mathbb{R}^4: (x,y,z)\in \Bpure^{(\S_1)}\}$.
It follows that both $\aff(\Bpure^{(\S_1)})$ and $\aff(\Bpure^{(\S_2)})$ 
are isomorphic to $\mathbb{R}^3$, but the latter has codimension 1 in the density space $\V^* = \mathbb{R}^4$.
This is caused by the presence of $I_{\H}$ in the
second scope.
 \end{example}

\subsection{Critical and regular values}
\label{sec:crit}

In this section, we divide the set $\Bpure$ into regular and critical values. This will play an important role in uniqueness results (Hohenberg--Kohn theorem) when mapping from densities back to potentials.

\begin{definition}\label{def:crit}
    A pure state $P=P_\Psi\in \puremf(\H)$ is called a \textit{critical point}
    if $\Psi$ is an eigenvector of some $\iota(v)$ satisfying
    $\iota(v)\not\propto I_\H$. 
    A density $\rho\in\V^*$ is a \textit{critical value} if 
    there exists a pure state $P\in \densmap^{-1}(\rho)$ such that $P$ is a critical point.
    The set of all critical values is denoted $\Bcrit\subseteq\Bpure$. The elements in the set $\Breg:=\Bpure\setminus\Bcrit$ are called \emph{regular values}.
\end{definition}

We note that, relative to $\Bpure$, the set $\Bcrit$ is closed and consequently $\Breg$ is open.
We now give an equivalent characterizations of critical points.

\begin{lemma}\label{lem:crit-commute}
A pure state $P\in \puremf(\H)$ is critical if and only if there is a $v \in \V$ with $\iota(v) \not\propto I_\H$ such that $[P,\iota(v)]=0$.
\end{lemma}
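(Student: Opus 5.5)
The argument reduces to one elementary linear-algebra fact: for a unit vector $\Psi$ and a self-adjoint $A$, the vector $\Psi$ is an eigenvector of $A$ if and only if $[P_\Psi, A]=0$. Once this is established, the lemma follows by applying it with $A=\iota(v)$, since the same condition $\iota(v)\not\propto I_\H$ appears on both sides of the equivalence. So I prove the two directions of this fact and then put them together.

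For the forward direction, assume $A\Psi=\lambda\Psi$. Self-adjointness gives $\lambda\in\R$, and taking adjoints gives $\bra{\Psi}A=\lambda\bra{\Psi}$. Then
\begin{equation*}
AP_\Psi=\lambda\ketbra{\Psi}=P_\Psi A,
\end{equation*}
so the commutator vanishes.

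For the converse, assume $[P_\Psi,A]=0$ and apply both sides to $\Psi$. Using $P_\Psi\Psi=\Psi$, we get
\begin{equation*}
A\Psi=AP_\Psi\Psi=P_\Psi A\Psi=\braket{\Psi,A\Psi}\,\Psi .
\end{equation*}
Hence $\Psi$ is an eigenvector of $A$ with eigenvalue $\braket{\Psi,A\Psi}$.

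Finally, I check that criticality does not depend on the representative $\Psi$ of $P$: any other representative differs by a phase $e^{\rmi\theta}\Psi$, and this changes neither the eigenvector property nor $P$ itself. With $A=\iota(v)$, Definition~\ref{def:crit} then says $P$ is critical exactly when there is $v$ with $\iota(v)\not\propto I_\H$ and $[P,\iota(v)]=0$.

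None of these steps is a real obstacle. The only points needing care are that the eigenvalue is real, which is needed to pass $\lambda$ through the adjoint, and that the definition is phase-independent.
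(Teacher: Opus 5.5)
Your proof is correct and follows the paper's argument; the forward direction is the same direct computation $[P_\Psi,\iota(v)]=\lambda P_\Psi-\lambda P_\Psi=0$. For the converse, the paper argues that $P$ and $\iota(v)$ share a common eigenbasis which must contain $\Psi$, whereas you apply the commutator to $\Psi$ to get $\iota(v)\Psi=\langle\Psi,\iota(v)\Psi\rangle\Psi$ directly — a slightly more direct route that also gives the eigenvalue explicitly.
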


\begin{proof}
    If $P=P_\Psi$ is critical, then by definition we have $\iota(v)\Psi=\lambda\Psi$. 
    It follows that $[P,\iota(v)]=\ketbra{\Psi}\iota(v)-\iota(v)\ketbra{\Psi} = \lambda \ketbra{\Psi} - \lambda\ketbra{\Psi} = 0$. \\
    Conversely, note that $[P, \iota(v)]=0$ means that $P$ and $\iota(v)$ share a common eigenbasis. Furthermore, since $P=P_\Psi$ is a one-dimensional projector, $\Psi$ must be one of the basis vectors and also an eigenvector of $\iota(v)$. 
\end{proof}

If the redundancy in the parameterization of the Hamiltonian mentioned after
Definition~\ref{def:scope} is avoided, we have the following further equivalent
characterizations for critical points.

\begin{proposition}\label{prop:crit-equivalent}
    Assume $\iota: \V\rightarrow \Lsa(\H)$ is injective and $I_\H\notin\iota(\V)$. For a basis $(e_i)_{i=1}^m$ of $\V$ set $B_i=\iota(e_i)$. Then the following statements are equivalent.
    \begin{enumerate}[(i)]
        \item\label{item:def} $P_\Psi\in \puremf(\H)$ is a critical point.
        \item\label{item:lin-dep} 
        The vectors $\Psi,B_1\Psi,\ldots,B_m\Psi \in \H$ are linearly dependent over $\R$.
        \item\label{item:Gram} 
        The Gram matrix $G_{ij}=\langle B_i\Psi , B_j\Psi \rangle$, with $i,j\in\{0,1,\ldots,m\}$ and $B_0:=I_\H$, has zero determinant.
        \item \label{item:derivative} The derivative $\rmd_P\densmap: T_P \puremf(\H)\rightarrow \V^*$ is not surjective.
    \end{enumerate}
\end{proposition}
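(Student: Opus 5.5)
I will prove the cycle (i)$\Leftrightarrow$(ii)$\Leftrightarrow$(iii) directly and then link (iv) to (i) through the tangent-space description \eqref{eq:tangent-space}.

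\textbf{(i)$\Leftrightarrow$(ii).} Suppose $\iota(v)\Psi=\lambda\Psi$ with $\iota(v)\not\propto I_\H$. Write $v=\sum_i v_ie_i$. Then $\sum_i v_iB_i\Psi-\lambda\Psi=0$. Here $\lambda$ is real because $\iota(v)$ is self-adjoint, and $v\neq 0$ because $\iota(0)=0\propto I_\H$. So this is a nontrivial real linear relation.

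Conversely, let $c_0\Psi+\sum_i c_iB_i\Psi=0$ with real $c$ not all zero. If all $c_i$ with $i\ge1$ vanish, then $c_0\Psi=0$ forces $c_0=0$, a contradiction. Hence $v=\sum_i c_ie_i\neq0$ and $\iota(v)\Psi=-c_0\Psi$. By injectivity $\iota(v)\neq0$. Moreover $\iota(v)\not\propto I_\H$: if $\iota(v)=aI_\H$ with $a\neq0$, then $I_\H\in\iota(\V)$, contrary to the hypothesis. Thus $P_\Psi$ is critical. This is exactly where both non-redundancy assumptions are used.

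\textbf{(ii)$\Leftrightarrow$(iii).} The relevant object is the real Gram matrix. View $\H$ as a real inner product space with $\mathrm{Re}\langle\cdot,\cdot\rangle$. Real linear dependence of $B_0\Psi,\dots,B_m\Psi$ is then equivalent to $\det(\mathrm{Re}\,G)=0$.

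The key observation is that $G_{ij}=\langle \Psi,B_iB_j\Psi\rangle$ has imaginary part $\tfrac1{2\rmi}\langle[B_i,B_j]\rangle_\Psi$. So $G$ is a complex Hermitian matrix and is not simply its real part, and the statement must be read carefully. My plan is to interpret the Gram matrix in (iii) with respect to the real inner product $\mathrm{Re}\langle\cdot,\cdot\rangle$, i.e.\ $G_{ij}=\mathrm{Re}\langle B_i\Psi,B_j\Psi\rangle$, which is the natural reading given that (ii) is "over $\R$". Then the standard Gram criterion applies: $c^TGc=\|\sum_i c_iB_i\Psi\|^2$, so $G$ is positive semidefinite, and $\det G=0$ iff $G$ has a kernel vector $c$, iff $\sum_i c_iB_i\Psi=0$.

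I expect this to be the main obstacle. With the complex Hermitian $G$, a vanishing determinant only gives complex dependence, which is strictly weaker than real dependence. I would flag this and fix the reading to the real part, or equivalently the symmetric part $\tfrac12\langle\{B_i,B_j\}\rangle_\Psi$.

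\textbf{(iv)$\Leftrightarrow$(i).} Extend $\mu$ linearly. The derivative is $\rmd_P\mu([S,P])=\big(v\mapsto\trace([S,P]\iota(v))\big)$. Its image fails to be all of $\V^*$ iff some nonzero $v\in\V$ annihilates the image, i.e.\ $\trace([S,P]\iota(v))=\trace(S[P,\iota(v)])=0$ for all $S\in\rmi\Lsa(\H)$.

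Since $[P,\iota(v)]$ is anti-self-adjoint, we may choose $S=[P,\iota(v)]$. This gives $\trace(S^2)=-\|[P,\iota(v)]\|_{\rm HS}^2$, so the condition forces $[P,\iota(v)]=0$. By injectivity and $I_\H\notin\iota(\V)$, $\iota(v)\not\propto I_\H$, and Lemma~\ref{lem:crit-commute} then gives (i). The converse runs the same chain backwards: a commuting $\iota(v)$ with $v\neq0$ annihilates the image, so the derivative is not surjective.
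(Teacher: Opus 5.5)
Your proof is correct, and for (i)$\Leftrightarrow$(ii) and (i)$\Leftrightarrow$(iv) it is the paper's argument. Your explicit choice $S=[P,\iota(v)]$, which gives $\trace(S^2)=-\|[P,\iota(v)]\|_{\mathrm{HS}}^2$, supplies the step the paper compresses into ``by the cyclicity of trace''.

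The real divergence is (ii)$\Leftrightarrow$(iii). There the paper only invokes ``linearly dependent columns iff zero determinant'' for the complex Hermitian Gram matrix, and your caution is justified. That determinant detects linear dependence over $\C$, not over $\R$, so for non-abelian scopes the literal statement fails. Take the Bloch-disc scope, $B_1=X$ and $B_2=Y$ on $\C^2$, which satisfies both non-redundancy hypotheses, with $\Psi=\ket{\uparrow}$. Then $Y\Psi=\rmi X\Psi$, so $G_{12}=\rmi$ and $\det G=0$. Yet $\Psi, X\Psi, Y\Psi$ are independent over $\R$, and $\Psi$ is an eigenvector of no nonzero $v_1X+v_2Y$. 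Moreover $\mu(P_\Psi)=0$ is the center of the disc, a regular value; the paper itself states $\Bcrit=\relboundary\Bpure$ there.

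Your reading $G_{ij}=\Re\langle B_i\Psi,B_j\Psi\rangle=\tfrac12\langle\{B_i,B_j\}\rangle_\Psi$ is the one that makes (iii) equivalent to the other items. It coincides with the paper's matrix exactly when all $\langle[B_i,B_j]\rangle_\Psi$ vanish. In particular it coincides for abelian scopes such as lattice DFT, the setting in which the paper notes this Gram criterion was previously used.
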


\begin{proof}
\ref{item:def} $\Rightarrow$
\ref{item:lin-dep}
By definition, 
$\iota(v)\Psi=\lambda\Psi$ for a $v\in\V$ with $\iota(v)\not\propto I_\H$, 
which immediately implies that the vectors $\Psi,B_1\Psi,\ldots,B_m\Psi$ in
$\H$ are linearly dependent over $\R$.\\
\ref{item:lin-dep} $\Rightarrow$ \ref{item:def} 
Suppose $\sum_{i=1}^mv_i B_i \Psi = \lambda \Psi$ with
$v_1, \dotsb, v_m, \lambda$ not all zero.
I.e., $\iota(v) \Psi = \lambda \Psi$ 
with $v = \sum_{i=1}^m v_i B_i$.
If $\iota(v)$ were proportional to $I_\H$, then $v=0$ because $\iota$ is injective and $I_{\H}\notin \iota(\V)$, a contradiction.
Hence, we conclude that $\iota(v)\not\propto I_\H$.\\
\ref{item:lin-dep} $\Leftrightarrow$ \ref{item:Gram} Linear dependence of the columns (or rows) of a square matrix is equivalent to the matrix having zero determinant.\\
\ref{item:def} $\Leftrightarrow$ \ref{item:derivative}
For any self-adjoint $A$, we have
\begin{equation}
\left\langle v, \rmd_P\densmap (\rmi[A,P]) \right\rangle
= \trace(\rmi[A,P] \iota(v)).
\end{equation}
The derivative $\rmd_P\densmap$ is not surjective if and only if there is a non-zero $v\in\V$ such that for all self-adjoint $A$ one has $\trace(\rmi[A,P]\iota(v))=0$. 
By the cyclicity of trace, this is equivalent to $[P, \iota(v)]=0$.
By the injectivity of $\iota$, $v\neq 0$ is equivalent to $\iota(v)\neq 0$. 
Finally, $\iota(v)\not\propto I_\H$ by the assumption $I_\H\notin\iota(\V)$.
\end{proof}
It is easy to see that if the `no-redundancy' assumption 
(that $I_{\H}\notin \iota(\V)$ and $\iota$ is injective) is dropped, 
the equivalence between \ref{item:def} and \ref{item:derivative} still holds if we consider surjectivity onto $T_{\mu(P)} \mathrm{aff}(\Bpure)$
for the latter.

We note that the Gram matrix from \ref{item:Gram} was previously used in the
context of uniqueness questions in lattice
DFT~\cite{Xu2022,Penz2025-cs-imag-time} and it is for this reason that we
include it in this list of equivalences. The same form of criticality was
studied by \citet{Song2025} in a very similar setting. Our notion of criticality
coincides with the usual one for smooth maps of smooth manifolds as
\ref{item:derivative} shows, so the set of critical values has measure zero in $\aff(\Bpure)$ by
Sard's theorem \cite{leeIntroductionSmoothManifolds2012}.

We write $\relboundary\mathcal{B}:= \overline{\mathcal{B}} \setminus \relint
\mathcal{B}$ for the relative boundary of a set $\mathcal{B}$. The relative
boundary $\relboundary\Bpure$ of the pure-state observable range directly
relates to critical values, as the following proposition shows.

\begin{proposition}\label{prop:boundary-minimized}
    Let $\rho_0 \in \Bpure$.
    \begin{enumerate}[(i)]
    \item \label{item:prop_boundary_crit_1} If there exists
    a $v \in\V$ with $\iota(v) \not\propto I_\H$ such that 
    for all $P_0 \in (\densmap|_{\puremf(\H)})^{-1}(\rho_0)$, the map $P \mapsto \braket{\densmap(P), v}$ is locally minimized at $P_0$, then $\rho_0 \in \relboundary\Bpure$.

    \item \label{item:prop_boundary_crit_2} If $\rho_0 \in \relboundary\Bpure$, then every $P_0\in (\densmap|_{\puremf(\H)})^{-1}(\rho_0)$ is a critical point. In particular, $\relboundary\Bpure \subseteq \Bcrit$.
    \end{enumerate}
\end{proposition}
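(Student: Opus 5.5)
For part~\ref{item:prop_boundary_crit_1} I will argue by contradiction, assuming $\rho_0\in\relint\Bpure$. Set $c:=\braket{\rho_0,v}$ and $f(P):=\braket{\densmap(P),v}=\trace(P\iota(v))$. Every $P_0$ in the fiber $F:=(\densmap|_{\puremf(\H)})^{-1}(\rho_0)$ satisfies $f(P_0)=c$.

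The first step is to pass from local minimality at each fiber point to a neighborhood of the whole fiber. By hypothesis, each $P_0\in F$ has an open neighborhood on which $f\ge c$. The fiber $F$ is closed in the compact manifold $\puremf(\H)$, hence compact, so finitely many of these neighborhoods cover it. Their union $U$ is an open set containing $F$ with $f\ge c$ on $U$.

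The second step transfers this bound to densities near $\rho_0$. I claim there is a neighborhood $W$ of $\rho_0$ such that $\braket{\rho,v}\ge c$ for all $\rho\in W\cap\Bpure$. If not, choose $\rho_n\to\rho_0$ in $\Bpure$ with $\braket{\rho_n,v}<c$, and pick $P_n\in\puremf(\H)$ with $\densmap(P_n)=\rho_n$. By compactness, a subsequence converges to some $P$. Continuity of $\densmap$ gives $P\in F$, so eventually $P_n\in U$ and $f(P_n)\ge c$, a contradiction.

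The third step produces a density in $W\cap\Bpure$ violating this bound. The linear functional $\braket{\cdot,v}$ is not constant on $\aff(\Bpure)$. Otherwise $\braket{\Psi,\iota(v)\Psi}$ would be constant on unit vectors, i.e.\ the numerical range of $\iota(v)$ would be a single point. By polarization (a self-adjoint operator with constant quadratic form on the unit sphere is a multiple of the identity), this forces $\iota(v)\propto I_\H$, which is excluded. Hence there is a direction $w$ in the linear space parallel to $\aff(\Bpure)$ with $\braket{w,v}>0$. Since $\rho_0\in\relint\Bpure$, the point $\rho_0-\varepsilon w$ lies in $W\cap\Bpure$ for small $\varepsilon>0$. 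But it has $\braket{\rho_0-\varepsilon w,v}<c$, contradicting the second step. Therefore $\rho_0\notin\relint\Bpure$. Since $\Bpure$ is closed (Proposition~\ref{prop:ranges}\ref{prop:ranges:item:compact}), this gives $\rho_0\in\relboundary\Bpure$.

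For part~\ref{item:prop_boundary_crit_2} I take the contrapositive. Suppose some $P_0\in F$ is not critical. By the remark following Proposition~\ref{prop:crit-equivalent}, which does not need the no-redundancy assumption, the differential $\rmd_{P_0}\densmap$ is surjective onto the direction space of $\aff(\Bpure)$. Since $\densmap$ maps $\puremf(\H)$ into $\aff(\Bpure)$, I regard it as a smooth map into this affine space. It is then a submersion at $P_0$, and by the local submersion theorem it is open near $P_0$. So $\densmap$ maps a neighborhood of $P_0$ onto a relatively open neighborhood of $\rho_0$ in $\aff(\Bpure)$, contained in $\Bpure$. Hence $\rho_0\in\relint\Bpure$, contradicting $\rho_0\in\relboundary\Bpure$. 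The inclusion $\relboundary\Bpure\subseteq\Bcrit$ follows because $F$ is nonempty for $\rho_0\in\Bpure$.

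The main obstacle is the compactness step in part~\ref{item:prop_boundary_crit_1}: local minimality is only assumed at each fiber point separately. It is turned into the lower bound $\braket{\rho,v}\ge c$ on a genuine neighborhood of $\rho_0$ in $\Bpure$ through the finite cover of $F$ and the subsequence argument. A secondary point requiring care is that surjectivity in part~\ref{item:prop_boundary_crit_2} is onto the tangent space of $\aff(\Bpure)$, not all of $\V^*$.
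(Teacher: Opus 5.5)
Your proposal is correct and follows essentially the paper's route. Part (i) is the same argument: lift densities $\rho_n\to\rho_0$ with $\braket{\rho_n,v}<\braket{\rho_0,v}$ to pure states, extract a subsequence converging into the fiber, and use the neighborhood of the limit point. Your finite subcover of the fiber is harmless but unnecessary, and your explicit check that $\braket{\cdot,v}$ is non-constant on $\aff(\Bpure)$ because $\iota(v)\not\propto I_\H$ fills in a step the paper uses tacitly. Part (ii) is the paper's submersion argument, using surjectivity of $\rmd_{P_0}\densmap$ onto the direction space of $\aff(\Bpure)$ at a non-critical point.
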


\begin{proof}
    \noindent
    \ref{item:prop_boundary_crit_1}
    We will prove the contrapositive. Suppose $\rho_0\in \relint\Bpure$. We will show that for all $v\in  \V$ such that $\iota(v)\not\propto I_{\H}$, there exists a pure state $P_0 \in \densmap^{-1}(\rho_0)$ such that for all open neighborhoods $U\ni P_0$, there exists a $P\in U$ such that $\braket{\densmap(P), v} < \braket{\densmap(P_0), v}$.
    Take an arbitrary $v\in \V$ satisfying $\iota(v)\not\propto I_{\H}$. Since $\rho_0 \in \relint\Bpure$, we can find a sequence $(\rho_n)_{n=1}^\infty$ such that $\rho_n\rightarrow \rho_0$ and $\braket{\rho_n, v} < \braket{\rho_0, v}$ for all $n$. For each $n$, choose a pure state $P_n$ in the preimage $(\densmap|_{\puremf(\H)})^{-1}(\rho_n)$. By compactness, there is a subsequence $(P_{n_k})_{k=1}^\infty$ such that $P_{n_k}\to P_0$ for some $P_0 \in \puremf(\H)$ as $k\to\infty$. By the continuity of $\densmap$, we have $\densmap(P_0) = \rho_0$. If $U\ni P_0$ is any open neighborhood of $P_0$, then $P_{n_k} \in U$ for some $k$. But then $\braket{\densmap(P_{n_k}), v} = \braket{\rho_{n_k}, v} < \braket{\rho_0, v} = \braket{\densmap(P_0), v}$.\\    
   \ref{item:prop_boundary_crit_2}
   Suppose $\rmd_{P_0} (\densmap|_{\puremf(\H)}): T_{P_0}\puremf(\H) \rightarrow \V^*$ is surjective onto $T_{\mu(P_0)}\mathrm{aff}(\Bpure)$ at some pure state $P_0 \in (\densmap|_{\puremf(\H)})^{-1}(\rho_0)$. Then $\rho_0$ has an open neighborhood in $\mathrm{aff}(\Bpure)$ that is contained in $\densmap(\puremf(\H))$, implying that $\rho_0 \in \relint \densmap(\puremf(\H))$.
\end{proof}

Since they are critical values, we can thus describe boundary values always as the density of a $\Psi_0$ that has $\iota(v)\Psi_0=\lambda\Psi_0$. \citet[Cor.~4.2]{Gutkin2004} give the same result for the boundary of joint numerical ranges. This classification of critical values, as coming from eigenstates of operators $\iota(v)$, allows for the following more detailed description.

\begin{definition}\label{def:crit-plane}
    For every $v \in \V$ with $\iota(v) \not\propto I_\H$ the set 
    $\Pi^{(\ell)}_v=\{\densmap(P_\Psi) : P_\Psi\in\puremf(\H), \iota(v)\Psi=\lambda^{(\ell)}\Psi \}$
    with $\lambda^{(\ell)}$ being the $\ell$-th eigenvalue is called a \emph{(level-$\ell$) critical plane} created by $v$.
\end{definition}


The study of critical planes serves a dual purpose. First, they yield all possible critical points that will take a special role in some results formulated in Section~\ref{sec:functionals}. Second, since all boundary values are critical, the critical planes also form the boundary $\relboundary\Bpure$ and thus fully describe the set $\Bpure$. The last proposition of this section gives additional properties of critical planes that may be used to obtain a better understanding of the pure observable range $\Bpure$, and thereby the domain of the universal functional introduced in Section~\ref{sec:functionals}.


\begin{proposition}
    The following properties hold for critical planes.
    \begin{enumerate}[(i)]
        \item\label{item:critical-planes} The critical value set $\Bcrit$ is the union of all critical planes. 
        \item\label{item:crit-plane-perp} Every critical plane is a subset of an affine space in $\V^*$ that is perpendicular to the potential $v$ that creates it (justifying the name `plane').
        \item\label{item:crit-plane-intersection} For two level-$0$ critical planes $\Pi_{v_1}^{(0)}$, $\Pi_{v_2}^{(0)}$, created by $v_1$ and $v_2$, and for any $v=\alpha_1 v_1+\alpha_2 v_2$ with $(\alpha_1,\alpha_2)\in\R_{\geq 0}^2$,
        it holds that $\Pi_{v_1}^{(0)} \cap \Pi_{v_2}^{(0)}\subseteq \Pi_{v}^{(0)}$.
    \end{enumerate}
\end{proposition}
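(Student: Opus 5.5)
All three items follow quite directly from Definitions~\ref{def:crit} and \ref{def:crit-plane}, so the plan is a short verification for each part.

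For \ref{item:critical-planes} I show two inclusions. Let $\rho\in\Bcrit$. Then there is a critical point $P_\Psi\in\densmap^{-1}(\rho)$, so $\iota(v)\Psi=\lambda\Psi$ for some $v$ with $\iota(v)\not\propto I_\H$. The eigenvalue $\lambda$ is one of the eigenvalues $\lambda^{(\ell)}$ of $\iota(v)$, hence $\rho=\densmap(P_\Psi)\in\Pi^{(\ell)}_v$. Conversely, any element of a critical plane $\Pi^{(\ell)}_v$ is by definition the density of an eigenvector of some $\iota(v)\not\propto I_\H$. That eigenvector is a critical point, so the density lies in $\Bcrit$.

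For \ref{item:crit-plane-perp}, take $\rho=\densmap(P_\Psi)\in\Pi^{(\ell)}_v$. Then
\[
\braket{\rho,v}=\braket{\Psi,\iota(v)\Psi}=\lambda^{(\ell)}.
\]
So $\Pi^{(\ell)}_v$ lies in the affine hyperplane $\{\rho\in\V^*:\braket{\rho,v}=\lambda^{(\ell)}\}$. This hyperplane is the level set of $v$ viewed as a linear functional on $\V^*$, which is the sense of ``perpendicular to $v$''; the identification becomes literal once an inner product on $\V$ is chosen. It is a proper hyperplane whenever $v\neq 0$, and $v\neq 0$ holds because $\iota(v)\not\propto I_\H$.

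For \ref{item:crit-plane-intersection}, I read level $0$ as the lowest eigenvalue, i.e.\ the ground-state eigenspace. Let $\rho\in\Pi^{(0)}_{v_1}\cap\Pi^{(0)}_{v_2}$. The key point is that $\rho$ might arise as $\densmap(P_{\Psi_1})=\densmap(P_{\Psi_2})$ with different states $\Psi_1\neq\Psi_2$, so no single state is known to be a common eigenvector. I avoid this with the variational characterization. By \ref{item:crit-plane-perp}, $\braket{\rho,v_i}=\lambda^{(0)}_{v_i}=\min_{P}\braket{\densmap(P),v_i}$. Take either witness, say $\Psi_1$. Then
\[
\braket{\Psi_1,\iota(v)\Psi_1}=\alpha_1\lambda^{(0)}_{v_1}+\alpha_2\lambda^{(0)}_{v_2}\le \alpha_1\braket{\Phi,\iota(v_1)\Phi}+\alpha_2\braket{\Phi,\iota(v_2)\Phi}
\]
for every unit vector $\Phi$, and this uses $\alpha_i\ge 0$. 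Hence $\Psi_1$ minimizes the Rayleigh quotient of $\iota(v)$, so it is a ground-state eigenvector of $\iota(v)$, and therefore $\rho\in\Pi^{(0)}_v$.

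The main obstacle is a degenerate case of \ref{item:crit-plane-intersection}. If $\iota(v)\propto I_\H$ (for instance $\alpha_1=\alpha_2=0$, or cancellation between $\iota(v_1)$ and $\iota(v_2)$), then $\Pi^{(0)}_v$ is not defined by Definition~\ref{def:crit-plane}. I will handle this either by implicitly requiring $\iota(v)\not\propto I_\H$, or by interpreting $\Pi^{(0)}_v=\Bpure$ in that case, since then every pure state is a ground state of $\iota(v)$ and the inclusion is trivial. I will state this convention explicitly in the proof.
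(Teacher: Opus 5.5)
Your proof is correct. Parts (i) and (ii) match the paper's argument; for (ii) the paper subtracts two level values to get $\braket{\rho_1-\rho_2,v}=0$ instead of naming the level set, which is the same thing.

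For (iii) you take a slightly different and more self-contained route. The paper passes to the auxiliary scope $\S_0=(\V,\H,\iota,0)$ and invokes the weak Hohenberg--Kohn theorem, which it proves later, to obtain a single $\Psi$ that is a common ground state of $\iota(v_1)$ and $\iota(v_2)$. It then adds the two eigenvalue equations and asserts that $\alpha_1\lambda_1+\alpha_2\lambda_2$ is the least eigenvalue of $\iota(v)$. You instead observe that the density alone fixes
$\braket{\Psi_1,\iota(v)\Psi_1}=\braket{\rho,v}=\alpha_1\lambda_1+\alpha_2\lambda_2$,
and that this value is a lower bound for the Rayleigh quotient of $\iota(v)$ when $\alpha_1,\alpha_2\geq 0$. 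This inlines the weak Hohenberg--Kohn computation. In the same step it justifies the least-eigenvalue claim that the paper leaves implicit, namely $\lambda_{\min}(\alpha_1 A+\alpha_2 B)\geq\alpha_1\lambda_{\min}(A)+\alpha_2\lambda_{\min}(B)$. It also shows directly, with no forward reference, that every pure preimage of $\rho$ is a ground state of $\iota(v)$.

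The paper's route buys the explicit conceptual link to Hohenberg--Kohn: a common ground state of the two operators. Your route buys a shorter, fully elementary argument.

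Your remark on the case $\iota(v)\propto I_\H$ (for example $\alpha_1=\alpha_2=0$) is a good catch. There $\Pi^{(0)}_v$ is not defined by the definition of critical planes, and the paper's statement and proof pass over this edge case silently. Either of your two conventions repairs it.
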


\begin{proof}
    \ref{item:critical-planes} 
    This follows directly from the definitions.
    \\
    \ref{item:crit-plane-perp} By Definition~\ref{def:crit-plane}, two points $\rho_1,\rho_2$ on the same critical plane must come from (normalized) eigenvectors of $v$ with the same eigenvalue, so $\iota(v)\Psi_1=\lambda\Psi_1$ and $\iota(v)\Psi_2=\lambda\Psi_2$. We have
    \begin{align}
        &\braket{\rho_1, v} = 
        \braket{\densmap(P_{\Psi_1}), v} = 
        \braket{\Psi_1, \iota(v)\Psi_1} = \lambda\quad\text{and}\quad\braket{\rho_2, v} = \braket{\densmap(P_{\Psi_2}), v} = \braket{\Psi_2, \iota(v)\Psi_2} = \lambda.
    \end{align}
    Taking the difference yields $\langle\rho_1-\rho_2,v\rangle=0$. So all points of the critical plane lie on an affine space perpendicular to $v$.\\
    \ref{item:crit-plane-intersection} 
    For the proof, consider the scope $\mathcal{S}_0:=(\V, \mathcal{H}, \iota, 0)$. That is, 
    we take the fixed part $H_0$ of the scope to be zero.
    Then, for any $v\in \V$, 
    the level-$0$ critical plane $\Pi_{v}^{(0)}$ is nothing but the set of ground state
    densities of the operator $\iota(v)$.
    Suppose $\rho \in \Pi^{(0)}_{v_1}\cap \Pi^{(0)}_{v_2}$. 
    Applying the weak Hohenberg-Kohn result (Theorem~\ref{th:weakHK} in Section~\ref{sec:functionals}; the proof does not depend on the present proposition) to $v_1, v_2$, and $\rho$,
    we can find a pure state $P_{\Psi}\in \puremf(\H)$, which is a ground state of both $\iota(v_1)$ and $\iota(v_2)$. That is,
    \begin{equation}
        \begin{aligned}
            \iota(v_1)\Psi=\lambda_1\Psi \quad\text{and}\quad
            \iota(v_2)\Psi=\lambda_2\Psi,
        \end{aligned}
    \end{equation}
    where $\lambda_1$ and $\lambda_2$ are the least eigenvalues of $\iota(v_1)$ and $\iota(v_2)$ respectively.
    It follows that
    \begin{equation}
        \iota(\alpha_1v_1+\alpha_2v_2) \Psi = (\alpha_1\lambda_1+\alpha_2\lambda_2)  \Psi.
    \end{equation}
    $\alpha_1\lambda_1 + \alpha_2\lambda_2$ is the least eigenvalue of $\iota(\alpha_1v_1+\alpha_2v_2)$
    since $\alpha_1,\alpha_2\ge 0$. This shows that $\rho = \mu(P_\Psi) \in \Pi^{(0)}_{\alpha_1v_1+\alpha_2v_2}$.
\end{proof}

\section{Examples for geometries arising from different scopes}
\label{sec:ex-geom}

In this section, we give several examples of functional-theory scopes and
discuss the corresponding problems of determining the observable ranges $\Bpure,
\Bens$. As these two sets do not depend on the choice of $H_0$ in the scope, we
will often write $\S=(\V,\H,\iota,\ast)$ to indicate that the choice of $H_0$ is arbitrary.

\subsection{Functional theory for fermions on a lattice}
\label{sec:lattice-geom}

Consider the Hilbert space $\H_1=\C^M$ for a single spinless particle on a lattice with $M$ sites. For $N<M$, the corresponding fermionic $N$-particle the Hilbert space is given by the $N$-fold exterior product, $\H_N=\wedge^N\H_1$. We assume $1\le N < M$, since otherwise the resulting physical system would be trivial. 
Let $a_i^*$ and $a_i$ denote the fermionic creation and annihilation operators at site $i$ (see Ref.~\cite{penz2021-Graph-DFT} for a detailed investigation of this setting). Note that $\sum_{i=1}^M a_i^*a_i=N\cdot I_{\H_N}$, so these operators are \emph{not} linear independent, resulting in the redundancy mentioned in the remark after Definition~\ref{def:scope}. 
To avoid this, we limit ourselves to $i=1,\ldots,M-1$, yielding the scope
\begin{equation}
    \S=(\R^{M-1},\H_N,\iota,\ast),
\end{equation}
with $\iota(e_i)=a_i^*a_i$.
Since the $a_i^*a_i$ all commute, $\S$ is an abelian scope. 
The density map applied to a pure state $P_\Psi$ evaluates to
\begin{equation}
    \mu(P_\Psi)=(\langle a^*_ia_i\rangle_\Psi)_{i=1}^m\in\R^m.
\end{equation}
Since the $a_i^*a_i$ commute, we can find a joint orthonormal eigenbasis $(\Phi_k)_{k=1}^d$ for $\H_N$. Here, the basis vectors are the states where $N$ out of $M$ sites are fully occupied while the others remain empty, which leads to $d={M \choose N}$ basis vectors. 
The corresponding $\densmap(P_{\Phi_k})$ are thus vectors in $\R^m$ with $N$ or $N-1$ ones and the rest filled with zero, where the remaining expectation value $\braket{\Phi_k, a^*_Ma_M\Phi_k}$ can always be determined from the normalization condition. Geometrically, this makes $\Bpure = \Bens$ a convex polytope with vertices $\densmap(P_{\Phi_k})$, as demonstrated already in the proof of Proposition~\ref{prop:ranges}\ref{prop:ranges:item:commute}, more specifically an $(M,N)$-hypersimplex (intersection of the hypercube $[0,1]^M$ with the $(M-1)$-dimensional hyperplane $\rho_1+\ldots+\rho_M=N$, see also Ref.~\cite[Sec.~II.C]{penz2021-Graph-DFT}). 
In this example, the faces of the convex polytope are critical planes (Definition~\ref{def:crit-plane}).

\subsection{Bloch sphere and disc}
\label{sec:Bloch}

This simple example already shows certain features of a non-abelian scope. We take $\H=\C^2$ and the Pauli matrices $\{X,Y,Z\}$ as basic observables.
The scope is $\S=(\R^3,\H,\iota,\ast)$, where $\iota(e_1)=X,\iota(e_2)=Y,\iota(e_3)=Z$, making $\iota$ injective.
Then for any pure state $P=P_\Psi\in\puremf(\H)$ we have
\begin{equation}\label{eq:Bloch-sphere}
    |\densmap(P)|^2 = \braket{X}_\Psi^2+\braket{Y}_\Psi^2+\braket{Z}_\Psi^2 = 1,
\end{equation}
where we use the shorthand notation $\braket{X}_\Psi=\langle \Psi,X\Psi \rangle$ etc.
The pure-state observable range is thus the Bloch sphere,
\begin{equation}
    \Bpure = \{ \rho\in\R^3 : |\rho|^2=1 \},
\end{equation}
and its convex hull $\Bens$ is the unit ball. This is also the simplest example for a non-convex $\Bpure$.
Note that since $\Bpure$ has empty relative interior, we have $\Bpure=\Bcrit$ in this case.

By removing one of the observables and restricting oneself to, say, $\{X,Y\}$, and thus reducing the scope, the constraint acts less restrictive and
\begin{equation}
    |\densmap(P)|^2 = \braket{X}_\Psi^2+\braket{Y}_\Psi^2 \leq 1.
\end{equation}
We thus get the observable ranges (``Bloch disc'')
\begin{equation}
    \Bpure = \Bens = \{ \rho\in\R^2 : |\rho|^2\leq 1 \}.
\end{equation}
Now $\Bpure$ is convex too and we get $\Bens=\Bpure$, while the critical values are the boundary of the disc, $\Bcrit=\relboundary\Bpure$.

In both cases, the Bloch sphere and disc, we have different eigenstates of $\iota(v)$ for every choice of $v\in\V\setminus\{0\}$ but always the eigenvalues $\pm 1$. This means that we have an infinite number of critical `planes', each one containing just a single point, exactly the points on the Bloch sphere (or on the boundary of the disc).

\subsection{Double qubit}
\label{sec:geometry_double_qubit}

Next, we take a double qubit system with $\H=\C^2\otimes\C^2$ and observables $\{X_1,Y_1,Z_1,Z_2\}$, where $X_1=X\otimes I_{\C^2}$ acts only on the first qubit, etc.
The scope is then $\S=(\R^4,\H,\iota,\ast)$, where $\iota(e_1)=X_1,\iota(e_2)=Y_1,\iota(e_3)=Z_1,\iota(e_4)=Z_2$.
We get the constraints 
\begin{align}
    \label{eq:dbl-qubit-norm-square}
    \braket{X_1}_\Psi^2+\braket{Y_1}_\Psi^2+\braket{Z_1}_\Psi^2 &\leq 1\quad\text{and}\quad
    \braket{Z_2}_\Psi^2 \leq 1,
\end{align}
where contrary to Eq.~\eqref{eq:Bloch-sphere} the first expression also allows values smaller than 1 due to additional freedom in the second qubit. This can be seen as follows. Take $\{\ket{\up}, \ket{\dn}\}$ as the eigenbasis of $Z$ for the single-spin Hilbert space, then any pure state in $\H$ can be written as a superposition $\Psi=c_+\varphi_+\otimes\ket{\up}+c_-\varphi_-\otimes\ket{\dn}$ with normalized $\varphi_+,\varphi_-\in\C^2$ and $|c_+|^2+|c_-|^2=1$.
A good example is the Bell state $\Psi=(\ket{\up\up}+\ket{\dn\dn})/\sqrt{2}$ that has $\braket{X_1}_\Psi=\braket{Y_1}_\Psi=\braket{Z_1}_\Psi=0$ and also $\braket{Z_2}_\Psi=0$.
For an arbitrary $\Psi\in\H$ like before we then get for the  expectation value of any $A\in\{X_1,Y_1,Z_1,Z_2\}$ that
\begin{equation}
    \braket{A}_\Psi = |c_+|^2\braket{A}_{\varphi_+}+|c_-|^2\braket{A}_{\varphi_-},
\end{equation}
which makes it a convex combination of values for states $\Psi_+=\varphi_+\otimes\ket{\up}$ and $\Psi_-=\varphi_-\otimes\ket{\dn}$ with fixed $\braket{Z_2}_{\Psi_+}=+1$ and $\braket{Z_2}_{\Psi_-}=-1$. But those extremal states, with the second qubit frozen at $\ket{\up}$ ot $\ket{\dn}$, must lie on the Bloch sphere like before,
\begin{align}
    \braket{X_1}_{\Psi_+}^2+\braket{Y_1}_{\Psi_+}^2+\braket{Z_1}_{\Psi_+}^2 = 1 \;\text{at}\; \braket{Z_2}_{\Psi_+}=+1,\\
    \braket{X_1}_{\Psi_-}^2+\braket{Y_1}_{\Psi_-}^2+\braket{Z_1}_{\Psi_-}^2 = 1 \;\text{at}\; \braket{Z_2}_{\Psi_-}=-1.
\end{align}
So the following situation unfolds for a density from the pure-state observable range, $\rho\in\Bpure$. We have two unit spheres (Bloch spheres) in $(\rho_1,\rho_2,\rho_3)$ at coordinates $\rho_4=\braket{Z_2}_\Psi=\pm 1$. To this we add all convex combinations between two points from each of the spheres. We illustrate this situation by cutting $\Bpure\subseteq\R^4$ at $\rho_3=\braket{Z_1}_\Psi=0$ in Figure~\ref{fig:two-qubit-ex}.
\begin{figure}
    \centering
    \begin{tikzpicture}[line cap=round, line join=round, 
        x={(0.9cm,-0.2cm)}, y={(0.7cm,0.3cm)}, z={(0,1cm)},
        draw=MidnightBlue]
    \xdef\N{40} 
    \xdef\D{1.5}
    \begin{scope}[canvas is xy plane at z=1]
        \draw[line width=1.2pt] (0,0) ellipse (\D cm and \D cm);
        \foreach \i in {1,...,\N} {
            \pgfmathsetmacro{\ang}{((\i-1)*360/\N)};
            \coordinate (top\i) at (\ang:\D cm and \D cm);
        }
    \end{scope}
    \begin{scope}[canvas is xy plane at z=-1]
        \draw[line width=1.2pt] (0,0) ellipse (\D cm and \D cm);
        \foreach \i in {1,...,\N} {
            \pgfmathsetmacro{\ang}{((\i-1)*360/\N)};
            \coordinate (bottom\i) at (\ang:\D cm and \D cm);
        }
    \end{scope}
    \foreach \i in {1,...,\N} {
        \draw[color=MidnightBlue!70,line width=.5pt] (top\i) -- (bottom\i);
        \pgfmathsetmacro{\j}{\i+\N/2}; 
        \pgfmathsetmacro{\j}{ifthenelse(\j>\N, \j-\N, \j)}
        \draw[color=MidnightBlue!70,line width=.5pt] (top\i) -- (bottom\j);
    }
    \draw[->, black, thin] (0,0,0) -- (3,0,0) node[right] {$\rho_1=\braket{X_1}$};
    \draw[->, black, thin] (0,0,0) -- (0,3,0) node[right] {$\rho_2=\braket{Y_1}$};
    \draw[->, black, thin] (0,0,0) -- (0,0,2) node[above] {$\rho_4=\braket{Z_2}$};
    
    \end{tikzpicture}
    \caption{
        The set $\Bpure$ from the double qubit example consists of two Bloch spheres in $(\rho_1,\rho_2,\rho_3)$ at $\rho_4=\pm 1$ and all convex combinations between two points from each sphere. Leaving out the coordinate $\rho_3$ in this illustration makes the spheres appear as circles on the top and bottom. Setting $\rho_3=0$, the resulting shape is a cylinder from which two cones are removed at the top and bottom. Its convex hull, the set $\Bens$, is then the full cylinder.
    }
    \label{fig:two-qubit-ex}
\end{figure}
This examples shows the possibility of critical values inside $\relint\Bens$, exactly the cone-shaped set of all convex combinations from opposite points on the two Bloch spheres.

\subsection{Momentum functional theory for bosons on a periodic lattice}
\label{sec:momentum-ft}

Here, we illustrate how momentum-space functional theory for bosons \cite{LiebertBoseEinstein2021,WangBosonsRDMFT2026} fits into our general framework.
Consider a system of $N$ spinless bosons on a periodic lattice with $L^d$ sites, where $d$ is the spatial dimension. A common situation is that one can physically tune the hopping rate, changing the kinetic part of the Hamiltonian~\cite{Guenter2013,Schempp2015}. 
Let $\H$ be the $N$-particle Hilbert space, and consider the momentum space number operators
\begin{equation}
n_{\bm{k}}\in \Lsa(\H) \quad\text{with}\;\bm{k}\in (\mathbb{Z}/L\mathbb{Z})^d.
\end{equation}
Let $\V = \R^{(\mathbb{Z}/L\mathbb{Z})^d}$, and consider the parameterization map
\begin{equation}
\begin{aligned}
 \iota: &\;\V \rightarrow \Lsa(\H) \\
& v\mapsto \sum_{\bm{k}\in(\mathbb{Z}/L\mathbb{Z})^d} v_{\bm{k}} n_{\bm{k}}.
\end{aligned}
\end{equation}
We will consider the scope $\mathcal{S} = (\V, \H, \iota, H_0)$, where $H_0$ is a fixed boson-boson interaction.
The density map for $\mathcal{S}$ is then given by
\begin{equation}
\begin{aligned}
\mu : \densmat(\H) &\rightarrow \mathbb{R}^{(\mathbb{Z}/L\mathbb{Z})^d}\\
\Gamma &\mapsto (\trace(n_{\bm{k}} \Gamma))_{\bm{k}\in (\mathbb{Z}/L(\mathbb{Z})^d}.
\end{aligned}
\end{equation}
In the presence of translation invariance, we can restrict the Hilbert space to the subspace $\mathcal{H}_{\bm{p}}$ with fixed total momentum $\bm{p}\in (\mathbb{Z}/L\mathbb{Z})^d$. In other words, the functional theory for a fixed-momentum sector is given by the scope $(\V, \mathcal{H}_{\bm{p}}, \iota, H_0)$, where $H_0$ is a fixed translationally invariant interaction operator, and the operators $\iota(v)$ and $H_0$ are understood as acting on $\mathcal{H}_{\bm{p}}$.
Since the $n_{\bm{k}}$ commute with each other, this gives an abelian scope and we have from Proposition~\ref{prop:ranges}\ref{prop:ranges:item:commute} that
\begin{equation}
\begin{aligned}
\Bpure = \Bens =\conv\left\{
\mu(P_\Psi) : 
\Psi\in\mathcal{H}_{\bm{p}} \text{ is simultaneous eigenvector of all } n_{\bm{k}} \right\}\subseteq  \mathbb{R}^{(\mathbb{Z}/L\mathbb{Z})^d}.
\end{aligned}
\end{equation}
Note that the eigenvalues of $n_{\bm{k}}$ are always integer. Clearly, a density $\rho=\mu(P_\Psi)$ is the simultaneous eigenvalue of all $n_{\bm{k}}$ if and only if 
\begin{equation}
\begin{aligned}
& \sum_{\bm{k}\in (\mathbb{Z}/L\mathbb{Z})^d}  \rho_{\bm{k}}\bm{k} = \bm{p} \;(\text{in }(\mathbb{Z}/L\mathbb{Z})^d),\quad \sum_{\bm{k}\in (\mathbb{Z}/L\mathbb{Z})^d} \rho_{\bm{k}} = N,\quad\text{and}\quad \rho_{\bm{k}} \ge 0.
\end{aligned}
\end{equation}
For example, if $d=2, L=2, N=2,$ and $\bm{p}=(1,0)$, there are two $\rho$, given by
\begin{equation}
\rho^{(1)} = (1,0,1,0), \quad \rho^{(2)} = (0,1,0,1),
\end{equation}
where we flatten a vector $\rho \in \mathbb{R}^{(\mathbb{Z}/2\mathbb{Z})^2} \cong \mathbb{R}^4$ as $\rho = (\rho_{(0,0)}, \rho_{(0,1)}, \rho_{(1,0)}, \rho_{(1,1)})$. Hence, we have $\Bpure = \Bens = \{(t, 1-t, t, 1-t) : t \in [0,1]\}$, a line segment in $\mathbb{R}^4$.
In Fig.~\ref{fig:bosonic_d1L3N5}, we show the observable ranges $\Bpure=\Bens$ with $N=4$ bosons on a one-dimensional lattice of $L=3$ sites. For $p=0$, for example, the vectors of simultaneous eigenvalues are $(4,0,0), (2,1,1), (1,3,0), (1,0,3),$ and $(0,2,2)$. The sets $\Bpure = \Bens$ are then the convex hull of these five vectors in $\mathbb{R}^3$.
For each $p$, since $\Bpure=\Bens$ is two-dimensional, the critical values 
are exactly those that lie on the convex hull of at most two $n_{\bm k}$'s.

\begin{figure}
\centering
\includegraphics[width=.5\columnwidth]{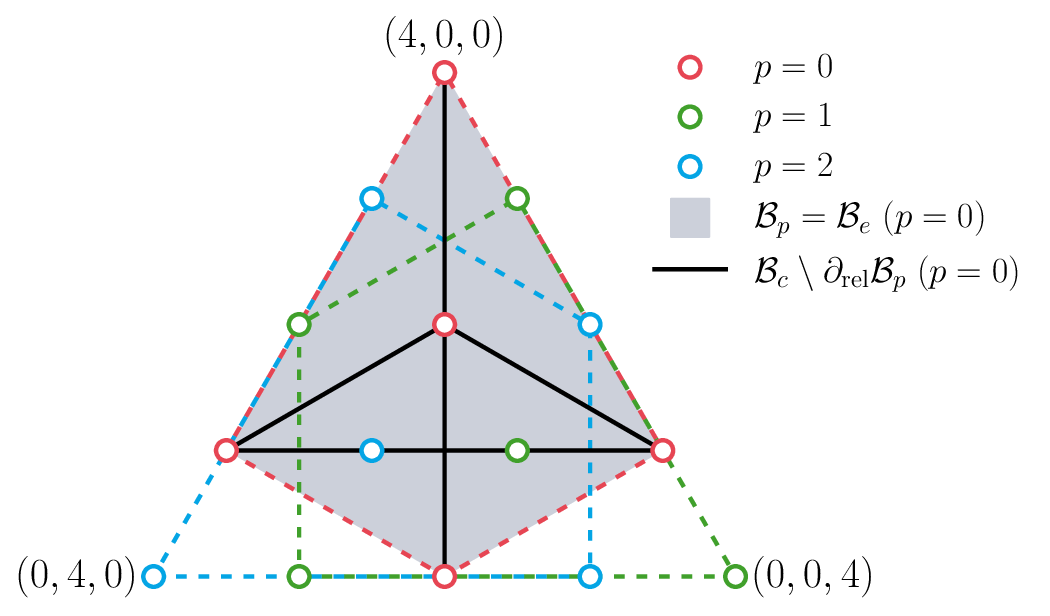}
\caption{The pure-state and ensemble observable ranges $\Bpure=\Bens$ of the one-dimensional bosonic momentum-functional theory for lattice size $L=3$, particle number $N=4$, and total momentum $p=0,1,2$. For $p=0$, $\Bpure=\Bens$ is colored in gray, and the set of critical values 
$\mathcal{B}_c$ excluding the relative boundary is indicated in black.
}
\label{fig:bosonic_d1L3N5}
\end{figure}

\section{Constrained-search functionals and representability}
\label{sec:functionals}

The central objects of interest of a functional theory are the so-called universal functionals. 
These are defined as the expectation value of $H_0$
minimized over either the set of pure states or the set of ensemble states constrained on a given density.
For a given density $\rho\in\V^*$, we write $\Psi\mapsto\rho$ for $\densmap(P_\Psi)=\rho$ and $\Gamma\mapsto\rho$ for $\densmap(\Gamma)=\rho$.

\begin{definition}\label{def:const-search-func}
    For a given scope $\S=(\V,\H,\iota,H_0)$ we define the pure-state and ensemble constrained-search functionals:
    \begin{alignat}{2}
        \label{eq:def-Fpure}
        &\Fpure^{(\S)}(\rho) := &\left\{ \begin{array}{ll}
        \inf\limits_{\Psi\mapsto\rho} \langle\Psi,H_0\Psi\rangle & \text{if}\; \rho\in\Bpure\\
        +\infty & \text{else,}
        \end{array} \right.\\
        \label{eq:def-Fens}
        &\Fens^{(\S)}(\rho) := &\left\{ \begin{array}{ll}
        \inf\limits_{\Gamma\mapsto\rho} \trace(H_0\Gamma) \,\,& \text{if}\; \rho\in\Bens\\
        +\infty & \text{else}.
        \end{array} \right.
    \end{alignat}
\end{definition}

Whenever the scope $\S$ is clear from the context, we will omit the superscript `$(\S)$' and simply write $\Fpure$ and $\Fens$.
These definitions can be used directly to find an expression for the ground-state energy, where it does not matter if one varies over pure or ensemble states,
\begin{equation}\label{eq:E-def}
\begin{aligned}
    E(v) &= \inf_{P\in\puremf(\H)}\trace(H(v)P) \hspace{-.5em}&= \inf_{\rho\in\Bpure}\{ \Fpure(\rho) + \langle \rho,v\rangle \} \\
    &= \inf_{\Gamma\in\densmat(\H)}\trace(H(v)\Gamma) &= \inf_{\rho\in\Bens}\{ \Fens(\rho) + \langle \rho,v\rangle \}.
\end{aligned}
\end{equation}
Since $\puremf(\H) \subseteq \densmat(\H)$ are compact, these infima are always attained.
That is, the ground-state energy functional $E:\V \rightarrow\mathbb{R}$ is (up to a sign choice) the Legendre--Fenchel transform (convex conjugate) of $\Fpure$ and $\Fens$.
For a fixed external potential $v$, 
by the Rayleigh--Ritz variational principle, any minimizer $P=P_\Psi\in\mathbb{P}(\H)$ or $\Gamma\in \densmat(\H)$ in the expressions above is a ground state 
of the Hamiltonian $H_0+\iota(v)$.
Since the functionals have been defined with value $+\infty$ outside of their effective domains $\Bpure$ and $\Bens$, the domain for the variation can in general be extended to all of $\V^*$ and we thus have
\begin{equation}
    E(v) = \inf_{\rho\in\V^*}\{ \Fpure(\rho) + \langle \rho,v\rangle \} = \inf_{\rho\in\V^*}\{ \Fens(\rho) + \langle \rho,v\rangle \}.
\end{equation}

There are several possibilities to numerically evaluate the functionals $\Fpure$
and $\Fens$. The obvious way is to solve the respective variational problem from
the definition directly, but there are other means that sometimes prove to be
more practical. 
For the ensemble constrained-search functional
$\Fens$, 
Proposition~\ref{prop:cs-func}\ref{prop:cs-func:item:LF} below defines a
dual problem that when solved yields not only the value of the functional but
also a representing $v\in\V$ for the given density $\rho$. In DFT, this
important method is called ``Lieb
optimization''~\cite{Lieb-Book-Trygve,Garrigue2022}, because it builds on the
mathematical treatment of DFT in terms of convex functionals pioneered by
\citet{Lieb1983}. 
This method is part of a larger collection of 
``inverse Kohn--Sham''
methods~\cite{vanleeuwen1994exchange,Zhao1994,Shi2021,Penz2023-MY-ZMP}, where it
is usually easier to find $\Fens$, due to its convexity, rather than the pure-state functional. 
There is, however, also the possibility to use imaginary-time evolution of the
wave function to compute $\Fpure$~\cite{Penz2025-cs-imag-time}. 
A fully algebraic method for $\Fpure$ that solves the underlying polynomial
equations from the constraints together with the eigenvalue equation from the
Hamiltonian, but thus also needs to include excited states, was given by
\citet{Song2023}.
We note that these exact methods for computing $\Fpure$ or $\Fens$ are largely
limited to small systems and are mainly useful for studying the theoretical
properties of universal functionals. For practical applications, it is
sometimes possible to employ wave function 
ansatzes to reduce the complexity of the constrained search. 
For example, in RDMFT, the use of the 
APSG ansatz results in the PNOF5 functional \cite{pirisNaturalOrbitalFunctional2011,pernalEquivalencePirisNatural2013}.

\subsection{Properties of constrained-search functionals}

\begin{proposition}\label{prop:cs-func}
    The following properties hold for the constrained-search functionals.
    \begin{enumerate}[(i)]
        \item\label{prop:cs-func:item:inf-attained} 
        In the definitions of both functionals, i.e., Eq.~\eqref{eq:def-Fpure} and Eq.~\eqref{eq:def-Fens}, the infimum is attained.
        \item\label{prop:cs-func:item:ens-convex} $\Fens$ is convex.
        \item\label{prop:cs-func:item:ens-lsc} $\Fens$ is lower semicontinuous on $\V^*$ and continuous on $\relint\Bens$.
        \item\label{prop:cs-func:item:pure-lsc} $\Fpure$ is lower semicontinuous on 
        $\V^*$ and continuous on $\Breg$.
        \item\label{prop:cs-func:item:conv-envelope} $\Fens$ is the lower convex envelope of $\Fpure$, in particular $\Fens\leq\Fpure$.
        \item\label{prop:cs-func:item:LF} $\Fens(\rho)=\sup_{v\in\V}\{E(v)-\langle \rho,v\rangle\}$ for all $\rho\in\Bens$.
        \item\label{prop:cs-func:item:flat} If $H_0=\iota(h_0)$ with $h_0\in\V$, then $\Fpure(\rho)=\langle\rho,h_0\rangle$ for all $\rho\in\Bpure$ and $\Fens(\rho)=\langle\rho,h_0\rangle$ for all $\rho\in\Bens$
        . In particular, the functionals are affine on the respective domains.
        \item\label{prop:cs-func:item:piecewise-flat} If the functional theory is abelian and further $[H_0,\iota(v)]=0$ for every $v\in\V$, then $\Fpure=\Fens$ is a continuous, convex, and piecewise affine function on $\Bpure=\Bens$.
    \end{enumerate}
\end{proposition}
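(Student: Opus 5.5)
I will go through the items in order, relying throughout on compactness of $\puremf(\H)$ and $\densmat(\H)$, continuity and linearity of $\densmap$, and linearity of $\Gamma\mapsto\trace(H_0\Gamma)$.

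For \ref{prop:cs-func:item:inf-attained}, the constraint sets $\densmap^{-1}(\rho)\cap\puremf(\H)$ and $\densmap^{-1}(\rho)\cap\densmat(\H)$ are closed subsets of compact sets. They are nonempty exactly when $\rho$ lies in $\Bpure$ or $\Bens$, respectively. A continuous function attains its minimum on such a set.

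For \ref{prop:cs-func:item:ens-convex}, take minimizers $\Gamma_1\mapsto\rho_1$ and $\Gamma_2\mapsto\rho_2$. Then $t\Gamma_1+(1-t)\Gamma_2\mapsto t\rho_1+(1-t)\rho_2$, and linearity of the trace gives the convexity inequality. The case where $\rho_1$ or $\rho_2$ lies outside $\Bens$ is trivial.

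For the lower semicontinuity in \ref{prop:cs-func:item:ens-lsc} and \ref{prop:cs-func:item:pure-lsc}, take $\rho_n\to\rho$ with $\liminf F(\rho_n)<\infty$ and pass to a subsequence realizing the liminf. Choose minimizers $\Gamma_n$ (or $P_n$) and extract a convergent subsequence by compactness. The limit maps to $\rho$ by continuity of $\densmap$, which also shows $\rho$ lies in the closed domain. Hence $F(\rho)\le\trace(H_0\Gamma)=\lim F(\rho_n)$.

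Continuity of $\Fens$ on $\relint\Bens$ follows because a convex function that is finite on a finite-dimensional convex set is continuous on its relative interior. Continuity of $\Fpure$ on $\Breg$ is the main obstacle, since we need upper semicontinuity. Here I would use Proposition~\ref{prop:crit-equivalent}\ref{item:derivative}, in the version surjective onto $T\aff(\Bpure)$. At a regular value $\rho_0$ with minimizer $P_0$, the point $P_0$ is not critical, so $\densmap|_{\puremf(\H)}$ is a submersion onto $\aff(\Bpure)$ near $P_0$. The local submersion theorem then gives a continuous local section $\sigma$ defined on a neighbourhood of $\rho_0$ in $\aff(\Bpure)$ with $\sigma(\rho_0)=P_0$. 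This neighbourhood lies in $\Bpure$, so $\Fpure(\rho)\le\trace(H_0\sigma(\rho))\to\Fpure(\rho_0)$, which is upper semicontinuity. Points of $\Breg$ near $\rho_0$ lie in that neighbourhood, and combining with lower semicontinuity yields continuity.

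For \ref{prop:cs-func:item:conv-envelope}, we have $\Fens\le\Fpure$ because $\puremf\subseteq\densmat$, and $\Fens$ is convex, so $\Fens$ lies below the convex envelope. Conversely, decompose a minimizer spectrally as $\Gamma=\sum t_iP_i$ and set $\rho_i=\densmap(P_i)$. Then
\[\Fens(\rho)=\sum t_i\trace(H_0P_i)\ge\sum t_i\Fpure(\rho_i)\ge(\mathrm{conv}\,\Fpure)(\rho).\]

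For \ref{prop:cs-func:item:LF}, Eq.~\eqref{eq:E-def} shows that $-E$ is the convex conjugate of $\Fens$ at $-v$. Since $\Fens$ is proper, convex and lsc, Fenchel--Moreau gives $\Fens=\Fens^{**}$, which unwinds to the stated formula.

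For \ref{prop:cs-func:item:flat}, we have $\trace(H_0\Gamma)=\langle\densmap(\Gamma),h_0\rangle$, which is constant on every constraint set, so both functionals equal $\langle\rho,h_0\rangle$.

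For \ref{prop:cs-func:item:piecewise-flat}, diagonalize $H_0$ and all $\iota(v)$ simultaneously in a basis $(\Phi_k)$ with joint eigenvalues $h_k$ and $\omega_k$. For $\Psi=\sum c_k\Phi_k$ both $\trace(H_0P_\Psi)$ and $\densmap(P_\Psi)$ depend only on $t_k=|c_k|^2$, and $\densmap(\Gamma)$ and $\trace(H_0\Gamma)$ depend only on the diagonal of $\Gamma$, which is a probability vector realizable by a pure state. Therefore $\Fpure=\Fens$ equals the linear program $\min\{\sum t_kh_k : \sum t_k\omega_k=\rho,\ t\in\Delta\}$. The value of a parametric LP in its right-hand side is convex and piecewise affine on its polyhedral domain, hence continuous there.
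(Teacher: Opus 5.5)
Your proposal is correct. Items (i), (ii), (vi) and (vii) follow the paper's proof. For lower semicontinuity in (iii) you use compactness of $\densmat(\H)$ directly instead of the paper's Lieb-style shift to an invertible $H_0+cI_\H$. That is simpler, and the paper itself notes that the compactness argument also works for $\Fens$. The genuine differences are in (iv), (v) and (viii).

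\emph{Item (iv).} For continuity of $\Fpure$ on $\Breg$, the paper uses that $\densmap$ is a proper surjective submersion over a neighbourhood $U\subseteq\Breg$. It then applies Ehresmann's fibration theorem to write $\Fpure$ locally as a minimum over a fixed compact fibre. You instead combine lower semicontinuity with upper semicontinuity from a local section through a single minimizer. Your route needs only the local submersion theorem and only one non-critical minimizer at $\rho_0$. It therefore also gives continuity at critical values whose constrained-search minimizer happens to be non-critical. Ehresmann needs $\densmap$ to be a submersion on all of $\densmap^{-1}(U)$, but in exchange yields the local product structure of the constraint manifolds.

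\emph{Item (v).} The paper obtains (v) and (vi) together from $\Fens=\Fens^{**}=\Fpure^{**}$, which identifies $\Fens$ with the \emph{closed} convex envelope. Your Jensen argument on a spectral decomposition of an ensemble minimizer avoids Fenchel--Moreau. It shows equality with the convex envelope itself, attained with at most $\dim\H$ pure states.

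\emph{Item (viii).} You reduce both functionals to the same linear program, so $\Fpure=\Fens$ is immediate. The paper instead uses LP duality to write $\Fpure$ as a maximum of finitely many affine functions, and then deduces $\Fens=\Fpure^{**}=\Fpure$. Your claim that an LP value function is convex and piecewise affine, hence continuous, is standard. It is, however, exactly where the paper does its work, so you should cite or sketch it (finitely many basic feasible solutions, or duality). The paper's vertex representation also tacitly needs the dual polyhedron to be pointed. That fails when $\iota$ is not injective or $I_\H\in\iota(\V)$, whereas your primal formulation is unaffected.
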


For the last point, recall that a real function defined on a polyhedral subset of a real vector space is called piecewise affine if its domain can be decomposed into polyhedral subsets again, on each of which the function is affine.

\begin{proof}
    \ref{prop:cs-func:item:inf-attained} 
     This follows from the fact that the sets $\{P\in\puremf(\H):\densmap(P)=\rho\}$ and $\{\Gamma\in\densmat(\H):\densmap(\Gamma)=\rho\}$ are both compact. Indeed, each of these sets is the preimage of a point under a continuous map, and thus closed. Since $\puremf(\H) \subseteq \densmat(\H)$ are compact, the sets are also compact.\\
    \ref{prop:cs-func:item:ens-convex} Let $\rho = {\lambda} \rho_1 + (1-\lambda)\rho_2\in\Bens$, $\lambda\in(0,1)$. Then $\densmap(\Gamma_1)=\rho_1$ and $\densmap(\Gamma_2)=\rho_2$ implies $\densmap(\lambda\Gamma_1+(1-\lambda)\Gamma_2) =\rho$ (but not the other way around) since $\densmap$ is linear on density matrices.
    We then have
    \begin{equation}\begin{aligned}
        \lambda\Fens(\rho_1)+(1-\lambda)\Fens(\rho_2) &= \inf_{\Gamma_1\mapsto\rho_1,\Gamma_2\mapsto\rho_2} \trace(H_0(\lambda\Gamma_1+(1-\lambda)\Gamma_2)) \\
        &\geq \inf_{\lambda\Gamma_1+(1-\lambda)\Gamma_2\mapsto\rho}\trace(H_0(\lambda\Gamma_1+(1-\lambda)\Gamma_2)) = \Fens(\rho),    
    \end{aligned}\end{equation}
    where the inequality follows from the second constraint for $\Gamma_1$ and $\Gamma_2$ being more general. This shows convexity.\\
    \ref{prop:cs-func:item:ens-lsc}
    This proof is a finite-dimensional adaption from \citet[Th.~4.4]{Lieb1983}.
    Take $\rho_i\to\rho$ for an $\rho\in\Bens$ then we need to show $\liminf_{i\rightarrow\infty}\Fens(\rho_i) \geq \Fens(\rho)$ to prove lower semicontinuity (lsc) on $\Bens$.
    We can assume all $\rho_i\in\Bens$ as well, else $\Fens(\rho_i)=+\infty$ and there is nothing to show.
    We also already choose a subsequence $(\rho_{i'})_{i'}$ that realizes the limes inferior as $\lim_{i'\rightarrow\infty}\Fens(\rho_{i'})$.
    Without loss of generality, replace $H_0$ by the positive operator $\Hpos = H_0 +cI_\H$ by just choosing a large enough $c\in\R$. Since the infimum in the definition of $\Fens$ is always attained by Proposition~\ref{prop:cs-func}\ref{prop:cs-func:item:inf-attained}, this allows us to choose a $\Gamma_i\mapsto\rho_i$ that satisfies $0\leq \trace(\Hpos\Gamma_i)=\Fens(\rho_i)\leq \|\Hpos\|$ for every $i$. Now the trace norm is equivalent to the operator norm in finite dimensions, so $\Hpos\Gamma_i$ is uniformly bounded and we can find a convergent subsequence $\Hpos\Gamma_{i''}\to A$ that is already selected from the subsequence $\Gamma_{i'}\mapsto\rho_{i'}$ that realizes the limes inferior. Since $H_+$ is invertible, $\Gamma_{i''}\to \Gamma=\Hpos^{-1}A$, and from $\densmap$ continuous it follows $\lim_{i''\to\infty}\densmap(\Gamma_{i''})=\densmap(\Gamma)=\rho$. From the definition of the constrained-search functional it clearly follows that $\Fens(\rho)\leq \trace(\Hpos\Gamma)$. We can now summarize
    \begin{equation}
        \liminf_{i\rightarrow\infty}\Fens(\rho_i) = \lim_{i''\rightarrow\infty}\Fens(\rho_{i''}) = \trace(\Hpos\Gamma) \geq \Fens(\rho),
    \end{equation}
    which shows lsc on $\Bens$. The functional is then also lsc on all of $\V^*$ since $\Bens$ is closed and it is extended to $\V^*$ by setting it to $+\infty$. That a convex lsc function is continuous on the relative interior of its effective domain is a standard result in finite dimensions~\cite[Th.~10.1]{Rockafellar-book-1970}.\\
    \ref{prop:cs-func:item:pure-lsc}
    Let $\rho \in \Bpure$ and let $(\rho_i)_i$ be a sequence in $\Bpure$
    such that $\rho_i \rightarrow \rho$. 
    Pick a subsequence $(\rho_{i'})_{i'}$ such that $\Fpure(\rho_{i'}) \rightarrow \liminf_{i\rightarrow \infty} \Fpure(\rho_i)$.
    For each $\rho_{i'}$, pick a pure state $P_{i'} \in \puremf(\H)$ such that
    $P_{i'} \mapsto \rho_{i'}$ and $P_{i'}$ is a minimizer of the constrained search at $\rho_{i'}$. 
    By compactness, we can extract a subsequence $(P_{i''})_{i''}$
    such that $P_{i''}\rightarrow P$ for some $P\in \puremf(\H)$. By continuity, $\mu(P) = \rho$ and 
    \begin{equation}
    \trace(PH_0) = 
    \lim_{i''\rightarrow\infty} \trace(P_{i''}H_0) 
    = \lim_{i''\rightarrow\infty} \Fpure(\rho_{i''}) 
    = \liminf_{i\rightarrow\infty} \Fpure(\rho_i).
    \end{equation}
    In other words, $\liminf_{i\rightarrow \infty}\Fpure(\rho_i) \ge \Fpure(\rho)$.
    Note that this argument also works for $\Fens$.\\
    We now show the continuity of $\Fpure$ on $\Breg$.
    Take $\rho \in \Breg$. Since $\Breg$ is open, we can find an open neighborhood $U\ni \rho$ contained in $\Breg$. Let $M:= \densmap^{-1}(U) \subseteq \puremf(\H)$, which is an open submanifold.
    Then $\densmap|_M: M \rightarrow U$ is a proper surjective submersion, 
    so by Ehresmann fibration theorem~\cite[Th.~5.5.14]{abraham-book},
    shrinking $U$ if necessary, 
    there exists a manifold $F$ and a diffeomorphism 
    $\phi: M \rightarrow U\times F$ such that $\pi_1\circ \phi = \densmap|_M$,
    where $\pi_1: U\times F\rightarrow U$ denotes projection onto the first factor.
    Note that $F$ is compact because it is diffoemorphic to $\mu|_M^{-1}(\rho)$.
    Thus, on $U$, we have $\Fpure(\rho) = \min_{f\in F} \trace(\phi^{-1}(\rho, f) H_0)$. Since $F$ is compact and $(\rho, f)\mapsto \trace(\phi^{-1}(\rho, f)H_0)$ is continuous, so is $\Fpure|_U$.
    \\
    \ref{prop:cs-func:item:conv-envelope} 
    Restating Eq.~\eqref{eq:E-def}, we have $E(v) = -\Fens^*(-v) = -\Fpure^*(-v)$, where ${}^*$ denotes the usual Legendre--Fenchel transformation. Since $\Fens$ is convex, proper, and lower semicontinuous, we have $\Fens = \Fens^{**} = \Fpure^{**}$ (biconjugate). But double Legendre--Fenchel transformation (of $\Fpure$) exactly corresponds to the (closed) lower convex envelope~\cite[Th.~11.1]{rockafellar2009variational}.\\
    \ref{prop:cs-func:item:LF}
    This was already shown in \ref{prop:cs-func:item:conv-envelope} with $\Fens = \Fens^{**} = (-E(-\cdot))^*$.\\
    \ref{prop:cs-func:item:flat} For $H_0=\iota(h_0)$ and $\mu(\Psi)=\rho$ the expectation value in Eq.~\eqref{eq:def-Fpure} is $\langle\Psi, H_0\Psi\rangle = \langle\rho,h_0\rangle$ for $\rho\in\Bpure$ by Eq.~\eqref{eq:density-map}. The same holds for density matrices by Eq.~\eqref{eq:mu-DM} for $\rho\in\Bens$.\\
    \ref{prop:cs-func:item:piecewise-flat} Suppose $[\iota(v),\iota(w)]=0$ and $[H_0,\iota(v)]=0$ for all $v,w\in \V$. Note that in the abelian case $\Bpure=\Bens$ from Proposition~\ref{prop:ranges}\ref{prop:ranges:item:commute}. There is then an orthonormal basis $(\Phi_k)_{k=1}^{d}$ for $\H\cong\C^d$ such that $\iota(v) \Phi_k = \braket{\omega_k, v}\Phi_k$ for all $v\in \V$ 
    and, furthermore, $H_0\Phi_k=\lambda_k\Phi_k$.
    \begin{equation}
        \begin{aligned}
            &\Fpure(\rho) = \min_{\Psi \mapsto \rho} \braket{\Psi,H_0\Psi}
            = \min_{\Psi\mapsto\rho} 
            \sum_{k=1}^{d} \lambda_k \left|\braket{\Phi_k,\Psi}\right|^2= \min\left\{\sum_{k=1}^{d} t_k\lambda_k : \sum_{k=1}^{d}t_k = 1, t_k \ge 0, \sum_{k=1}^{d}t_k \omega_k = \rho\right\}.
        \end{aligned}
    \end{equation}
    This is a linear programming problem of the form
    \begin{equation}
    \min_t \braket{\lambda, t} \hspace{2em} t \ge 0,\; At = \rho\oplus 1,
    \end{equation}
    where $A$, defined from the $\omega_k$, is in $\mathrm{Hom}(\R^{d}, \V^*\oplus \R)$ (because the variable $t$ lives in $\R^d$). 
    The dual problem is
    \begin{equation}
    \max_{y\in \V \oplus \mathbb{R}} \braket{\rho\oplus 1, y}\hspace{2em} A^* y \le \lambda.
    \end{equation}
    The feasible set $\{y\in \V \oplus \mathbb{R} : A^* y \le \lambda\}$ is a 
    (convex) polyhedron, whose vertices we denote by $\nu^{(1)}, \ldots, \nu^{(N)}$.
    If the primal problem has a solution, i.e., if $\rho \in \Bpure=\Bens$, then
    by strong duality we have
    \begin{equation}
    \Fpure(\rho) = \max_{i\in \{1, \ldots, N\}} \braket{\rho\oplus 1, \nu^{(i)}}.
    \end{equation}
    That is, $\Fpure$ is the pointwise maximum of a finite collection of affine functions on $\Bpure$, implying that $\Fpure$ is continuous, convex, and piecewise affine on $\Bpure$. Hence, we also have $\Fpure = \Fpure^{**} = \Fens$.
\end{proof}

One might wonder if the continuity result from \ref{prop:cs-func:item:ens-lsc} cannot be shown including the boundary of $\Bens$. In Appendix~\ref{app:discont-ens}, we give a simple example of a functional $\Fens$ that is \emph{not} continuous on $\relboundary\Bens$. 
Unlike the ensemble functional $\Fens$, the pure-state functional $\Fpure$ does not even need to be continuous on the interior of its domain, for which we give an example in Appendix~\ref{app:discont-pure}. 
Lastly, Appendix~\ref{app:discont-real} provides an example of a pure-state functional that is lower semicontinuous but not continuous, whenwconstrained search is limited to real expansion coefficients.
As a further general feature, the universal functionals exhibit a divergent derivative close to the boundary of their effective domains. A careful study of this behavior is beyond the current work, but it was already described for different settings in Refs.~\cite{schillingDivergingExchangeForce2019,benavides-riverosReducedDensityMatrix2020,LiebertBoseEinstein2021,maciazekRepulsivelyDivergingGradient2021,Liebert2023-refining,WangBosonsRDMFT2026}.

\subsection{\texorpdfstring{$v$}{v}-representability}
Before proceeding, we recall a standard notion from convex analysis.
The \emph{subdifferential} of a convex functional $f$ at some $x\in X$ is the set-valued map~\cite[Th.~23.2]{Rockafellar-book-1970}
\begin{equation}\label{eq:subdiff}
    \partial f(x) = \{u\in X^*: \forall y \in X\colon f(y)\geq f(x) + \langle u,y-x \rangle\}.
\end{equation}
On the real line, the set $\partial f(x)$ can be interpreted as the collection of all slopes for which the corresponding tangent of $f$ at $x$ never exceeds the function $f$ itself.
For concave functions, 
the appropriate notion is
\emph{superdifferentials}, defined by reverting the inequality, and we use the same notation in this case.
Note that $\partial f(x)$ is always a convex and closed set, and that it includes infinitely many elements where $f$ has a kink and is thus non-differentiable. It can also be the empty set. In functional theories, an extremely useful application of the subdifferential lies in the statement $(\partial f)^{-1}=\partial f^*$ for convex, proper, and lower semicontinuous functionals. This translates to
\begin{equation}\label{eq:subdiff-rel}
    (-\partial\Fens)^{-1}=\partial E
\end{equation}
in the present context.

We are now in a position to answer an important question: whether for a given density $\rho$ there exists a potential $v\in\V$ such that the ground state solution of the Schrödinger equation with Hamiltonian $H(v)$ exactly yields $\rho$. 
This property is highly relevant for the formulation and application of functional theories, since the theory can only effectively work on the set of densities that are representable by some choice of $v\in\V$.
We then call $\rho$ a \emph{$v$-representable} density, and we find a two-fold answer for when this holds.
The question of uniqueness is later tackled in Theorem~\ref{th:strongHK}.

\begin{theorem}\label{th:v-rep}
    \begin{enumerate}[(i),wide=0pt,parsep=0pt,itemsep=0pt]
        \item\label{th:v-rep:item:pure} For every $\rho\in\Breg$, there is a $v\in\V$ and $\Psi\in\H$ such that $\Psi$ is an eigenstate of $H(v)$ and $\mu(P_\Psi) = \rho$ (\emph{pure eigenstate $v$-representability}).
        \item\label{th:v-rep:item:ens} For every $\rho\in\relint\Bens$ we find $v\in\V$ and $\Gamma\in\densmat(\H)$ such that $\Gamma$ is an ensemble ground state of $H(v)$, $\Gamma\mapsto\rho$ (\emph{ensemble ground-state $v$-representability}), and $-v\in\partial\Fens(\rho)$.
    \end{enumerate}
    \noindent
    In both cases this state is also the optimizer in the respective constrained-search functional.
\end{theorem}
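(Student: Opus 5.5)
For part~\ref{th:v-rep:item:pure}, I will use a Lagrange multiplier argument on the constrained search. Let $\rho\in\Breg$. By Proposition~\ref{prop:cs-func}\ref{prop:cs-func:item:inf-attained} there is a minimizer $P_\Psi\in\puremf(\H)$ with $\densmap(P_\Psi)=\rho$ and $\langle\Psi,H_0\Psi\rangle=\Fpure(\rho)$. Since $\rho$ is regular, $P_\Psi$ is not a critical point. By Proposition~\ref{prop:crit-equivalent}\ref{item:derivative} (or the remark after it in the redundant case), $\rmd_{P}\densmap$ is then surjective onto $T_\rho\aff(\Bpure)$. Hence the fibre $\densmap^{-1}(\rho)\cap\puremf(\H)$ is a smooth submanifold near $P_\Psi$, and $P_\Psi$ is a constrained local minimum of the smooth function $P\mapsto\trace(H_0P)$. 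The Lagrange multiplier rule on the manifold $\puremf(\H)$ gives $v\in\V$ with
\[
\rmd_P\trace(H_0\,\cdot\,)+\langle \rmd_P\densmap(\cdot),v\rangle=0 \quad\text{on } T_P\puremf(\H).
\]
Multipliers only need to live in the dual of the affine hull directions, and any representative in $\V$ will do.

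Next I will evaluate this on the tangent vectors $[S,P]$ from Eq.~\eqref{eq:tangent-space}. This gives $\trace([S,P]H(v))=\trace(S[P,H(v)])=0$ for all $S\in\rmi\Lsa(\H)$, so $[P_\Psi,H(v)]=0$. By the argument of Lemma~\ref{lem:crit-commute}, $\Psi$ is then an eigenvector of $H(v)$. The optimizer statement holds by construction, since $\Psi$ was chosen as a minimizer.

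I expect the main obstacle to be the careful justification of the multiplier rule when $\iota$ is redundant. In that case I will restrict to a complement of $\ker\iota$ and drop the $I_\H$ direction, which only shifts eigenvalues. Continuity of $\Fpure$ is not needed here.

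For part~\ref{th:v-rep:item:ens}, I will use convex analysis. By Proposition~\ref{prop:cs-func}\ref{prop:cs-func:item:ens-convex},\ref{prop:cs-func:item:ens-lsc}, $\Fens$ is proper, convex and lsc with effective domain $\Bens$. A standard result \cite[Th.~23.4]{Rockafellar-book-1970} says the subdifferential is nonempty on the relative interior of the domain. So there is $u\in\partial\Fens(\rho)$; set $v:=-u$. In the redundant case the subdifferential lies in $\V$ modulo the annihilator of the affine hull directions, which causes no harm.

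Let $\Gamma\mapsto\rho$ be a minimizer of the ensemble constrained search, which exists by \ref{prop:cs-func:item:inf-attained}. The subgradient inequality gives $\Fens(\rho')+\langle\rho',v\rangle\ge\Fens(\rho)+\langle\rho,v\rangle$ for all $\rho'$. By Eq.~\eqref{eq:E-def} this means $E(v)=\Fens(\rho)+\langle\rho,v\rangle=\trace(H(v)\Gamma)$, so $\Gamma$ is an ensemble ground state of $H(v)$. Equivalently, this is the relation $(-\partial\Fens)^{-1}=\partial E$ from Eq.~\eqref{eq:subdiff-rel}. Again $\Gamma$ is the optimizer by construction.
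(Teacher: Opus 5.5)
Your proposal is correct and follows essentially the same route as the paper. In part (i) you apply a Lagrange-multiplier argument at a constrained-search minimizer on the regular fibre and evaluate it on tangent vectors $[S,P]$ to obtain $[P_\Psi,H(v)]=0$; in part (ii) you take a subgradient from Rockafellar's Th.~23.4 on $\relint\Bens$ and use the subgradient inequality to show that the minimizing $\Gamma$ attains $E(v)$. Your explicit handling of the redundant case, with multipliers in $\V$ modulo the annihilator of the direction space of $\aff(\Bpure)$, is a small extra piece of care that the paper leaves implicit.
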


\begin{proof}
\ref{th:v-rep:item:pure}
For all $\rho\in\Breg$ the set $\mathcal{M}_{\rho}:=\{P\in\puremf(\H) : \densmap(P)=\rho\}$ is an embedded manifold~\cite[Th.~3.5.4]{abraham-book}, over which the constrained minimization $\Fpure(\rho)=\inf_{P\in\mathcal{M}_{\rho}}\trace(H_0 P)$ is carried out. 
Let $P_\Psi\in\mathcal{M}_{\rho}$ be a minimizer of the constrained search and let $F: \puremf(\H) \rightarrow \R, P\mapsto \trace(PH_0)$
denote the corresponding objective function. By the Lagrange multiplier theorem~\cite{LM-arxiv,abraham-book} there exists a $v\in \V$
such that
\begin{equation}
\rmd_{P_\Psi}F + \braket{\rmd_{P_\Psi}\mu, v} = 0.
\end{equation}
Now, evaluate this on a tangent vector $[P_\Psi, S]\in T_{P_\Psi}\puremf(\H)$,
where $S\in \rmi\Lsa(\H)$
to get
\begin{equation}
\trace([P_\Psi, S] H_0) + \trace([P_\Psi, S]\iota(v)) = 0.
\end{equation}
Since this is true for all $S\in \rmi\Lsa(\H)$, by the cyclicity of trace we have
\begin{equation}
    [P_\Psi, H_0 + \iota(v)] = 0.
\end{equation}
This is equivalent to $(H_0 + \iota(v)) \Psi = H(v)\Psi = \lambda \Psi$ as shown in the proof of Lemma~\ref{lem:crit-commute}.


\ref{th:v-rep:item:ens}
We have that $\Fens$ has a non-empty (but possibly multi-valued) subdifferential for all $\rho\in\relint\Bens$~\cite[Th.~23.4]{Rockafellar-book-1970}, so choose $v\in-\partial\Fens(\rho)$. Then, by the definition of the subdifferential, $\rho$ minimizes $\Fens(\rho)+\langle\rho,v\rangle$. Let $\Gamma\mapsto\rho$ be a minimizer in Eq.~\eqref{eq:def-Fens} and $\Gamma'\mapsto\rho'$ another density matrix. Then
\begin{equation}
\begin{aligned}
    \trace(H(v)\Gamma') &= \trace(H_0\Gamma') + \langle\rho',v\rangle \geq \Fens(\rho') + \langle\rho',v\rangle\geq \Fens(\rho) + \langle\rho,v\rangle \\&= \trace(H_0\Gamma) + \langle\rho,v\rangle = \trace(H(v)\Gamma).
\end{aligned}
\end{equation}
So $\Gamma$ minimizes $\trace(H(v)\Gamma)$ and is thus a ground state.
\end{proof}

Such a proof for pure-eigenstate $v$-representability was originally devised for the setting of a Dicke model on an infinite-dimensional Hilbert spaces, where the pure-state $v$-representability is even demonstrated with low-lying eigenstates~\cite{Bakkestuen2025-Dicke}.
Another pure-eigenstate $v$-representability result was established with imaginary-time evolution in the setting of an abelian functional theory~\cite{Penz2025-cs-imag-time}, where an extension to the non-abelian case is also possible in principle.
Note further that in a one-dimensional continuum setting it is possible to prove analyticity of the mapping from densities to ground states~\cite{Corso-regularity-arxiv} that then also extends to analyticity of the pure-state constrained-search functional $\Fpure$. Yet this result depends on pure-state $v$-representability by \emph{ground states} and can thus not be extended to our setting. Results that establish ``excited-state $v$-representability'' like Theorem~\ref{th:v-rep}\ref{th:v-rep:item:pure} unexpectedly help to get a hold on properties of excited states only through the pure-state constrained-search functional~\cite{Perdew1985,Yang2024}. The classical proof for ensemble ground-state $v$-representability for abelian functional theories, even for infinite-dimensional discrete settings, is from \citet{CCR1985}.

Our result in Theorem~\ref{th:v-rep} leaves open the question of ensemble $v$-representability on the boundary of $\Bens$ and by pure states for critical values. Interestingly, it was found that at critical values a non-unique $v$-representability is possible when degeneracy occurs, thus yielding counterexamples to a possible Hohenberg--Kohn theorem~\cite{penz2021-Graph-DFT,penz2023geometry}. Conversely, the same phenomenon also allows certain boundary values to become $v$-representable in exceptional cases.

We would argue that the criteria for $v$-representability are the most important ingredients for the construction of a functional theory, since we want to replace ground states $\Psi$ and $\Gamma$ by their reduced quantity $\rho\in\V^*$ and thus need to work within the set of $v$-representable elements in $\V^*$.
While the answer given by Theorem~\ref{th:v-rep} is very satisfying in any discrete setting, the situation is very different for the setting of DFT in the continuum and the ``$v$-representability problem'' appears. There, explicit examples for non-$v$-representable densities can be constructed~\cite{ENGLISCH1983} and they are arbitrarily close to representable densities~\cite{Lammert2007,Garrigue2021}. To date, positive results for density spaces that allow full $v$-representability were only found in one-dimensional settings~\cite{Sutter2024,CarvalhoCorso2025}.

\subsection{Hohenberg--Kohn theorems}
\label{sec:hk_theorems}

After the discussion of $v$-representability, an ensuing question is then if a density admits a \emph{unique} representing potential that consequently allows the definition of a density-potential map $\rho\mapsto v$. This is the content of the famous Hohenberg--Kohn theorem in the standard DFT continuum setting~\cite{hohenberg-kohn1964,Garrigue2018HK}. We will give a comparable result as two separate theorems for our framework.

\begin{theorem}[weak Hohenberg--Kohn result]\label{th:weakHK}
    For two potentials $v_1,v_2\in\V$ that give ground states $\Psi_1,\Psi_2\in\H$ of $H(v_1)$ and $H(v_2)$ respectively, $\densmap(P_{\Psi_1})=\densmap(P_{\Psi_2})$ implies that $\Psi_1$ is also a ground state of $H(v_2)$ and vice versa. The same holds for ensemble states.
\end{theorem}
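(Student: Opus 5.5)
The argument is the classical Hohenberg--Kohn ``add the two variational inequalities'' trick. The only ingredients are the splitting of the energy in Eq.~\eqref{eq:H-mu}, respectively Eq.~\eqref{eq:Hv-exp}, and the Rayleigh--Ritz principle from Eq.~\eqref{eq:E-def}. Write $\rho:=\densmap(P_{\Psi_1})=\densmap(P_{\Psi_2})$. Since $\Psi_1$ is a ground state of $H(v_1)$, I get $\braket{\Psi_1,H(v_1)\Psi_1}=E(v_1)\le\braket{\Psi_2,H(v_1)\Psi_2}$. Expanding both sides with Eq.~\eqref{eq:H-mu} gives
\begin{equation*}
\braket{\Psi_1,H_0\Psi_1}+\braket{\rho,v_1}\le\braket{\Psi_2,H_0\Psi_2}+\braket{\rho,v_1}.
\end{equation*}
The common term $\braket{\rho,v_1}$ cancels because the densities agree, so $\braket{\Psi_1,H_0\Psi_1}\le\braket{\Psi_2,H_0\Psi_2}$. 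Running the same computation with the roles of $1$ and $2$ exchanged gives the reverse inequality, hence $\braket{\Psi_1,H_0\Psi_1}=\braket{\Psi_2,H_0\Psi_2}$.

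From this equality I conclude directly:
\begin{equation*}
\braket{\Psi_1,H(v_2)\Psi_1}=\braket{\Psi_1,H_0\Psi_1}+\braket{\rho,v_2}=\braket{\Psi_2,H_0\Psi_2}+\braket{\rho,v_2}=\braket{\Psi_2,H(v_2)\Psi_2}=E(v_2).
\end{equation*}
So the normalized $\Psi_1$ attains the minimum in Eq.~\eqref{eq:E-def} for $v_2$. By Rayleigh--Ritz, a normalized vector attaining the ground-state energy lies in the lowest eigenspace of $H(v_2)$, so $\Psi_1$ is a ground state of $H(v_2)$. The statement with $1$ and $2$ exchanged follows by symmetry.

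The ensemble version is word for word the same, with $\braket{\Psi,\cdot\,\Psi}$ replaced by $\trace(\cdot\,\Gamma)$ and Eq.~\eqref{eq:Hv-exp} used in place of Eq.~\eqref{eq:H-mu}. A density matrix $\Gamma_1$ with $\trace(H(v_2)\Gamma_1)=E(v_2)$ is by definition an ensemble ground state. If one wants the spectral characterization as well, it is supported on the lowest eigenspace, since $\trace((H(v_2)-E(v_2))\Gamma_1)=0$ with a positive semidefinite integrand operator.

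There is no real obstacle here. The only point needing care is to make sure the minimum-attaining state is indeed a ground state in the sense used in the paper. In the pure case this is immediate from Rayleigh--Ritz; in the ensemble case it is the support argument just given.
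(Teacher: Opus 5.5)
Your proof is correct and is essentially the paper's argument. The paper combines the same two variational inequalities into $E(v_1)-E(v_2)=\langle\rho,v_1-v_2\rangle$, which is equivalent to your intermediate identity $\langle\Psi_1,H_0\Psi_1\rangle=\langle\Psi_2,H_0\Psi_2\rangle$, and then reads off $\langle\Psi_1,H(v_2)\Psi_1\rangle=E(v_2)$ exactly as you do (your explicit support argument for the ensemble case is a small addition the paper leaves implicit).
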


\begin{proof}
    By the variational principle for ground states it must hold that
    \begin{align}
        &E(v_1) = \langle\Psi_1,H(v_1)\Psi_1\rangle \leq \langle\Psi_2,H(v_1)\Psi_2\rangle,\\
        &E(v_2) = \langle\Psi_2,H(v_2)\Psi_2\rangle \leq \langle\Psi_1,H(v_2)\Psi_1\rangle.
    \end{align}
    Using $\rho = \densmap(P_{\Psi_1})=\densmap(P_{\Psi_2})$ this can be rewritten as
    \begin{equation}\begin{aligned}
        E(v_1) &= \langle\Psi_1,H_0\Psi_1\rangle + \langle\rho,v_1\rangle\leq \langle\Psi_2, H_0\Psi_2\rangle + \langle\rho,v_1+v_2-v_2\rangle= E(v_2) + \langle\rho,v_1-v_2\rangle
    \end{aligned}\end{equation}
    and by switching $1\leftrightarrow 2$ in an analogous way as
    $
        E(v_2) \leq E(v_1) + \langle\rho,v_2-v_1\rangle.
    $
    Combining these two inequalities yields
    \begin{equation}\label{eq:HK-energy-diff}
        E(v_1)-E(v_2)=\langle\rho,v_1-v_2\rangle.
    \end{equation}
    Since $\langle\Psi_1,H(v_2)\Psi_1\rangle = E(v_1)-\langle\rho,v_1-v_2\rangle$, this gives $E(v_2) = \langle\Psi_1,H(v_2)\Psi_1\rangle$, which means that $\Psi_1$ is also a ground state of $H(v_2)$. The same argument can be used to show that $\Psi_2$ is also a ground state of $H(v_1)$. The proof for ensemble states works analogously using $\trace(H(v_i)\Gamma_i)$ for the ground-state energies.
\end{proof}

Apart from providing the proof of half of the usual Hohenberg--Kohn theorem, which we will formulate as the ``strong Hohenberg--Kohn result'' below, the relevance of this theorem is that it allows to map from $\rho\in\Bens$ to the ground-state energy---if $\rho$ is ensemble ground-state $v$-representable. This is achieved by defining the Hohenberg--Kohn functional $\FHK(\rho) = \trace(H_0\Gamma)$~\cite{hohenberg-kohn1964}, where $\Gamma$ is the ground state of $H(v_1)$ for an arbitrary $v_1\in\V$ that leads to $\rho$ in the ground state. Which $v_1$ we choose does not matter, since they all give the same functional value $\FHK(\rho) = E(v_1)-\langle\rho,v_1\rangle = E(v_2)-\langle\rho,v_2\rangle$ as seen from Eq.~\eqref{eq:HK-energy-diff}. But note that this only holds under the assumption of $v$-representability, which is why the constrained-search functionals of Definition~\ref{def:const-search-func}, which avoid this limitation (that is much more restrictive in the continuum setting~\cite{Garrigue2021}), were introduced in the first place. 
Next follows the strong result that really allows to map from ground-state densities $\rho$ to potentials $v$. Note that it is only formulated for pure ground states.

\begin{theorem}[strong Hohenberg--Kohn result]\label{th:strongHK}
    If for $\rho\in\Breg$ we find a $v\in\V$ and $\Psi\in\H$ such that $\Psi$ is a ground state of $H(v)$ and $\Psi\mapsto\rho$ (\emph{pure ground-state $v$-representability}) then this $v$ is unique modulo $\iota^{-1}(\mathbb{R}I_{\H})$.\\
    If additionally $\iota: \V\rightarrow \Lsa(\H)$ is injective and $I_\H\notin\iota(\V)$, then $v$ is unique.
\end{theorem}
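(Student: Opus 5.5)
The plan is to compare two representing potentials through the weak Hohenberg--Kohn result and the variational principle. Suppose $v_1,v_2\in\V$ have ground states $\Psi_1,\Psi_2$ of $H(v_1),H(v_2)$ with $\Psi_1\mapsto\rho$ and $\Psi_2\mapsto\rho$. By Theorem~\ref{th:weakHK}, $\Psi_1$ is then also a ground state of $H(v_2)$. Hence
\begin{equation}
H(v_1)\Psi_1 = E(v_1)\Psi_1, \qquad H(v_2)\Psi_1 = E(v_2)\Psi_1 .
\end{equation}
Subtracting these eliminates $H_0$ and gives $\iota(v_1-v_2)\Psi_1=(E(v_1)-E(v_2))\Psi_1$. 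Thus $\Psi_1$ is an eigenvector of $\iota(w)$ with $w:=v_1-v_2$.

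Now use regularity. Since $\mu(P_{\Psi_1})=\rho\in\Breg$, the pure state $P_{\Psi_1}$ cannot be a critical point in the sense of Definition~\ref{def:crit}. Therefore every $\iota(w)$ having $\Psi_1$ as an eigenvector satisfies $\iota(w)\propto I_\H$. Applied to our $w$, this gives $\iota(v_1-v_2)\in\R I_\H$, i.e.\ $v_1-v_2\in\iota^{-1}(\R I_\H)$. This is exactly uniqueness modulo $\iota^{-1}(\R I_\H)$.

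For the second statement, assume in addition that $\iota$ is injective and $I_\H\notin\iota(\V)$. Then $\iota(\V)\cap\R I_\H=\{0\}$, so $\iota(v_1-v_2)=0$. Injectivity of $\iota$ then gives $v_1=v_2$.

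No step here is a serious obstacle; the delicate point is only bookkeeping. The weak Hohenberg--Kohn result must be applied in the right direction: a single state $\Psi_1$ has to be a simultaneous eigenvector of both Hamiltonians, so that the unknown $H_0$ cancels. After that, regularity of $\rho$ must be used on this specific preimage $P_{\Psi_1}$, which is allowed because $\Bcrit$ consists of densities having \emph{some} critical preimage. So a regular $\rho$ has no critical preimage at all.
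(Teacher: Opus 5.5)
Your proposal is correct and follows the paper's proof step for step. You use the weak Hohenberg--Kohn result to obtain a common ground state $\Psi_1$ of $H(v_1)$ and $H(v_2)$, and subtract the two eigenvalue equations to cancel $H_0$. Regularity of $\rho$ (no critical preimage) then forces $\iota(v_1-v_2)\in\R I_\H$, and the no-redundancy assumptions give $v_1=v_2$.
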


\begin{proof}
    Let $v'\in\V$ be a another potential such that $\Psi'\mapsto\rho$ a ground state of $H(v')$. Then by Theorem~\ref{th:weakHK}, we may choose $\Psi'=\Psi$ since the $H(v)$ and $H(v')$ must share a ground state.
    Since $\Psi$ is an eigenstate for both $H(v)$ and $H(v')$, we can subtract the corresponding Schrödinger equations to get 
    \begin{equation}
        \iota(v-v')\Psi = (E-E')\Psi.
    \end{equation}
    Since $P=P_\Psi$ is assumed regular, by  Definition~\ref{def:crit}, it follows that $\iota(v-v')\propto I_\H$, so it must be that $\iota(v-v')=(E-E')I_\H$.
    With the additional assumptions that $\iota: \V\rightarrow \Lsa(\H)$ is injective and $I_\H\notin\iota(\V)$ we must have $v=v'$ and $E=E'$.
\end{proof}

This result, on the same level of generality, was previously stated by
\citet{Xu2022} and for systems with temperature by \citet{Palamara2024}. Other
references often overlook the regularity constraint or additionally assume
that a ground state always connects to a unique potential $v$~\cite{Wu2006},
which means that the weak result directly implies the strong one. 
We note that in case of \emph{pure excited-state $v$-representability}, that is
sometimes necessary in the context of
Theorem~\ref{th:v-rep}\ref{th:v-rep:item:pure} when no representability by pure
ground states can be established, explicit counterexamples to uniqueness of the
potential can be found (see Appendix~\ref{app:2qubit}).

The strong Hohenberg--Kohn result now allows for additional regularity of $\Fens$ on the regular values set $\Breg$ since a unique $-v\in\partial\Fens(\rho)$ means that $\partial\Fens(\rho)$ is a singleton and this in turn implies that $\Fens$ is Gâteaux differentiable~\cite[Th.~25.1]{Rockafellar-book-1970}.

\begin{corollary}
    $\Fens$ is Gâteaux differentiable on $\Breg$.
\end{corollary}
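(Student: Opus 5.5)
The plan is to reduce Gâteaux differentiability to the statement that the subdifferential is a singleton, and to get that from a Hohenberg--Kohn-type uniqueness argument. Fix $\rho\in\Breg$. I will show that $\partial\Fens(\rho)$ is a single point; since $\Fens$ is convex, proper and lower semicontinuous (Proposition~\ref{prop:cs-func}), this gives Gâteaux differentiability at $\rho$ by \cite[Th.~25.1]{Rockafellar-book-1970}.

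\emph{Step 1: $\rho$ lies in the relative interior of the domain.} I first show $\Breg\subseteq\relint\Bens$. By Proposition~\ref{prop:boundary-minimized}\ref{item:prop_boundary_crit_2} we have $\relboundary\Bpure\subseteq\Bcrit$, so $\rho\in\relint\Bpure$. Since $\aff(\Bpure)=\aff(\Bens)$ and $\Bpure\subseteq\Bens$ (Proposition~\ref{prop:ranges}), a relatively open neighbourhood of $\rho$ inside $\Bpure$ also lies inside $\Bens$. Hence $\rho\in\relint\Bens$, and Theorem~\ref{th:v-rep}\ref{th:v-rep:item:ens} gives $\partial\Fens(\rho)\neq\emptyset$.

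\emph{Step 2: two subgradients share ground states.} Suppose $-v_1,-v_2\in\partial\Fens(\rho)$. Let $\Gamma$ be a constrained-search minimizer at $\rho$, which exists by Proposition~\ref{prop:cs-func}\ref{prop:cs-func:item:inf-attained}. The proof of Theorem~\ref{th:v-rep}\ref{th:v-rep:item:ens} shows that this same $\Gamma$ is an ensemble ground state of both $H(v_1)$ and $H(v_2)$. Write $\Gamma=\sum_k p_kP_{\Psi_k}$ with $p_k>0$. Every $\Psi_k$ lies in the ground eigenspace of both Hamiltonians. Subtracting the two eigenvalue equations, exactly as in the proof of Theorem~\ref{th:strongHK}, gives
\[
\iota(v_1-v_2)\Psi=(E(v_1)-E(v_2))\Psi\quad\text{for every }\Psi\in K:=\operatorname{ran}\Gamma .
\]

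\emph{Step 3: conclude from regularity.} If some unit vector $\Psi\in K$ satisfies $\Psi\mapsto\rho$, then $P_\Psi$ is a regular point. Definition~\ref{def:crit} then forces $\iota(v_1-v_2)\propto I_\H$, so $v_1=v_2$ modulo $\iota^{-1}(\R I_\H)$, and $v_1=v_2$ exactly under the no-redundancy assumption. This would finish the proof: modulo that redundancy, the subdifferential is a singleton.

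\emph{Main obstacle.} The hard part is Step~3. Step~2 only yields an ensemble in $K$ with density $\rho$, whereas regularity is a statement about pure states with density $\rho$. In general $\rho$ is only a convex combination of the densities $\mu(P_{\Psi_k})$, so no pure state in $K$ need have density $\rho$.

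\begin{itemize}
\item \emph{First attempt.} Suppose $\iota(w):=\iota(v_1-v_2)\not\propto I_\H$. Every pure state in $K$ is then a critical point, and $\mu(\puremf(K))$ lies in the critical plane $\{\langle\cdot,w\rangle=c\}$ with $c=E(v_1)-E(v_2)$. Because $\rho\in\relint\Bpure$, the function $\langle\cdot,w\rangle$ takes values on both sides of $c$ near $\rho$. I would try to derive a contradiction from this together with the fact that $\rho$ is a ground-state density for the whole segment $v_1+t(v_2-v_1)$. Concretely, I would examine the one-sided derivatives of the concave function $t\mapsto E(v_1+t w')$ and use the relation $(-\partial\Fens)^{-1}=\partial E$ from \eqref{eq:subdiff-rel}.
\item \emph{Fallback.} If the first attempt fails, I would show that $\mu(\puremf(K))$ is convex in the relevant cases, via Toeplitz--Hausdorff- or Au-Yeung-type convexity of joint numerical ranges. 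Alternatively, I would find a pure-state minimizer by comparing $\Fens(\rho)$ with $\Fpure(\rho)$ and using the continuity of $\Fpure$ on $\Breg$. Either route lets me choose $\Psi\in K$ with $\Psi\mapsto\rho$, which is exactly what Step~3 needs.
\end{itemize}
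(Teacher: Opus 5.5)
Your Steps 1 and 2 are correct, and your overall route (unique representing potential $\Rightarrow$ $\partial\Fens(\rho)$ is a singleton $\Rightarrow$ Gâteaux differentiability) is the one the paper uses. The gap is exactly the one you flag in Step 3, and it cannot be closed: neither your first attempt nor your fallback can work, because the corollary is false as stated.

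Take $\H=\C^2\oplus\C^2$, $\V=\R^4$, $\iota(v)=(v_1X+v_2Y+v_3Z)\oplus v_4Z$ and $H_0=0\oplus I_{\C^2}$. Then $\iota$ is injective and $I_\H\notin\iota(\V)$. Write $\Gamma_{22}$ for the lower diagonal block of $\Gamma$. Then $\trace(H_0\Gamma)=\trace\Gamma_{22}\geq|\trace(\Gamma_{22}Z)|$, and from this one gets $\Fens(\rho)=|\rho_4|$ on $\Bens=\{|(\rho_1,\rho_2,\rho_3)|+|\rho_4|\leq 1\}$.

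Now let $\rho=(\vec x,0)$ with $0<|\vec x|<1$. Any pure preimage $\Psi=(\alpha,\beta)$ has $|\alpha|^2=|\vec x|>0$ and $|\beta_1|^2=|\beta_2|^2=(1-|\vec x|)/2>0$. Hence $\iota(u)\Psi=\lambda\Psi$ forces $u_4=\lambda=0$ from the $\beta$-block, and then $u_1=u_2=u_3=0$ from the $\alpha$-block. So $\rho\in\Breg$. Yet $\Fens$ has a kink at $\rho$ in the $e_4$-direction, and $\partial\Fens(\rho)\supseteq\{-te_4:|t|\leq 1\}$.

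This is precisely your scenario. $K=\C^2\oplus 0$ is the common ground space of $H(te_4)$ for $|t|<1$, and $\iota(e_4)$ vanishes on $K$. Moreover, $\densmap(\puremf(K))$ is the sphere $\{(\vec y,0):|\vec y|=1\}$, which misses $\rho$; accordingly $\Fpure(\rho)=1-|\vec x|>0=\Fens(\rho)$. So $\densmap(\puremf(K))$ is not convex, no pure state attains $\Fens(\rho)$, and both of your fallbacks fail.

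The paper's one-line argument has the same hole. Theorem~\ref{th:strongHK} needs \emph{pure} ground-state $v$-representability of $\rho$, whereas a subgradient $-v\in\partial\Fens(\rho)$ only certifies an ensemble ground state. What is true is the statement for those $\rho\in\Breg$ with $\Fpure(\rho)=\Fens(\rho)$; on $\relint\Bens$ this condition is equivalent to pure ground-state $v$-representability. It also needs the no-redundancy assumption; otherwise differentiability holds only relative to $\aff\Bens$.

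Your Steps 1--3 prove this corrected statement once the missing ingredient is assumed. Indeed, if $P_\Psi$ attains $\Fpure(\rho)$, then for every $-v\in\partial\Fens(\rho)$ one has $\langle\Psi,H(v)\Psi\rangle=\Fpure(\rho)+\langle\rho,v\rangle=\Fens(\rho)+\langle\rho,v\rangle=E(v)$. So $\Psi$ is a ground state of every such $H(v)$, and your Step 3 applies directly to this $\Psi$.
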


\subsection{Disagreement of pure-state and ensemble functionals}

Since the definitions and properties of the two functionals in Definition~\ref{def:const-search-func} differ, 
it is important to be able to characterize where they agree and where not. 
The following lemma tells that inequality of the functionals must always stem from degeneracy of the ground state.

\begin{lemma}\label{lem:degen}
    If $\rho\in\Bpure$ is pure ground-state $v$-representable, then $\Fpure(\rho)=\Fens(\rho)$. 
    Conversely, for all $\rho\in\relint\Bens$ where $\Fpure(\rho)\neq\Fens(\rho)$, 
    it holds that any representing ground state must be non-pure and thus
    originates from degeneracy. In particular, this holds for all
    $\rho\in\relint\Bens\setminus\Bpure$.
\end{lemma}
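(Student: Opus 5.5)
The first claim follows from a chain of inequalities that is sandwiched by $\Fens\le\Fpure$. Suppose $\Psi\mapsto\rho$ is a ground state of $H(v)$ for some $v\in\V$. By Proposition~\ref{prop:cs-func}\ref{prop:cs-func:item:inf-attained}, there is a minimizer $\Gamma\mapsto\rho$ of the ensemble constrained search, so $\Fens(\rho)=\trace(H_0\Gamma)$. Then, using the variational principle and Eq.~\eqref{eq:Hv-exp},
\begin{equation}
\Fpure(\rho)+\langle\rho,v\rangle\le\langle\Psi,H_0\Psi\rangle+\langle\rho,v\rangle=E(v)\le\trace(H(v)\Gamma)=\Fens(\rho)+\langle\rho,v\rangle .
\end{equation}
This gives $\Fpure(\rho)\le\Fens(\rho)$. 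Combined with $\Fens\le\Fpure$ from Proposition~\ref{prop:cs-func}\ref{prop:cs-func:item:conv-envelope}, we obtain equality. As a by-product, $\Psi$ is itself a constrained-search minimizer for both functionals.

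For the converse, take $\rho\in\relint\Bens$ with $\Fpure(\rho)\neq\Fens(\rho)$. Theorem~\ref{th:v-rep}\ref{th:v-rep:item:ens} guarantees that some $v$ with an ensemble ground state $\Gamma\mapsto\rho$ of $H(v)$ exists, so ``representing ground states'' are not vacuous. Suppose one of them were pure, i.e.\ $\Gamma=P_\Psi$. Then $\rho\in\Bpure$ would be pure ground-state $v$-representable, and the first part would give $\Fpure(\rho)=\Fens(\rho)$, a contradiction. So every representing ground state $\Gamma$ is non-pure.

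To see that this non-purity originates from degeneracy, spectrally decompose $\Gamma=\sum_k t_kP_{\Psi_k}$ with $t_k>0$. Since $\trace(H(v)\Gamma)=E(v)$ is minimal, each $\Psi_k$ lies in the ground eigenspace of $H(v)$. A rank of at least two then forces that eigenspace to be at least two-dimensional, i.e.\ the ground state is degenerate.

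Finally, if $\rho\in\relint\Bens\setminus\Bpure$, then $\Fpure(\rho)=+\infty$ while $\Fens(\rho)<\infty$, so the functionals disagree and the previous argument applies. Alternatively, no pure state maps to $\rho$ at all. No step is a serious obstacle; the only point needing care is ensuring the ensemble minimizer exists and that the $\Psi_k$ inherit the ground-state property, both of which are immediate.
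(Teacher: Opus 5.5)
Your proof is correct and follows essentially the same approach as the paper. The variational principle shows that no ensemble $\Gamma\mapsto\rho$ can beat the pure ground state, giving $\Fpure(\rho)=\Fens(\rho)$; the converse then combines this with ensemble ground-state $v$-representability from Theorem~\ref{th:v-rep}\ref{th:v-rep:item:ens}, and your spectral-decomposition argument (a rank-$\geq 2$ ground state forces a degenerate ground eigenspace) spells out a step the paper only asserts.
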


\begin{proof}
    The pure ground state $\Psi\mapsto\rho$ minimizes Eq.~\eqref{eq:def-Fpure}. As seen in Eq.~\eqref{eq:E-def}, there can be no $\Gamma\mapsto\rho$ that yields a smaller value $\trace(H_0\Gamma)$ than $\Fpure(\rho)$ because $\Psi$ is already a ground state. So we can choose $\Gamma=P_\Psi$ and have $\Fpure(\rho)=\Fens(\rho)$. 

    We now prove the second claim. 
    First, take $\rho\in(\relint\Bens)\cap\Bpure$.
    Note that  
    $\Fpure(\rho)\neq\Fens(\rho)$ implies that $\rho$ is \emph{not} representable by a pure ground-state. 
    However, by Theorem~\ref{th:v-rep}\ref{th:v-rep:item:ens}, we still have representability by an ensemble ground state, which can only be a non-pure state originating from a degeneracy. 
    If $\rho\in(\relint\Bens)\setminus\Bpure$, representability by a pure ground state cannot hold in the first place.
\end{proof}

An example with numerical evidence for this result is given by a two-qubit example discussed in Appendix~\ref{app:2qubit}. The study of values $\rho\in\V^*$ originating from degenerate ground states in relation to functional theories is an interesting topic on itself~\cite{penz2023geometry}. Here, we want to limit its discussion to the simple observation that for a $v\in\V$ where $H(v)$ has a degenerate ground state, the set of all ground-state values $D=\partial E(v)$ (called a \emph{degeneracy region} in Ref.~\cite{penz2023geometry}) can contain multiple elements and that $\Fens$ is affine on $D$ since $\partial\Fens(\rho)=\{-v\}$ for all $\rho\in D$ by Eq.~\eqref{eq:subdiff-rel}.
This concludes the section on functionals arising from the definition of the scope of a functional theory and we continue with some examples in the next section.

\section{Examples for functional theories}
\label{sec:ex-ft}

Here, as an extension of the examples from Section~\ref{sec:ex-geom}, we will
present the scopes and typical choices for the fixed part of the Hamiltonian
$H_0$ for several physically relevant systems.

\subsection{DFT for fermions on a lattice}
\label{sec:lattice-DFT}

The basic setting for this rather simple physical example was already discussed in Section~\ref{sec:lattice-geom}. 
As before, we consider spinless fermions on a finite lattice with $\H_N = \wedge^N\C^M$. 
Take $\V = \R^M$, $\iota(v) = \sum_{i=1}^M v_i a_i^* a_i$, where $a_i$ ($a_i^*$) are the annihilation (creation) operators acting on the fermionic Fock space, and an arbitrary $H_0\in \Lsa(\H_N)$. The scope $\S=(\R^M,\H_N, \iota, H_0)$ 
gives rise to the usual spinless fermionic DFT on a finite lattice.
The density then turns out to be the per-site occupation number.
The addition of spin as an internal degree-of-freedom is straightforward, and depending on the desired scope it one may consider either spin-summed or spin-resolved quantities~\cite{Penz2024-SpinLattice}. The fixed part $H_0$ of the Hamiltonian can include, for example, a graph Laplacian to emulate a discretization of the continuum together with an electron-electron interaction term~\cite{penz2021-Graph-DFT}. 
Such a functional theory, for example, could be used to study the Hubbard model~\cite{carrascal2015hubbard,Qin2022} (when spin is added) with variable site-dependent external potential, for which the fixed part of the Hamiltonian is given by
\begin{equation}
H_0 = -\sum_{\substack{i\neq j \\ \sigma\in\{\uparrow,\downarrow\}}} t_{ij}a_{i\sigma}^* a_{j\sigma} + U\sum_i a_{i\uparrow}^* a_{i\downarrow}^* a_{i\uparrow} a_{i\downarrow}.
\end{equation}

\subsection{RDMFT for fermions on a lattice}
\label{sec:RDMFT}

In RDMFT, one is concerned with Hamiltonians of the form
$H(h) = h + W$, where $h$ is any one-particle operator and $W$ is a fixed interaction term (e.g., Hubbard on-site)
{\cite{Liebert2023-refining}}.
Let $\H_1$ denote the single-particle Hilbert space, and let
$\H_N$ be the $N$-particle (fermionic or bosonic) Hilbert space.
Consider the mapping to $N$-body operators
\begin{equation}\label{eq:N-body-map}
\begin{aligned}
    \iota_N: \;&\Lsa(\H_1) \rightarrow \Lsa(\H_N)\\
    & h \mapsto \sum_{n=1}^N I^{\otimes(n-1)}\otimes h\otimes I^{\otimes (N-n)}.
\end{aligned}
\end{equation}
The image $\iota_N(\Lsa(\H_1))$ is precisely the space of one-particle operators on $\H_N$.
The corresponding functional theory is then given by the scope
\begin{equation}
    \S_N=(\Lsa(\H_1),\H_N,\iota_N,W).
\end{equation}
The density map $\mu : \densmat(\H_N) \rightarrow \Lsa(\H_1)^*$ is just the partial trace $N\trace_{N-1}(\cdot)$
upon identifying $\Lsa(\H_1)^*$ with $\Lsa(\H_1)$ in the usual way.
Thus, in this case, the density map sends an $N$-particle state to its one-particle reduced density matrix (1RDM).

It is easy to characterize the set $\Bens$ of all $\gamma=\mu(\Gamma)$ with $\Gamma\in\mathcal{E}(\H)$ with the conditions $0\le\gamma\le 1$ and $\trace\gamma=N$. Yet, characterizing the pure-state observable range $\Bpure$ is much more challenging. 
Describing the set of 1RDMs compatible with some pure $N$-particle quantum state is known as the pure-state one-body $N$-representability problem, or in the context of quantum information theory, as (a special case of) the quantum marginal problem. Its solution leads to the generalized Pauli constraints 
\cite{borlandConditionsOnematrixThreebody1972,klyachkoQuantumMarginalProblem2006,klyachkoPauliExclusionPrinciple2009,altunbulakPauliPrincipleRepresentation2008}, which provide necessary and sufficient constraints on the eigenvalues of $\gamma$. The derivation of these constraints is rooted in the Lie-algebra structures discussed in Example~\ref{example:fermion_rdmft}. More on the universal functionals defined from this scope can be found in Ref.~\cite{Liebert2023-refining}.

In quantum chemistry, 
one often works with a fixed set $(\phi_i)_i$ of $L^2(\R^3)$ functions (atomic or molecular orbitals), usually taken to be normalized, linearly independent, but not necessarily orthogonal, that are the building blocks of all considered many-particle wave functions. 
The occupation-number operators $n_i=a_i^*a_i$ are then \emph{non-local} in the sense that they do not correspond to disjoint parts of coordinate space (since the $\phi_i$ in general will have an overlap), so this formulation does not fit to DFT, where the density and potential always have a local character. But in the context of RDMFT, considering self-adjoint one-body operators
spanned by $a_i^*a_j$, no such problem occurs since 1RDMs and the dual elements, one-particle operators themselves, are non-local anyway. This means the RDMFT formulation is equally able to represent lattice models~\cite{LpezSandoval2002} as well as basis sets~\cite{Giesbertz2019}.

\subsection{Spin systems}
\label{sec:spin-chain}

There is a line of work on density-functional theory 
for spin systems
in the literature
\cite{liberoSpindistributionFunctionalsCorrelation2003,capelleSpindensityfunctionalTheoryOpen2005,alcarazDensityFunctionalFormulations2007,prataSpindensityFunctionalExchange2009,maoTestingDensityFunctional2021,zhaoDynamicalExchangecorrelationPotential2023}.
In this setting, the role of density is played by the magnetization at each site, which couples to a site-dependent magnetic field, playing the role of the external potential. Consider the inhomogeneous transverse-field Ising model on a one-dimensional lattice of $m$ sites with open boundary conditions as an example. Generalization to higher spins and other interactions is straightforward.

The family of Hamiltonians is parameterized by an external magnetic field $h \in \mathbb{R}^m$,
\begin{equation}
\label{eq:spin_chain_hamiltonian}
H(h) = -\sum_{i=1}^{m-1}X_iX_{i+1}
+ \sum_{i=1}^m h_i Z_i,
\end{equation}
which act on the Hilbert space $\H=(\mathbb{C}^2)^{\otimes m}$. We choose $H_0 = -\sum_{i=1}^{m-1} X_iX_{i+1}$, $\V=\R^m$ and define the embedding $\iota:\R^m\to\Lsa(\H)$ through $\iota(e_i)=Z_i$ for $i\in\{1,\ldots,m\}$. The scope is thus given by
\begin{equation}
\S = (\R^m,\H,\iota,H_0).
\end{equation}
The density map is then $\mu(\Gamma)=(\braket{Z_1}_\Gamma, \ldots, \braket{Z_m}_\Gamma) \in \V^*\cong\R^m$.

Since the $Z_i$ all commute, the observable ranges $\Bpure, \Bens$ agree and are both given by the convex hull of the simultaneous eigenvalues (see Proposition~\ref{prop:ranges}\ref{prop:ranges:item:commute}), which in this case are exactly the vectors $(b_1, \ldots, b_m)$ for which $b_i = \pm 1$. Hence, we have $\Bpure = \Bens = [-1,1]^m$, the $m$-dimensional hypercube. 

Although the ground state and ground-state energy of the Hamiltonian in Eq.~\eqref{eq:spin_chain_hamiltonian}
can easily be obtained numerically via a Bogoliubov transform \cite{mbengQuantumIsingChain2024}, it serves as a toy model for developing and understanding functional approximations for more general spin-chain functional theories, with analogs of the local density approximation and generalized gradient expansion developed and investigated in the literature~\cite{maoTestingDensityFunctional2021}.

\subsection{Double qubit}
\label{sec:double-qubit-FT}

We note that the previous example about the observable range of a double qubit system with $\H=\C^2\otimes\C^2$ and observables $\{X_1,Y_1,Z_1,Z_2\}$ from Section~\ref{sec:geometry_double_qubit} is extended in Appendix~\ref{app:2qubit} to include also the calculated functional values at certain points. In particular, this example shows that the ensemble constrained-search functional is strictly smaller than the pure-state constrained-search functional away from the boundary.

\section{Change of scope}
\label{sec:change-of-scope}

Here, we discuss several constructions that modify the scope of a functional theory.
As we will see, the constrained-search functionals associated to the new scope can often be related to the original functionals in simple ways,
allowing us to transfer properties of functionals from one scope to another.

\subsection{Reduction of scope}
\label{sec:reduction-of-scope}

Given a scope $\S = (\V, \H, \iota, H_0)$,
a linear map $f: \V_\mathrm{r} \rightarrow \V$, 
and a fixed potential $v_0 \in \V$, 
we can consider the \textit{reduced scope} 
\begin{equation}
    \S_\mathrm{r} := (\V_\mathrm{r}, \H, \iota\circ f, H_0 + \iota(v_0)).
\end{equation}
Let $f^*: \V^* \rightarrow \V_\mathrm{r}^*$ be the dual of $f$. 
Then, the density map of the reduced functional theory is given by
\begin{equation}
\mu_\mathrm{r} = f^*\circ\mu,
\end{equation}
where $\mu: \densmat(\H) \rightarrow \V^*$ is the density map for $\S$.
This then implies the following
.\begin{proposition}
\label{prop:scope_reduction}
The pure-state observable range of the reduced scope is $\Bpure^{(\S_\mathrm{r})} = f^*(\Bpure^{(\S)})$.
The pure-state constrained-search functional $\Fpure^{(\S_\mathrm{r})}$
is given by
\begin{equation}
\label{eq:scope_reduction}
\Fpure^{(\S_\mathrm{r})}(\rho')
 = \min_{\rho \in (f^*)^{-1}(\rho')} \left\{ \braket{\rho, v_0} + \Fpure^{(\S)}(\rho) \right\}.
 \end{equation}
Analogous results hold for $\Bens^{(\S_\mathrm{r})}$ and $\Fens^{(\S_\mathrm{r})}$.
\begin{proof}
    \begin{equation}
    \begin{aligned}
    \Fpure^{(\S_r)} (\rho')
    &:= \min_{P \in (f^*\circ \mu)|_{\puremf(\H)}^{-1}(\rho')} \trace(P(H_0 + \iota(v_0))\\
    &\;= \min_{\rho \in (f^*)^{-1}(\rho')} \min_{P\in \mu|_{\puremf(\H)}^{-1}(\rho)} 
    \trace(P(H_0 + \iota(v_0))\\
    &\;= \min_{\rho \in (f^*)^{-1}(\rho')} \left\{ \braket{\rho, v_0} + \Fpure^{(\S)}(\rho) \right\}.
    \end{aligned}
    \end{equation}
\end{proof}
\end{proposition}
That is, the value of the new functional at $\rho'\in \V_\mathrm{r}^*$
is nothing but value of the old one minimized over the fiber $(f^*)^{-1}(\rho')$. 

\begin{example}
    For any Hilbert space $\H$, we can construct the
    \textit{tautological scope}, defined as
    \begin{equation}
        \S_{\text{taut}} := (\Lsa(\H), \H, \mathrm{Id}_{\Lsa(\H)}, 0).
    \end{equation} 
    Note that $\Fpure^{(\S_{\text{taut}})}$ identically vanishes.
    Let $\iota: \V \rightarrow \Lsa(\H)$ be any linear map, and let $H_0\in \Lsa(\H)$ be any self-adjoint operator.
    In this case, Eq.~\eqref{eq:scope_reduction} 
    simply becomes the definition of the constrained-search functional for the scope $(\V, \H, \iota, H_0)$.
\end{example}

\begin{example}
    Take $\mathcal{H} = \wedge^N\mathbb{C}^d$ to be the Hilbert space of 
    $N$ spinless fermions on $d$ lattice sites. 
    Let $(\Lsa(\mathbb{C}^d), \H, \iota, H_0)$
    be the RDMFT scope, where $\iota(v) = \sum_{ij=1}^dv_{ij}a_i^* a_j$. 
    Let $f: \mathbb{R}^d\rightarrow \Lsa(\mathbb{C}^d)$
    be the map that sends $(v_1, \dotsb, v_d)$ to $\mathrm{diag}(v_1, \dotsb, v_d)$,
    and let $t\in \Lsa(\mathbb{C}^d)$ be some fixed kinetic operator (e.g. one consisting of hopping terms). Then the reduction with respect to $f$ and $t$
    is the reduction from RDMFT to (lattice) DFT.
\end{example}

\subsection{Purification of ensemble functionals}

\begin{theorem}
\label{thm:ensemble_is_pure}
Let $\S = (\V, \H, \iota, H_0)$ be any scope, 
and let $\Fens^{(\S)}$ be the corresponding ensemble functional. 
Consider the scope $\S' = (\V, \H\otimes \H, \iota', H_0\otimes I_\H)$, where
$\iota'(v) := \iota(v)\otimes I_\H$, and
let $\Fpure^{(\S')}$ denote its pure functional. Then
$\Fens^{(\S)} = \Fpure^{(\S')}$.
\end{theorem}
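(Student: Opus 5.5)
The plan is to identify the density map of $\S'$ restricted to pure states with the density map of $\S$ composed with the partial trace. Then I will show that the two constrained searches range over the same set of objective values. For a normalized $\Phi\in\H\otimes\H$, set $\Gamma_\Phi:=\trace_2 P_\Phi\in\densmat(\H)$, the partial trace over the second factor. For every $v\in\V$ we have
\begin{equation}
\langle\Phi,(\iota(v)\otimes I_\H)\Phi\rangle=\trace(\Gamma_\Phi\,\iota(v)),\qquad \langle\Phi,(H_0\otimes I_\H)\Phi\rangle=\trace(\Gamma_\Phi H_0).
\end{equation}
The first identity says $\mu'(P_\Phi)=\mu(\Gamma_\Phi)$. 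The second says that the objective functions of the two constrained searches agree under $\Phi\mapsto\Gamma_\Phi$.

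The key step is that the map $\Phi\mapsto\Gamma_\Phi$ from unit vectors of $\H\otimes\H$ to $\densmat(\H)$ is \emph{surjective}. This is the standard purification argument. Given $\Gamma=\sum_k p_k\ketbra{\phi_k}$ in spectral form, with $(\phi_k)$ an orthonormal eigenbasis and $p_k\ge0$, $\sum_k p_k=1$, put $\Phi:=\sum_k\sqrt{p_k}\,\phi_k\otimes\phi_k$. Then $\|\Phi\|=1$ and $\trace_2P_\Phi=\Gamma$. This uses that the ancilla is a copy of $\H$ itself, so its dimension $\dim\H$ is at least the rank of any $\Gamma$. This is the only point where the specific choice of $\H\otimes\H$ in $\S'$ matters, and I expect it to be the main point to state carefully.

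With these facts in hand, the proof is short. First, $\Bpure^{(\S')}=\{\mu(\Gamma_\Phi)\}=\mu(\densmat(\H))=\Bens^{(\S)}$ by surjectivity. So both functionals equal $+\infty$ on the same complement and have the same effective domain. Second, for $\rho$ in this common domain,
\begin{equation}
\Fpure^{(\S')}(\rho)=\inf_{\Phi:\,\mu(\Gamma_\Phi)=\rho}\trace(\Gamma_\Phi H_0)=\inf_{\Gamma\in\densmat(\H):\,\mu(\Gamma)=\rho}\trace(\Gamma H_0)=\Fens^{(\S)}(\rho).
\end{equation}
Here the middle equality holds because $\Phi\mapsto\Gamma_\Phi$ maps the constraint set $\{\Phi:\mu'(P_\Phi)=\rho\}$ onto $\{\Gamma:\mu(\Gamma)=\rho\}$. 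The inclusion one way is the first identity above; the other way is surjectivity. The objective values are preserved by the second identity.

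Finally, I will note that the phase ambiguity of $\Phi$ is irrelevant, since everything depends only on $P_\Phi$. By Proposition~\ref{prop:cs-func}\ref{prop:cs-func:item:inf-attained}, both infima are minima, although the argument does not even need this.
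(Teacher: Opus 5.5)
Your proposal is correct and follows essentially the same route as the paper: purification gives $\Bens^{(\S)}=\Bpure^{(\S')}$, and then the two constrained searches are matched through $\Phi\mapsto\trace_2 P_\Phi$. Your explicit spectral purification and the two trace identities spell out the step the paper compresses into ``the same argument backwards'' and ``again follows from the purification of ensemble states.''
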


\begin{proof}
First we show $\dom\Fens^{(\S)}= \dom\Fpure^{(\S')}$. 
In other words, we want to show that $\Bens^{(\S)} = \Bpure^{(\S')}$, where $\Bpure^{(\S')}$ is the pure-state observable range of the functional theory $\S'$. Any ensemble state $\Gamma\in \densmat(\H)$ is the partial trace of a pure state on $\H\otimes \H$, i.e., $\Gamma = \trace_2(P_{\tilde \Psi})$ for some normalized $\tilde \Psi \in \H\otimes \H$ (this is the usual purification of quantum states~\cite[Sec.~2.5]{nielsen-chuang-book}). So $\trace(\iota(v)\Gamma) = \braket{\tilde\Psi,  (\iota(v)\otimes I)\tilde\Psi}$, which shows $\Bens^{(\S)}\subseteq \Bpure^{(\S')}$. By the same argument backwards, we have $\Bpure^{(\S')} \subseteq \Bens^{(\S)}$.
Take any $\rho \in \Bens^{(\S)} = \Bpure^{(\S')}$. Then
\begin{equation}
\begin{aligned}
\Fens^{(\S)}(\rho) &= \min\{\trace(\Gamma H_0) : \mu(\Gamma) = \rho\}\\
&=\min \{\braket{\tilde\Psi, (H_0\otimes I_\H)\tilde\Psi} :\tilde \Psi \in \H\otimes \H, \forall v\in \V: \braket{\tilde\Psi, \iota(v)\otimes \tilde\Psi} = \braket{\rho, v} \}= \Fpure^{(\S')}(\rho),
\end{aligned}
\end{equation}
where the second equality again follows from the purification of ensemble states.
\end{proof}

\subsection{\texorpdfstring{Purification of $\bm{w}$-ensemble functionals}{Purification of w-ensemble functionals}}


Finally, we turn to the so-called $w$-ensemble functional theories \cite{grossDensityfunctionalTheoryEnsembles1988,SP21,LCLS21,Gould2026-Ensemblization}, which allow one to target excited states through ensembles with prescribed spectral weights. We show that they are naturally encompassed by our framework through a purification construction. To this end, let $\S=(\V,\H,\iota,H_0)$ be a scope with $\H\cong\C^d$, and let $w=(w_1,\ldots,w_d)\in\mathbb{R}^d$ be a nonincreasing sequence of auxiliary weights satisfying $\sum_{i=1}^{d}w_i=1$ and $w_i\ge0$.

\begin{definition}
The $w$-ensemble functional is
\begin{equation}\begin{aligned}
\mathcal{F}_w(\rho)
 := \min \{\trace(\Gamma H_0) :\; 
 &\Gamma \in \densmat(\H),\; \densmap(\Gamma) = \rho,\;
 \specdown(\Gamma) = w
 \},
\end{aligned}\end{equation}
where $\specdown(\Gamma)$ denotes the spectrum of $\Gamma$ ordered nonincreasingly.
\end{definition}

The utility of the $w$-ensemble constrained-search functional is the following.
\begin{proposition}
Let $v\in \V$ be any potential, 
and let $E_1\le E_2 \le \ldots \le E_d$ be the eigenvalues 
of the Hamitonian $H(v) = \iota(v) + H_0$, ordered nondecreasingly.
Then
\begin{equation}
\sum_{i=1}^d w_i E_i =  
\min_{\rho} \left\{\braket{\rho, v} + \Fw(\rho)\right\}.
\end{equation}
\end{proposition}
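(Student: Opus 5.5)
The plan is to reduce the statement to the classical Ky Fan / Gross--Oliveira--Kohn variational principle for weighted sums of eigenvalues. For any density matrix $\Gamma$ with $\densmap(\Gamma)=\rho$, Eq.~\eqref{eq:Hv-exp} gives $\trace(H(v)\Gamma)=\trace(H_0\Gamma)+\braket{\rho,v}$. Minimizing first over the fiber $\{\Gamma:\densmap(\Gamma)=\rho,\ \specdown(\Gamma)=w\}$ and then over $\rho$ therefore yields
\begin{equation*}
\min_{\rho}\left\{\braket{\rho,v}+\Fw(\rho)\right\}=\min\left\{\trace(H(v)\Gamma):\Gamma\in\densmat(\H),\ \specdown(\Gamma)=w\right\}.
\end{equation*}
This is the same splitting of the minimization that was used in Proposition~\ref{prop:scope_reduction}. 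The minima are attained because the isospectral set $\{\Gamma:\specdown(\Gamma)=w\}$ is compact, being the unitary orbit of $\mathrm{diag}(w)$. For $\rho$ outside $\densmap$ of this orbit we set $\Fw=+\infty$, so those $\rho$ do not contribute.

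It then remains to show that $\min_{U\in\mathrm{U}(\H)}\trace(H(v)U\,\mathrm{diag}(w)U^*)=\sum_i w_iE_i$.

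For the upper bound, choose an orthonormal eigenbasis $\Phi_i$ of $H(v)$ with $H(v)\Phi_i=E_i\Phi_i$ and set $\Gamma=\sum_i w_i\ketbra{\Phi_i}$. Then $\trace(H(v)\Gamma)=\sum_i w_iE_i$.

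For the lower bound, write any admissible $\Gamma=\sum_j w_j\ketbra{\chi_j}$ with $(\chi_j)_j$ orthonormal. Then
\begin{equation*}
\trace(H(v)\Gamma)=\sum_j w_j\braket{\chi_j,H(v)\chi_j}.
\end{equation*}
Apply Abel summation with $w_{d+1}:=0$:
\begin{equation*}
\sum_j w_j a_j=\sum_{k=1}^d (w_k-w_{k+1})\sum_{j\le k}a_j, \qquad a_j:=\braket{\chi_j,H(v)\chi_j}.
\end{equation*}
Because $w$ is nonincreasing, each coefficient satisfies $w_k-w_{k+1}\ge0$. By Ky Fan's minimum principle, which follows from Courant--Fischer, $\sum_{j\le k}\braket{\chi_j,H(v)\chi_j}\ge\sum_{j\le k}E_j$ for any orthonormal family. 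Summing back gives $\trace(H(v)\Gamma)\ge\sum_i w_iE_i$.

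The only step requiring care is this lower bound, specifically the Ky Fan inequality for partial traces. I would either cite it directly or derive it quickly: $\sum_{j\le k}\braket{\chi_j,H\chi_j}=\trace(QH)$ for the rank-$k$ projector $Q$ onto $\mathrm{span}\{\chi_1,\ldots,\chi_k\}$, and $\trace(QH)$ is minimized over rank-$k$ projectors by the spectral projector onto the $k$ lowest eigenvectors. The monotonicity assumption on $w$ is used exactly to make the Abel coefficients nonnegative.
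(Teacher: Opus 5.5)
Your proposal is correct and follows the argument the paper has in mind: the paper's proof only invokes the Gross--Oliveira--Kohn $w$-ensemble variational principle, which, once you split the minimization over the fibers of $\densmap$, is exactly the identity $\min\{\trace(H(v)\Gamma):\Gamma\in\densmat(\H),\ \specdown(\Gamma)=w\}=\sum_i w_iE_i$. The one difference is that you prove this principle yourself via Abel summation and Ky Fan's inequality instead of citing it. That proof is sound; the only small correction is that the last Abel coefficient $w_d-w_{d+1}=w_d$ is nonnegative because of the hypothesis $w_i\ge 0$, not because of monotonicity, and in any case it multiplies $\trace H(v)=\sum_i E_i$, so that term is an equality anyway.
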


The proof is a simple application of the $w$-ensemble variational principle
\cite{grossDensityfunctionalTheoryEnsembles1988,dingGroundExcitedStates2024}.
We will show that $\Fw$ can be thought of as a `slice' of the pure functional of a suitably enlarged scope.

\begin{theorem}
Consider the enlarged scope
\begin{equation}
 \S' = 
 \left(\V\oplus \Lsa(\H), \H\otimes \H, \iota', H_0\otimes I_{\H}\right),
\end{equation}
where $\iota'(v, A) = \iota(v)\otimes I_{\H} + I_{\H}\otimes A $.
We can canonically identify
\begin{equation}
(\V\oplus\Lsa(\H))^* \cong \V^* \oplus \Lsa(\H)^* \cong \V^* \oplus \Lsa(\H).
\end{equation}
That is,
a density $\tau$ in 
the functional theory defined by the scope $\S'$ can be written as
$\tau = (\rho, \nu)$, where $\rho \in \V^*$
and $\nu\in \Lsa(\H)$. 
Then for any ensemble state $\nu\in \densmat(\H)$ it holds with $w=\specdown(\nu)$ that
\begin{equation}
\mathcal{F}_{w}^{(\S)}(\rho)
 = \Fpure^{(\S')}(\rho, \nu).
\end{equation}
\end{theorem}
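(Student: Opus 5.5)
The plan is to compute the density map of the enlarged scope $\S'$ explicitly in terms of the two partial traces of a pure state, and then use the Schmidt decomposition to match the constraint sets of the two minimization problems. Take a normalized $\tilde\Psi\in\H\otimes\H$ and write $\Gamma_1:=\trace_2 P_{\tilde\Psi}$ and $\Gamma_2:=\trace_1 P_{\tilde\Psi}$, both in $\densmat(\H)$. For $(v,A)\in\V\oplus\Lsa(\H)$ we have
\begin{equation*}
\braket{\tilde\Psi,\iota'(v,A)\tilde\Psi}=\trace(\Gamma_1\iota(v))+\trace(\Gamma_2A)=\braket{\densmap(\Gamma_1),v}+\trace(\Gamma_2A).
\end{equation*}
We identify $\Lsa(\H)^*\cong\Lsa(\H)$ through the Hilbert--Schmidt pairing $A\mapsto\trace(\nu A)$. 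Under this identification the $\S'$-density of $P_{\tilde\Psi}$ is exactly $(\densmap(\Gamma_1),\Gamma_2)$. In the same way, the objective is $\braket{\tilde\Psi,(H_0\otimes I_\H)\tilde\Psi}=\trace(\Gamma_1H_0)$. This gives
\begin{equation*}
\Fpure^{(\S')}(\rho,\nu)=\min\{\trace(\Gamma_1H_0):\ \tilde\Psi\ \text{normalized},\ \densmap(\trace_2P_{\tilde\Psi})=\rho,\ \trace_1P_{\tilde\Psi}=\nu\},
\end{equation*}
with value $+\infty$ if the set is empty. The task is therefore to show that, as $\tilde\Psi$ ranges over the constraint set, $\Gamma_1=\trace_2P_{\tilde\Psi}$ ranges over exactly $\{\Gamma\in\densmat(\H):\specdown(\Gamma)=w\}$.

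For the inclusion $\subseteq$, I would invoke the Schmidt decomposition $\tilde\Psi=\sum_i\sqrt{p_i}\,\phi_i\otimes\chi_i$ with orthonormal families $(\phi_i)$ and $(\chi_i)$. It gives $\Gamma_1=\sum_ip_i\ketbra{\phi_i}$ and $\Gamma_2=\sum_ip_i\ketbra{\chi_i}$. Both factors have dimension $d$, so the two reduced states have the same spectrum including zero multiplicities. Hence $\Gamma_2=\nu$ forces $\specdown(\Gamma_1)=\specdown(\nu)=w$.

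For $\supseteq$, take any $\Gamma$ with $\densmap(\Gamma)=\rho$ and $\specdown(\Gamma)=w$. Diagonalize $\Gamma=\sum_iw_i\ketbra{\phi_i}$ and $\nu=\sum_iw_i\ketbra{\chi_i}$ in orthonormal eigenbases ordered by the common nonincreasing spectrum $w$. Then set $\tilde\Psi:=\sum_i\sqrt{w_i}\,\phi_i\otimes\chi_i$. This vector is normalized since $\sum_iw_i=1$, and a direct computation gives $\trace_2P_{\tilde\Psi}=\Gamma$ and $\trace_1P_{\tilde\Psi}=\nu$. So $\tilde\Psi$ is feasible, and its objective value is $\trace(\Gamma H_0)$.

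The two minimization problems therefore have the same set of attainable objective values, which proves $\Fw^{(\S)}(\rho)=\Fpure^{(\S')}(\rho,\nu)$. In particular the feasible set of one is empty exactly when that of the other is, so both sides equal $+\infty$ together, reading $\min\emptyset=+\infty$ in the definition of $\Fw$. Attainment of the minima follows as in Proposition~\ref{prop:cs-func}\ref{prop:cs-func:item:inf-attained}, since both constraint sets are compact.

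The only delicate point I anticipate is bookkeeping rather than substance. I must make sure the dual identification of $\Lsa(\H)^*$ with $\Lsa(\H)$ is the trace pairing, so that the second density component really is $\trace_1P_{\tilde\Psi}$ and not its transpose or complex conjugate. I must also make sure zero eigenvalues are tracked, which is why the equal dimensions of the two tensor factors matter for matching $\specdown$ exactly.
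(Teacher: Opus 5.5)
Your proof is correct and follows the same route as the paper. You write the $\S'$-density of $P_{\tilde\Psi}$ as $(\densmap(\trace_2 P_{\tilde\Psi}),\trace_1 P_{\tilde\Psi})$, use the fact that the two partial traces share a spectrum to reduce to the $w$-ensemble search over $\Gamma=\trace_2 P_{\tilde\Psi}$, and your explicit purification $\tilde\Psi=\sum_i\sqrt{w_i}\,\phi_i\otimes\chi_i$ with second marginal exactly $\nu$ supplies the justification the paper leaves implicit when it replaces the constraint $\trace_1(P_\Psi)=\nu$ by $\specdown(\trace_2(P_\Psi))=\specdown(\nu)$.
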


\begin{proof}
\begin{equation}
\begin{aligned}
\Fpure^{(\S')}(\rho, \nu)
&= 
\min\Big\{\braket{\Psi,(H_0\otimes I)\Psi}
 : \Psi \in \H\otimes \H,\mu(\trace_2(P_\Psi)) = \rho, \trace_1(P_\Psi) = \nu
\Big\}\\
&= 
\min\Big\{\braket{\Psi,(H_0\otimes I)\Psi}
: \Psi \in \H\otimes \H,\mu(\trace_2(P_\Psi)) = \rho, \specdown(\trace_2(P_\Psi)) = \specdown(\nu)
\Big\}\\
&=\min\Big\{\trace(\Gamma H_0)
: \Gamma \in \densmat(\H),\densmap(\Gamma) = \rho,\specdown(\Gamma) = \specdown(\nu)
\Big\} = \Fw^{(\S)}(\rho),
\end{aligned}
\end{equation}
where in the second equality we used the fact that the two partial traces of a pure state in a bipartite
system share the same spectrum.
\end{proof}

This result, like Theorem~\ref{thm:ensemble_is_pure}, is very useful, since it shows that statements about the pure functional could be transferred to the $w$-ensemble functional.

\section{Symplectic-geometric perspective}
\label{sec:Lie}

The general notion of a functional theory introduced in Section~\ref{sec:def} is formulated at an abstract level in terms of the tuple $(\mathcal{V},\H,\iota,H_0)$, the density map $\mu$ of Eq.~\eqref{eq:density-map}, and the functionals of Eqs.~\eqref{eq:def-Fpure}--\eqref{eq:E-def}. In many cases of interest, however, the observable space $\iota(\V)$ carries additional Lie-algebraic structure. In this section, we consider the case in which this structure is induced by a unitary representation of a Lie group $K$. Then the density map admits a moment map interpretation, and the abstract map $\iota$ is the self-adjoint form of the corresponding infinitesimal Lie algebra representation. The main point is that different functional theories arise from different choices of $K$ and its representation, while the resulting reduced variables, constrained-search formulations, and representability questions are governed by the same moment map framework. Section~\ref{sec:Lie-gen} introduces the general formalism, Section~\ref{sec:Lie-consequences} discusses its consequences for functional theories, and Section~\ref{sec:Lie-examples} applies the framework to lattice DFT, RDMFT, and symmetry-adapted variants relevant to electronic structure theory.

The identification of the density map with the moment map is important because
it allows one to use the standard results from the theory of moment maps, such
as equivariance, fiber structure, and convexity theorems, in the analysis of
functional theories (see Section~\ref{sec:Lie-consequences} for more details).

\subsection{General formalism \label{sec:Lie-gen}}

As in the previous sections, we work with a finite-dimensional Hilbert space $\H$. 
Recall that the tangent space to the manifold $\puremf(\H)$ of pure states 
at $P\in \puremf(\H)$ is given by $T_P \puremf(\H) = \{[S, P] : S \in \rmi \Lsa\}$. 
The \textit{Fubini--Study symplectic form} $\omega^{\text{FS}}$ on $\puremf(\H)$ is defined as~\cite{Kibble79, AS99, BH01}
\begin{equation}
    \omega_P^{\text{FS}} ([S, P], [T, P] ) =
    \rmi\trace(P[S,T]).
\end{equation}
This is well-defined: if we have $[S,P] = [S',P]$, then $\trace(P[S',T]) = \trace([P,S']T) = \trace([P,S]T) = \trace(P[S,T])$. The Fubini--Study symplectic form is relevant because it turns the group action of a Lie group $K$ on pure states into a Hamiltonian group action, whose moment map computes precisely the reduced variable defined by the expectation values of the chosen observables, as will be discussed next.

With respect to the Fubini--Study symplectic form, the natural $K$-action on $\puremf(\H)$ 
admits a moment map $\mu_0: M\rightarrow \mathfrak{k}^*, P \mapsto (S\mapsto \rmi \trace(PS))$, where $\mathfrak{k}$ is the Lie algebra of the Lie group $K$. For example, for $K=\mathrm{U}(\H)$ it consists of skew-adjoint operators on $\H$ (see Appendix~\ref{app:symplectic} for definitions). Since the moment map encodes the group action in terms of expectation value functions, it relates directly to the density map introduced in Eq.~\eqref{eq:density-map}. The moment map adds additional structure to functional theories, which we will explain in Section~\ref{sec:Lie-consequences}.

To connect this to the general framework introduced in Section~\ref{sec:def}, take 
a Lie subgroup $K\subseteq \mathrm{U}(\mathcal{H})$, and set $\V := \rmi \mathfrak{k}$,
where $\mathfrak{k}\subseteq \mathfrak{u}(\mathcal{H})$ is the Lie algebra of $K$. 
Recall that the density map $\mu: \puremf(\H)\rightarrow \V^*$ is defined by $\braket{\mu(P), v} = \trace(P\iota(v))$ for $P\in \mathbbm{P}(\H)$ and all $v\in\V$.
We can identify the space of densities $\V^*$ with the dual of the Lie algebra $\mathfrak{k}^*$ via the isomorphism $\V^* = (\rmi \mathfrak{k})^* \rightarrow \mathfrak{k}^*, \rho \mapsto \rmi \rho$.
The following is then a direct consequence of Lemma~\ref{lem:mom_map_composition} in Appendix~\ref{app:symplectic} (see also Ref.~\cite{rosensteelNonAbelianDensityFunctional1998}).

\begin{theorem}
\label{thm:densmapismommap}
The density map $\mu: \puremf(\H)\rightarrow \V^* = (\rmi \mathfrak{k})^* \cong \mathfrak{k}^*$ is a moment map for the $K$-action
on $\puremf(\H)$.
\end{theorem}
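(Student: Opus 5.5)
Our plan is to verify directly the two defining properties of a moment map for the $K$-action $k\cdot P = kPk^{-1}$ on $(\puremf(\H),\omega^{\text{FS}})$. The first is the Hamiltonian condition: for each $X\in\mathfrak{k}$ the component function $\mu^X := \langle \mu(\cdot), X\rangle$ (after the identification $\V^*\cong\mathfrak{k}^*$, $\rho\mapsto \rmi\rho$) satisfies $\rmd\mu^X = \iota_{X^\sharp}\omega^{\text{FS}}$, possibly up to the sign convention fixed in Appendix~\ref{app:symplectic}. Here $X^\sharp$ is the fundamental vector field. The second is $K$-equivariance, $\mu(kPk^{-1}) = \Ad^*_k\mu(P)$. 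Throughout we take $\iota$ to be the inclusion $\rmi\mathfrak{k}\hookrightarrow\Lsa(\H)$, so that $\V=\rmi\mathfrak{k}$.

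Alternatively, one can reduce everything to the full unitary group. First we check that $\mu_0(P)=(S\mapsto \rmi\trace(PS))$ is a moment map for $\mathrm{U}(\H)$. Then we invoke Lemma~\ref{lem:mom_map_composition}, which says that composing a moment map with the dual $\mathfrak{u}(\H)^*\to\mathfrak{k}^*$ of the inclusion $\mathfrak{k}\hookrightarrow\mathfrak{u}(\H)$ yields a moment map for the subgroup. It then suffices to check that this restriction, transported through $\rho\mapsto\rmi\rho$, coincides with $\mu$. Writing $X=\rmi v$ with $v\in\V$, we have $\rmi\trace(PX) = -\trace(Pv) = -\langle\mu(P),v\rangle$. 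This agrees with $\langle \rmi\mu(P), X\rangle$ under the natural pairing $\langle\rmi\rho,\rmi v\rangle$ up to the sign convention. We will fix the convention of the appendix so that the signs match.

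For the core computation, the fundamental vector field is
\begin{equation}
X^\sharp_P = \frac{\rmd}{\rmd t}e^{tX}Pe^{-tX}\Big|_{t=0} = [X,P].
\end{equation}
Take a tangent vector $[T,P]$ with $T\in\rmi\Lsa(\H)$. Then
\begin{equation}
\rmd_P\mu_0^X([T,P]) = \rmi\trace([T,P]X) = \rmi\trace(P[X,T]) = \omega^{\text{FS}}_P([X,P],[T,P]),
\end{equation}
using cyclicity of the trace. This is exactly the Hamiltonian condition. Equivariance follows because $\rmi\trace(kPk^{-1}X)=\rmi\trace(P\,k^{-1}Xk)=\langle\mu_0(P),\Ad_{k^{-1}}X\rangle=\langle\Ad^*_k\mu_0(P),X\rangle$.

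No step is conceptually hard. The main obstacle is bookkeeping of signs and factors of $\rmi$: the identification $\V^*\cong\mathfrak{k}^*$, the orientation of the fundamental vector field ($[X,P]$ versus $[P,X]$), and the sign in $\rmd\mu^X=\pm\iota_{X^\sharp}\omega$ must all be consistent. We will resolve this by adopting the appendix's conventions verbatim and checking the single identity above against them.
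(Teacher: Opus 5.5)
Your proposal is correct and is essentially the paper's argument: the paper obtains the theorem directly from Lemma~\ref{lem:mom_map_composition}, applied to the inclusion $K\hookrightarrow\mathrm{U}(\H)$ and the moment map $\mu_0$, and your computation $\rmd_P\mu_0^X([T,P])=\rmi\trace(P[X,T])=\omega^{\text{FS}}_P([X,P],[T,P])$ together with the equivariance check supplies the verification for $\mu_0$ that the paper takes for granted. No sign hedging is needed, for three reasons: reading $\rho\mapsto\rmi\rho$ as $(\rmi\rho)(X)=\rho(\rmi X)$ gives an exact match with $\mu_0|_{\mathfrak{k}}$; your identity is precisely the appendix's condition $\sgrad\circ\Phi=\phi$ with fundamental field $[X,P]$; and the homomorphism property required of the comoment map follows from $\{\Phi(S),\Phi(T)\}(P)=\omega^{\text{FS}}_P([S,P],[T,P])=\rmi\trace(P[S,T])$, or equally by differentiating your equivariance identity.
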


In other words, whenever the space of observables forms a Lie algebra (up to $\rmi$), the pure-state density map is a moment map (up to identifying $(\rmi\mathfrak{k})^* \cong \mathfrak{k}^*$). Consequently, the Lie-algebraic moment map perspective provides a unified framework: structural results can be proved once for a Hamiltonian $K$-action and then specialized to a particular functional theory by choosing the group $K$, with Theorem~\ref{thm:densmapismommap} identifying the resulting moment map with the corresponding pure-state density map, and hence with the reduced variable of that theory. This will be illustrated for the chemically relevant settings of lattice DFT, RDMFT and symmetry-adapted variants in Section~\ref{sec:Lie-examples}. 

Let $K$ be a Lie group and $b:K\to\mathrm{U}(\H)$ a Lie group homomorphism,
with $b_*: \mathfrak{k} \rightarrow \mathfrak{u}(\H)$
denoting the corresponding Lie algebra representation. 
The representation $b_*$ specifies how these generators act on a general Hilbert space $\H$. In applications, different choices of $b_*$ allow the same Lie algebra to act, for example, on a one-particle Hilbert space $\H_1$ or on the full $N$-particle Hilbert space $\H$.
Then, the map $\iota$ in Definition~\ref{def:scope} is nothing else than a self-adjoint variant of the infinitesimal Lie algebra representation $b_*$,
\begin{equation}\label{eq:iota-Lie}
    \iota:\rmi \mathfrak k\to\rmi\mathfrak u(\H)\cong\mathcal{L}_{\rm sa}(\H),\quad \iota(v)=\mathrm{i} b_*(-\rmi v),
\end{equation}
which further clarifies the meaning of the map $\iota$ in the context of functional theories derived from a Lie group $K$. 
In the electronic structure examples in Section~\ref{sec:Lie-examples}, $\mathfrak{k}$ is typically the Lie algebra of a one-particle unitary group, or of a subgroup thereof, and the representation $b$ is the induced action on the $N$-particle Hilbert space. Hence, $\iota$ has a direct physical interpretation: it assigns to each chosen one-particle or symmetry generator the corresponding self-adjoint many-body observable whose expectation value enters the reduced variable of the functional theory.
Together with an arbitrary but fixed $H_0\in \mathcal{L}_{\mathrm{sa}}(\mathcal{H})$, Eq.~\eqref{eq:iota-Lie} defines a scope $(\rmi \mathfrak{k}, \H, \iota, H_0)$.
For this scope, the corresponding density map is given by 
\begin{equation}
    \langle\mu_K(\Gamma),v\rangle = \trace(\Gamma\iota(v)),\quad v\in\rmi\mathfrak k.
\end{equation}
Its restriction to pure states, $\mu_K:\mathbbm{P}(\H)\to(\rm i\mathfrak k)^*$, is a moment map for the $K$-action on $\mathbbm{P}(\H)$, as described by Theorem~\ref{thm:densmapismommap}. On $\densmat(\H)$, the same
formula gives the affine extension of this moment map.

Furthermore, in this case, there is a convenient geometric description of the kernel of the derivative of the density map in terms of the symplectic structure and the $K$-action. Namely, for any pure state $P\in \puremf(\H)$, it holds that~\cite{CdS01}
\begin{equation}
\label{eq:Lie-fiber-tangent}
\ker(\rmd _P\mu) = 
(T_P (K\cdot P))^{\omega^{\mathrm{FS}}},
\end{equation}
where $K\cdot P$ denotes the $K$-orbit through $P$ and $(\cdot)^{\omega^{\mathrm{FS}} }$ denotes the symplectic complement. 
If $\rho \in \Bpure = \mu(\puremf(\H))$ is a regular value, then Eq.~\eqref{eq:Lie-fiber-tangent}
gives a geometric characterization of the tangent space to the constrained manifold $\mu^{-1}(\rho)$,
which could be useful for carrying out the pure-state constrained  search.

\subsection{Consequences for functional theories \label{sec:Lie-consequences}}

The fact that the density map $\mu: \puremf(\H)\rightarrow \V^*$ is a moment
map adds several structural ingredients that are useful both conceptually and
practically. 
First, the reduced variable is no longer just an arbitrary collection of expectation
values: it is a natural geometric object attached to the action of $K$ on the pure-state manifold. In particular, the moment map is equivariant (see Lemma~\ref{lem:moment-map-equivariance}),
\begin{equation}\label{eq:Lie-equivariance}
\mu(k\cdot P)=\Ad_k^*\mu(P),\quad k\in K,\quad P\in \puremf(\H),
\end{equation}
where $\operatorname{Ad}_k^*$ denotes the coadjoint action of $K$ on $\V^*= \rmi \mathfrak k^*$. 
An important implication of equivariance is that pure state representability of the 
reduced variable $\rho\in\V^*$ is invariant under the coadjoint action. 
Indeed, if $\rho=\mu(P)$ for some $P\in\puremf(\H)$, then $\Ad_k^*\rho=\mu(k\cdot P)$ is again pure-state representable for every $k\in K$. If, in addition, the universal part $H_0$ is $K$-invariant, then the constrained-search functional is constant along these orbits as well. 
In RDMFT, for example, the first statement 
corresponds to the basic fact that the $N$-representability of a 1RDM depends only on its spectrum (see Example~\ref{example:fermion_rdmft} below), 
even though the universal functionals themselves need not be invariant under orbital rotations, unless
more symmetry is present.

More importantly, due to Theorem~\ref{thm:densmapismommap}, well-developed tools from symplectic geometry are directly applicable to the $N$-representability problem. 
Examples of such results are the convexity theorems (Theorem~\ref{thm:atiyah}
and Theorem~\ref{thm:kirwan} in Appendix~\ref{app:symplectic}).  
In the abelian case, Theorem~\ref{thm:atiyah} implies that $\mu(\puremf(\H)) = \Bpure$, the domain of the pure-state constrained-search functional, is a convex polytope, which is consistent with
Proposition~\ref{prop:ranges}\ref{prop:ranges:item:commute}. 
For a general compact connected Lie subgroup $K\subseteq \mathrm{U}(\H)$, the pure-state observable range $\Bpure\subseteq \V^*$ need not be convex. 
However, after choosing a Cartan subalgebra $\mathfrak{t}\subseteq \mathfrak{k}$ and a positive Weyl chamber $\mathfrak{t}_+^*\subseteq\mathfrak{t}^*$, the intersection
\begin{equation}\label{eq:Lie-moment-polytope}
\Delta:=\mu(\puremf(\H))\cap\mathfrak t_+^*
\end{equation}
\textit{is} a convex polytope by Kirwan's theorem (Theorem~\ref{thm:kirwan}). 

\begin{example}
\label{example:fermion_rdmft}
In fermionic RDMFT (see also Section~\ref{sec:Lie-examples}),
the facet-defining inequalities of $\Delta$ are the so-called
\textit{generalized Pauli constraints} and their symmetry-adapted analogues \cite{klyachkoQuantumMarginalProblem2006,AK08, klyachkoPauliExclusionPrinciple2009,SGC13,Schilling2014,LLAMOS25}.
\end{example}

Finally, we remark that even if the space of potentials is not a Lie algebra,
the techniques introduced in this section could still be relevant. To illustrate this, let
$\S = (\V, \H, \iota, H_0)$ with $\V=\rmi\mathfrak{k}$ be a scope, where
$\mathfrak{k}\subseteq\mathfrak{u}(\mathcal{H})$ is a Lie subalgebra as above, and suppose
$\V_{\rm r}\subseteq \V$ is any linear subspace. 
As discussed in Section~\ref{sec:reduction-of-scope}, the density map for $\V_{\rm r}$
is simply $\mu_{\rm r}=f^* \circ \mu$, where $f^*: \V^* \rightarrow \V_{\rm r}^*$
is the dual map of the inclusion and $\mu$ is the moment map for the $K$-action.
Because $\V_{\rm r}$ is not necessarily a Lie algebra, we cannot always interpret $\mu_{\rm r}$
as a moment map. Consequently, the standard equivariance and convexity theorems apply  only to $\mu$, not to $\mu_{\rm r}$.
Nonetheless, even in that situation, the finer moment map geometry still provides useful indirect information because the observable range for the projected variable is
$\mathcal{B}_p^{(\S_{\rm r})}=\mu_{\rm r}(\puremf(\H))=f^*(\Bpure^{(\S)})$.
Hence, every exact description of $\Bpure^{(\S)}$ and, more generally, every outer approximation $C\supseteq \Bpure^{(\S)}$, immediately yields necessary representability constraints for the projected variable through $\Bpure^{(\S_{\rm r})}\subseteq f^*(C)$ following Proposition~\ref{prop:scope_reduction}.

\subsection{Examples for Lie-algebra functional theories}
\label{sec:Lie-examples}

The general formalism introduced in Section~\ref{sec:Lie-gen} specializes to several Lie algebra settings that already appear in Sections~\ref{sec:ex-geom} and \ref{sec:ex-ft}. 
The three cases below illustrate the mechanisms that are most relevant in electronic structure theory: an abelian subgroup, the full one-body moment map, and spin-symmetry adapted settings. 

\subsubsection{Abelian subgroup: Lattice DFT\label{sec:symplectic:lattice-dft}}

Consider $N$ spinless fermions on $d$ lattice sites with Hilbert space $\H_N = \wedge^N \H_1$, 
where $\H_1 = \mathbb{C}^d$ is the one-particle Hilbert space.
Denote by $n_i=a_i^*a_i$ the number operator at site $i\in\{1,2,\ldots, d\}$.
The action of a group element $g\in \mathrm{U}(\H_1)$ on the anti-symmetric $N$-particle Hilbert space $\H$ is described by the exterior-power representation
\begin{equation}\label{eq:ex-power-rep}
    b:\mathrm U(\H_1)\to \mathrm U(\H_N),\quad b(g)=\wedge^N g =\left. g^{\otimes N}\right|_{\wedge^N\H_1},
\end{equation}
with the corresponding Lie algebra homomorphism $b_*: \mathfrak{u}(\H_1) \rightarrow \mathfrak{u}(\H_N)$. 

As we will see next, lattice DFT emerges from the general Lie algebraic structure provided by the moment map by choosing the subgroup $K$ to be certain maximal torus. 
To be more precise, let $T = \mathrm{U}(1)^d\subseteq \mathrm{U}(d) = \mathrm{U}(\H_1)$ be the maximal torus of diagonal unitary matrices, and let  $j:T\to \mathrm{U}(\H_1)$ denote the inclusion. We consider the scope $(\rmi\mathfrak t,\H_N, \iota,H_0)$ with $\iota$ being the self-adjoint variant of $b_*\circ j_*$ as in Eq.~\eqref{eq:iota-Lie}. 
Then, after composing the map $\iota$ with the mapping to $N$-body operators from Eq.~\eqref{eq:N-body-map}, a real vector $v=(v_1,\dotsc,v_d)$ is mapped to the lattice potential
\begin{equation}
\iota(v)=\sum_{i=1}^d v_i n_i.
\end{equation}
With this notation, the induced $T$-action on $\puremf(\H_N)$ admits a moment map obtained by pairing states with the observables $\iota(v)$. Thus,
\begin{equation}
 \mu_T(P)= \bigl( \trace(Pn_1),\dotsc,\trace(Pn_d)  \bigr), \quad P\in\puremf(\H_N).
\end{equation}
Since $\sum_{i=1}^d n_i=N I_{\H_N}$ we have $\sum_{i=1}^d \trace(Pn_i)=N$. 
Thus, $\mu_T(P)$ is exactly the per-site lattice density, more precisely the vector of site-occupation numbers, already discussed in Secs.~\ref{sec:lattice-geom}. Since the operators $n_1,\dotsc,n_d$ commute, Proposition~\ref{prop:ranges}\ref{prop:ranges:item:commute} implies $\Bpure=\Bens$, and the observable range is the convex hull of the occupation vectors of the site basis. 

\subsubsection{Full one-body unitary group: RDMFT\label{sec:LieRDMFT}}

We now show how choosing a different Lie group for $K$ than the maximal torus leads to RDMFT instead of lattice DFT. This already illustrates one of the key messages of this section, namely that different functional theories can be recovered from their same parent symplectic-geometric structure simply by choosing different Lie groups that enter the moment map in Theorem~\ref{thm:densmapismommap}. Let $\H_1\cong\CC^d$ and let $\H_N =\wedge^N\H_1$ be the corresponding $N$-particle Hilbert space as in Section~\ref{sec:symplectic:lattice-dft}.

Take $K=\mathrm{U}(\H_1)$, which acts on the $N$-particle Hilbert space $\H_N$ through the exterior-power representation $b$ in Eq.~\eqref{eq:ex-power-rep}. With the same self-adjoint convention relating $\iota$ to $b_*$ as in Eq.~\eqref{eq:iota-Lie}, we consider the scope $(\rmi \mathfrak{u}(\H_1),\H_N, \iota,H_0)$. Thus, the two scopes of RDMFT and lattice DFT only differ in the (self-adjoint) Lie algebra $\rmi \mathfrak{u}(\H_1)$.
The associated moment map
\begin{equation}
    \mu_{\mathrm{U}(\H_1)}:\puremf(\H_N)\to \left(\rm\rmi\mathfrak{u}(\H_1)\right)^*
\end{equation}
is characterized by
\begin{equation}
    \langle \mu_{\mathrm{U}(\H_1)}(P),A\rangle = \trace_{\H}\bigl(P\iota(A)\bigr)  = \trace_{\H_1}(\gamma_P A),
\end{equation}
where $P\in\puremf(\H)$, $A\in\Lsa(\H_1)= \rmi\mathfrak{u}(\H_1)$ and $\gamma_P=N\trace_{N-1}(P)$ is the full 1RDM of $P$. Thus, by the trace pairing, the value of the moment map at a pure state $P\in\puremf(\H_N)$ is exactly its 1RDM $\gamma_P$.

This is the Lie algebra formulation of the RDMFT setting discussed in Section~\ref{sec:RDMFT}. Passing to ordered eigenvalues, equivalently to the intersection of the moment map image with a positive Weyl chamber, the pure-state spectral $N$-representability problem for the 1RDM is therefore equivalent to determining the moment (Kirwan) polytope for the $\mathrm{U}(\H_1)$-action on $\mathbbm{P}(\H_N)$~\cite{AK08} (see also Theorem~\ref{thm:kirwan}). Equivalently, one may use the $\mathrm{SU}(\H_1)$-action, in which case the moment map is shifted by the fixed trace part. Its defining linear inequalities consist of the Pauli exclusion principle, the trace condition, and the additional generalized Pauli constraints (see Example~\ref{example:fermion_rdmft}).

\subsubsection{Spin-adapted and product group variants \label{sec:symplectic:spin}}

In the following, we consider identical particles with internal degrees of freedom (e.g., spin), and restrict the discussion to spin-$\tfrac{1}{2}$ fermions. 
The one-particle Hilbert space is $\H_1 = \H_\ell\otimes \H_s\cong \CC^d\otimes \CC^2$,
where we separate the spatial/orbital ($\H_\ell$) and the internal/spin ($\H_s$) degrees of freedom. The two spin degrees of freedom are denoted by $\sigma\in\{ \uparrow, \downarrow \}$. 
We consider two exemplary cases: first, a spin-free orbital adaptation of RDMFT in the case of spin-independent (i.e., $\mathrm{SU}(2)$-invariant) Hamiltonians and, second, a general spin-adapted RDMFT framework that allows for additional terms in the one-particle Hamiltonian that break the $\mathrm{SU}(2)$ symmetry.

In the spin-independent case, both the fixed part $H_0$ and all elements of $\V$ commute with the three generators of the Lie group $\mathrm{SU}(2)$. This implies that the corresponding group $K$ to choose for the moment map is the unitary group $\mathrm{U}(\mathcal{H}_\ell)\cong \mathrm{U}(d)$ whose generators are~\cite{Paldus74, LCLMS24}
\begin{equation}
    E_{ij} = a_{i\uparrow}^* a_{j\uparrow} + a_{i\downarrow}^* a_{j\downarrow}.
\end{equation}
Thus, the scope of the functional theory in the spin-independent setting is the tuple $(\mathrm{i}\mathfrak{u}(\mathcal{H}_\ell), \H_N, \iota, H_0)$ with $\iota: \rmi\mathfrak{u}(\mathcal{H}_\ell)\to \rmi \mathfrak{u}(\H_N)$. The corresponding moment map that yields the reduced variable is 
\begin{equation}\label{eq:mu-SU2}
    \mu_{\mathrm{U}(\H_\ell)}: \mathbbm{P}(\mathcal{H}_N) \to (\rmi\mathfrak{u}(\H_\ell))^*.
\end{equation}
Thus, the reduced variable of the functional theory is no longer the full 1RDM, but the orbital 1RDM 
\begin{equation}\label{eq:gl}
    (\gamma_\ell)_{ij} = \trace(P E_{ji}),\qquad P\in \mathbbm{P}(\H_N).
\end{equation}
Here, we restrict our discussion of the scope as defined in Definition~\ref{def:scope} and the moment map in Eq.~\eqref{eq:mu-SU2}. For more details on spin-adaptation of one-particle reduced density matrix functional theories we refer to Refs.~\cite{LCLMS24, LMS24-jctc, LMS24-prl, Liebert25-phdthesis}. These references also go beyond the present setting by discussing the consequences of further restricting the $N$-particle Hilbert space $\H_N$ to irreducible representations (i.e., symmetry sectors) of $\mathrm{SU}(2)$.

As the second spin-adapted setting, we consider the product group $K=\mathrm{U}(\H_\ell)\times \mathrm{U}(\H_s)$, where `spin-adapted' means that the spin degrees of freedom of the fermions are incorporated explicitly into the construction of the functional theory through the corresponding Lie group structure. 
The Lie algebra of $K$ is then $\mathfrak{k}= \mathfrak{u}(\H_\ell)\oplus \mathfrak{u}(\H_s)$, leading to the scope $(\rmi (\mathfrak{u}(\H_\ell)\oplus \mathfrak{u}(\H_s)), \H_N, \iota, H_0)$. 
Thus, the corresponding potentials $\iota(v), v\in \rmi (\mathfrak{u}(\H_\ell)\oplus \mathfrak{u}(\H_s))$ may break the $\mathrm{SU}(2)$-invariance, for instance, through an external magnetic field. More details on this setting can be found in Refs.~\cite{LMS24-jctc, Liebert25-phdthesis}. 
From the conceptual approach taken in this paper, the product group example is interesting as it leads to a scope whose observable space of variable parts of the Hamiltonian $\iota(\V)$ is described by a direct sum of two Lie algebras. 
Then, the moment map $\mu_{\mathrm{U}(\H_\ell)\times \mathrm{U}(\H_s)}$ yields the tuple $(\gamma_\ell, \gamma_s)$ as the reduced variable. Here, $\gamma_\ell$ is the orbital 1RDM defined in Eq.~\eqref{eq:gl}, and $\gamma_s$ the spin 1RDM, whose matrix elements follow from the generators
\begin{equation}
 S_{\sigma\sigma'} = \sum_{i=1}^d a_{i\sigma}^*a_{i\sigma'}   
\end{equation}
of $\mathrm{U}(2)$ according to
\begin{equation}
    (\gamma_s)_{\sigma\sigma'} = \trace(P S_{\sigma'\sigma}),\qquad P\in \mathbbm{P}(\H_N).
\end{equation}

Both examples, the spin-independent ($K=\mathrm{U}(\H_\ell)$) and the more general spin-adapted setting ($K=\mathrm{U}(\H_\ell)\times \mathrm{U}(\H_s)$) can be generalized to spin-$m$ particles, $m\in\{0, \tfrac{1}{2}, 1, \ldots\}$. This further illustrates the flexibility of the symplectic-geometric approach for incorporating internal degrees of freedom into a functional theory.

\section{Concluding remarks and outlook}
\label{sec:outlook}

There are some obvious directions for possible extensions and further research. One is the obvious extension to infinite-dimensional (separable) Hilbert spaces, which arguably constitutes the typical setting for a formulation of quantum mechanics. This can come in two different forms. 
First, one could still consider a finite-dimensional parameterization of the observable space by $\V$, perhaps with the restriction to observables with discrete spectra. As an intermediate step, this would allow to retain many of the results achieved here, while still substantially extending the ranging of validity. 
The next step would then be to introduce families of observables parameterized by infinite-dimensional vector spaces, such as the density operator $\hat\rho(\rr)$ of DFT.

The other proposed direction is the obvious extension to the time domain, with the principle goal being solving time-dependent Schrödinger equation and obtaining the time-dependent reduced variables (densities) instead. For lattice systems, this problem was pioneered by \citet{tokatly2011time} and \citet{farzanehpour2012time}. More recently, Refs.~\cite{Cances2026-TDDFT-geom-lattice, CDGLLT26} developed a geometric framework for Schr\"odinger dynamics under time-dependent density constraints. 
In the finite-lattice TDDFT setting, this framework yields, besides the usual dynamics derived from stationarity of the action, an alternative constrained dynamics in which the density constraint can be imposed by an imaginary potential, or equivalently by a nonlocal Hermitian operator.
 Compared to ground state DFT, the time-dependent variant is much less studied and misses the huge variety of well-engineered functional approximations for ground states~\cite{Toulouse2022-chapter}. Moreover, even less is known in the case of non-abelian functional theories such as RDMFT \cite{PG16}. The field would thus definitely benefit from a mathematically sound framework, even if it is only for the finite-dimensional domain.

Finally, we note that the present manuscript focuses on an \emph{irreducible} formulation of functional theories. Here, irreducible means that the reduced variable is an element of the range of the density map. In this sense, it contains exactly the degrees of freedom needed to evaluate expectation values of operators in the observable space $\iota(\V)$ that appear in the scope of the functional theory introduced in Definition~\ref{def:scope}.
Practical considerations may, however, motivate the use of reduced variables with redundant degrees of freedom. 
If a variable other than the density defined by the density map is used while the scope is kept fixed, we refer to the resulting formulation as a \emph{reducible} functional theory. 
This formulation involves an extended variable $\eta$ together with a reduction map $R$ such that the irreducible density is recovered as $\rho=R(\eta)$. 
This can be beneficial for having more information about the state stored directly in the variable $\eta$. 
The coupling to the variable external potential is still determined by $\rho$, but additional fixed contributions to the Hamiltonian may become linear functionals of the extended variable. This concept was introduced in the context of RDMFT and symmetry-based variants in Refs.~\cite{Liebert2023-refining, LMS24-jctc, Liebert25-phdthesis}.

An example of this is obtained by using the full 1RDM as the reduced variable while keeping the scope restricted to local external potentials. The irreducible density is then recovered from the 1RDM. Although the observable space $\iota(\V)$ from Definition~\ref{def:scope} remains the space of local external potentials, the kinetic energy can be written exactly as a linear functional of the 1RDM. Thus, this contribution can be separated from the constrained-search functional, leaving the interaction-only universal functional of RDMFT. In this sense, the 1RDM is an extended reduced variable relative to the local-potential scope, even though it is irreducible for the larger scope of all one-body potentials.
Another example arises in spin-resolved formulations. If the scope contains only spin-independent external potentials, the corresponding irreducible density is the spin-summed charge density. One may nevertheless work with the spin-up and spin-down densities, or with spin-resolved blocks of the 1RDM, as extended variables. The spin-summed density is then obtained by a reduction map, while observables related to magnetization become accessible
as linear functionals of the extended variable.

The framework developed in this paper can also be applied to such reducible formulations by keeping track of the extended variable and its reduction to the density associated with the fixed scope. A systematic comparison of the practical advantages of irreducible and reducible formulations is left for future work. We expect that the unified framework developed here will provide useful tools for such investigations.


Taken together, the extensions and generalizations discussed above illustrate the flexibility of the framework developed in this work. More importantly, they reinforce the central message of the paper: many features commonly associated with particular functional theories are in fact consequences of a common underlying structure. By making this structure explicit through the notion of a defining scope, we place a broad class of functional theories within a single mathematical framework.

\begin{acknowledgments}
We would like to thank Chi-Kwong Li for insightful discussion. 
Financial support from the German Research Foundation (Grant SCHI 1476/1-1) (C.-C.~W., M.~P., C.~S.), the Munich Center for Quantum Science and Technology (C.~S.), and the ERC-2021-STG under grant agreement No.~101041487 REGAL (M.~P.) is acknowledged.
\end{acknowledgments}

\appendix
\renewcommand{\theequation}{\thesection.\arabic{equation}}
\renewcommand{\thetheorem}{\thesection.\arabic{theorem}}

\section{Double qubit functional theory}
\label{app:2qubit}

Here, we again consider the functional theory defined on $\mathbb{C}^2\otimes \mathbb{C}^2$ with observables $\{X_1, Y_1, Z_1, Z_2\}$, which is discussed in Section~\ref{sec:geometry_double_qubit}. For the fixed operator of the functional theory we take $H_0 = X_1X_2 + Y_1Y_2$, so that the total Hamiltonian is
\begin{equation}
H(v) = X_1X_2 + Y_1 Y_2
+ v_1 X_1 + v_2 Y_1 + v_3 Z_1 + v_4 Z_4,
\end{equation}
where $v = (v_1, v_2,v_3,v_4) \in \mathbb{R}^4$.
For this simple model, it is possible to perform the pure-state constrained search analytically, which we will do for a density $\rho = (0,0,z,0)$ with $z\in (-1,0) \cup (0,1)$ (see Fig.~\ref{fig:c2c2_diagram}).
\begin{figure}
\centering
\includegraphics[width=.3\columnwidth]{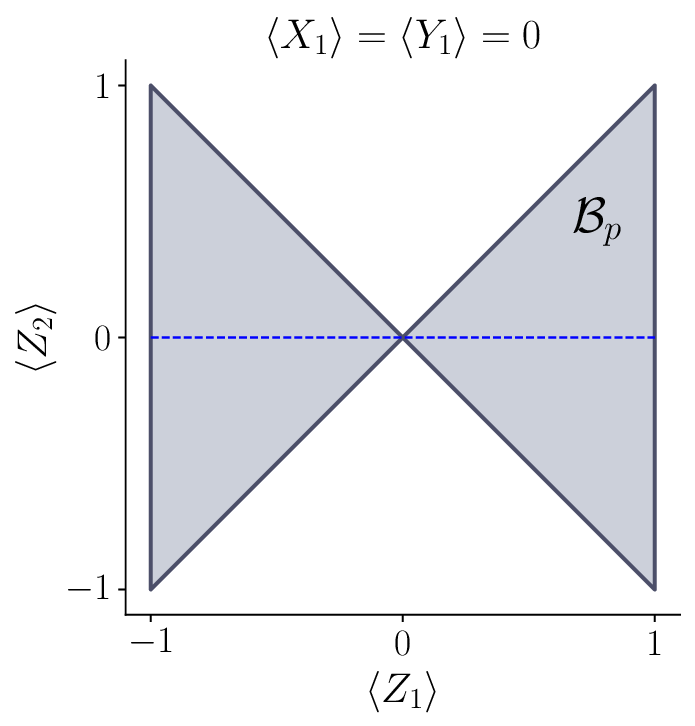}
\caption{The pure-state constrained search is performed along the blue dashed line lying in the pure-state observable range $\Bpure$. In the figure, we only show the slice of $\Bpure$ on which $\braket{X_1}$ and $\braket{Y_1}$ vanish.}
\label{fig:c2c2_diagram}
\end{figure}
For any pure state
$\Psi = (a,b,c,d)$
written in the basis $\{\ket{\uparrow\uparrow},\ket{\uparrow\downarrow},\ket{\downarrow\uparrow},\ket{\downarrow\downarrow}\}$, the density, i.e., the image under the density map $\densmap$, is
\begin{equation}
\densmap(P_\Psi)
= 
\begin{pmatrix}
2\Re(\overline a c + \overline b d)\\
2\Im(\overline a c+ \overline b d)\\
|a|^2 + |b|^2 - |c|^2 - |d|^2\\
|a|^2 - |b|^2 + |c|^2 - |d|^2
\end{pmatrix}.
\end{equation}
Thus, $P_\Psi$ is in the preimage $\densmap^{-1}(0,0,z,0)$ if and only if the complex coefficients $a,b,c,d$ satisfy
\begin{align}
&\overline a c + \overline b d = 0,\\
&|a|^2 - |d|^2 = \frac{z}{2},\\
&|b|^2 - |c|^2 = \frac{z}{2},\\
&|a|^2 + |b|^2 = \frac{1+z}{2},\\
&|c|^2 + |d|^2 = \frac{1-z}{2}.
\end{align}
The last four equations are linear in $|a|^2, |b|^2, |c|^2, |d|^2$. The solution set to these four equation can be written in parameterized form as
\begin{equation}
\begin{pmatrix}
|a|^2\\
|b|^2\\ |c|^2\\ |d|^2
\end{pmatrix}
= \frac{1}{4}\begin{pmatrix}
1+z-t \\
1+z+t \\
1-z+t \\
1-z-t
\end{pmatrix},
\end{equation}
with $t\in \mathbb{R}$ such that all four entries are nonnegative. 
The condition $\overline a c + \overline b d = 0$ then implies $t=0$. Hence, we have for a quantum state $\Psi$ in the preimage that
\begin{equation}
\Psi = \frac{1}{2}
\begin{pmatrix}
\sqrt{1+z} \\ \sqrt{1+z}e^{i\theta} \\ \sqrt{1-z}e^{i\phi} \\ \sqrt{1-z}e^{i\chi}
\end{pmatrix},
\end{equation}
where $\theta, \phi, \chi$ are phases. 
Now, the condition $\overline a c + \overline b d = 0$ is equivalent to $-\theta + \chi = \phi + \pi$, 
so the preimage is
\begin{equation}\label{eq:qubit_psi}
\begin{aligned}
&\densmap^{-1}(0,0,z,0)
= 
\left\{
P_{\Psi(\theta,\phi)}: \Psi(\theta,\phi) = 
\frac{1}{2} 
\begin{pmatrix}
\sqrt{1+z} \\ \sqrt{1+z}e^{i\theta} \\ \sqrt{1-z} e^{i\phi} \\ -\sqrt{1-z} e^{i(\theta+\phi)}
\end{pmatrix},\;
\theta,\phi \in [0, 2\pi]
\right\}.
\end{aligned}
\end{equation}
This is exactly the set of pure states that satisfy the constraint in the definition of the pure-state constrained-search functional (see Definition~\ref{def:const-search-func}).

We will now carry out the constrained minimization.
With $H_0 = X_1X_2 + Y_1Y_2$ and $\Psi(\theta,\phi)\in \densmap^{-1}(0,0,z,0)$ given in Eq.~\eqref{eq:qubit_psi}
we have 
\begin{equation}
\langle\Psi(\theta,\phi),H_0\Psi(\theta,\phi)\rangle = \sqrt{1-z^2}\cos(\theta - \phi).
\end{equation}
It follows that $\Psi(\theta, \phi)$ is a minimizer of the constrained search if and only if $\theta = \phi + \pi$, and the value of the pure-state constrained-search functional at $(0,0,z,0)$ is simply
\begin{equation}
\label{eq:2qubit_fpure}
    \Fpure(0,0,z,0) = -\sqrt{1-z^2}.
\end{equation}
Since $(0,0,z,0)$ is a regular value (recall that we assume $z \in (-1,0)\cup (0,1)$, so the critical values are left out), we can apply Theorem~\ref{th:v-rep}\ref{th:v-rep:item:pure}. This means that for each minimizers $\Psi(\phi+\pi,\phi)$ of the pure-state constrained search at $(0,0,z,0)$ (with $\phi \in [0,2\pi]$ arbitrary), there exists a unique potential $v = v(\phi) \in \mathbb{R}^4$ such that $\Psi(\phi+\pi,\phi)$ is an 
eigenstate of $H(v)$. 
By putting the solutions into the Schrödinger equation $H(v)\Psi=E\Psi$, one finds after some algebra that
\begin{equation}
v(\phi) = \begin{pmatrix}
 \frac{\cos \phi}{z} \\ \frac{\sin\phi}{z} \\ 
 -\sqrt{\frac{1+z}{1-z}} + \frac{1}{2z}\Big(\sqrt{\frac{1+z}{1-z}} - \sqrt{\frac{1-z}{1+z}}\Big)\\
 -\frac{\sqrt{1-z^2}}{z}
\end{pmatrix}.
\end{equation}
We were able to numerically verify that each $\Psi(\phi+\pi,\phi)$ is the \emph{first excited state} of $H(v(\phi))$,
with a finite energy gap to the ground-state energy.
Note that every such $\Psi(\phi+\pi,\phi)$ maps to the \emph{same} $\rho=(0,0,z,0)$, so we have established with this example that pure excited-state $v$-representability can be non-unique.
This also means that $(0,0,z,0)$ is \emph{not} 
pure-state $v$-representable for any $z\in (-1,0)\cup(0,1)$. Hence, it is possible that $\Fens(0,0,z,0) \neq \Fpure(0,0,z,0)$, which is confirmed by the following proposition.

\begin{proposition}
For the two-qubit functional theory, the ensemble constrained-search functional is strictly 
smaller than the pure-state constrained-search functional at $\rho = (0,0,\frac{1}{2}, 0)$.

\begin{proof}
It suffices to find a mixed state $\Gamma$ such that $\densmap(\Gamma) = (0,0,\frac{1}{2},0)$ and 
$\trace(\Gamma H_0) < \Fpure(0,0,\frac{1}{2},0) = -\sqrt{3}/2$ (see Eq.~\eqref{eq:2qubit_fpure}).\\
Let $\Psi_1 = \ket{\uparrow\uparrow}$ and $\Psi_2 = -\sqrt{2/3}\ket{\uparrow\downarrow} + 1/\sqrt{3}\ket{\downarrow\uparrow}$, and take $\Gamma = \frac{1}{4}P_{\Psi_1} + \frac{3}{4}P_{\Psi_2}$. Then
\begin{equation}
\begin{aligned}
&\trace(\Gamma X_1) = \trace{\Gamma Y_1} = 0,\\
&\trace(\Gamma Z_1) = \frac{1}{4}\cdot 1 + \frac{3}{4} \cdot \frac{1}{3} = \frac{1}{2},\\
&\trace(\Gamma Z_2) = \frac{1}{4}\cdot 1 + \frac{3}{4} \cdot \frac{-1}{3} = 0.
\end{aligned}
\end{equation}
So $\densmap(\Gamma) = (0,0,\frac{1}{2},0)$. Furthermore,
\begin{equation}
\trace(\Gamma H_0) = \frac{1}{4}\cdot 0 + \frac{3}{4}\cdot \frac{-4\sqrt{2}}{3} = -\sqrt{2} < -\frac{\sqrt{3}}{2}.
\end{equation}
This shows that the ensemble constrained-search functional is strictly smaller than $\Fpure(0,0,\frac{1}{2},0)$ at this point.
\end{proof}
\end{proposition}

\begin{figure}
\includegraphics[width=.5\columnwidth]{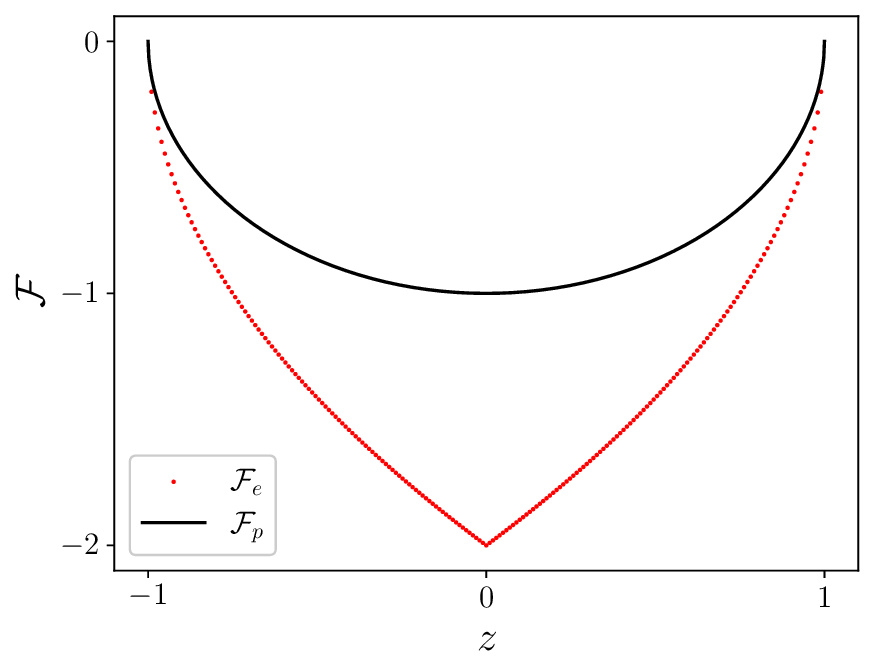}
\caption{Values of the pure-state constrained-search functional $\Fpure$ and the ensemble constrained-search functional $\Fens$ at densities $\rho = (0,0,z,0)$, with $z\in (-1,0)\cup (0,1)$ (see Fig.~\ref{fig:c2c2_diagram}).}
\label{fig:2qubit_fpure_fens}
\end{figure}

Note that the mixed state $\Gamma$ above is actually a minimizer of the ensemble constrained search at $(0,0,\frac{1}{2}, 0)$. In fact, $\Psi_1$ and $\Psi_2$ are degenerate ground states of the Hamiltonian
\begin{equation}
H\left(0, 0, -\sqrt{2}, -\frac{1}{\sqrt{2}}\right)
= X_1X_2 + Y_1Y_2 
- \sqrt{2}Z_1 -\frac{1}{\sqrt{2}}Z_2,
\end{equation}
which is consistent with Lemma~\ref{lem:degen} since $\Fens$ and $\Fpure$ do not agree at $(0,0,\frac{1}{2}, 0)$.

In Fig.~\ref{fig:2qubit_fpure_fens}, we compute the ensemble constrained-search functional $\Fens$ at $\rho = (0,0,z,0)$ for $z\in (-1,1)$ by numerically performing the Legendre--Fenchel transform (Lieb optimization) $\Fens(\rho) = \sup_{v}(E(v) -\langle\rho,v\rangle)$. We then compare it with the pure-state constrained-search functional, which is given by Eq.~\eqref{eq:2qubit_fpure}. Note also that $\Fens$ has a kink at $z=0$, indicating that the density $(0,0,0,0) \in \mathbb{R}^4$ is ensemble $v$-representable by multiple potentials.

\section{Discontinuous ensemble functional example}
\label{app:discont-ens}

Here, we will construct a functional theory whose ensemble functional $\Fens$ is discontinuous on the boundary of its domain. 
Take $\mathcal{H} = \mathbb{C}^3 = \mathbb{C} \oplus \mathbb{C}^2$, $\V=\mathbb{R}^2$, and let
\begin{equation}
\iota(v)=
v_1 (1\oplus X) + v_2 (0\oplus Y)
=\begin{pmatrix}
    v_1&0&0\\
    0&0 & v_1-iv_2\\
    0&v_1+iv_2 & 0
\end{pmatrix}
\end{equation}
where $1\oplus X$ is a block-diagonal $3\times 3$ matrix. It is easy to check that the ensemble observable range in this case is given by the closed unit disk, $\Bens = \{(x,y) \in \mathbb{R}^2 : x^2 + y^2 \le 1\}$.
For the fixed part $H_0$ of the functional theory we take
\begin{equation}
H_0 = -1 \oplus 0 = \begin{pmatrix}-1 \\ & 0 \\ && 0\end{pmatrix}.
\end{equation}
We now claim that the ensemble functional $\Fens$ is discontinuous at $(1,0)\in \Bens$. To this end, we will show that $\Fens(\cos\theta, \sin\theta) = 0$ for $\theta\neq 0$, and $\Fens(1, 0) =-1$.

For this, let $\ket{0}, \ket{1}, \ket{2}$ denote the standard orthonormal basis on $\mathbb{C}^3$. The ensemble state $P_{\ket{0}}$ satisfies $\densmap(P_{\ket{0}}) = (1,0)$, so $\Fens(1,0) \le \braket{0|H_0|0} = -1$. On the other hand, $\Fens \ge -1$ everywhere because $-1$ is the least eigenvalue of $H_0$. Hence, we must have $\Fens(1, 0) = -1$.

Let $\Gamma$ be any ensemble state for which $\densmap(\Gamma) = (\cos\theta, \sin\theta)$ for any $\theta\neq 0$. Write
\begin{equation}
\Gamma = \begin{pmatrix}
\lambda & *\\
* & (1-\lambda)\Gamma'
\end{pmatrix},
\end{equation}
where $\lambda \in [0,1]$ and $\Gamma'$ is a $2\times 2$ matrix with unit trace. We have
\begin{equation}
(\cos\theta, \sin\theta) = \lambda (1,0) + (1-\lambda) (\trace(\Gamma'X), \trace(\Gamma'Y)).
\end{equation}
Note that $(\trace(\Gamma' X), \trace(\Gamma' Y))$ lies within the closed unit disk. Since $(\cos\theta, \sin\theta)$ is an extreme point of the closed unit disk, and $(\cos\theta, \sin\theta)\neq (1,0)$, we conclude $\lambda =0$, which implies $\trace(\Gamma H_0)= 0$. Since this is true for any $\Gamma$ such that $\mu(\Gamma) = (\cos\theta, \sin\theta)$, we have 
$\Fens(\cos\theta, \sin\theta) = 0$ for all angles $\theta\neq 0$.

\begin{figure}
    \centering
    \begin{tikzpicture}[line cap=round, line join=round, 
        x={(0.9cm,-0.2cm)}, y={(0.7cm,0.3cm)}, z={(0,1cm)},
        draw=MidnightBlue]
    \xdef\N{40} 
    \xdef\D{1.5}
    \begin{scope}[canvas is xy plane at z=0]
        \draw[line width=1.2pt] (0,0) ellipse (\D cm and \D cm);
        \foreach \i in {1,...,\N} {
            \pgfmathsetmacro{\ang}{((\i-1)*360/\N)};
            \coordinate (top\i) at (\ang:\D cm and \D cm);
        }
    \end{scope}
    \foreach \i in {1,...,\N} {
        \draw[color=MidnightBlue!70,line width=.5pt] (top\i) -- (\D,0,-1);
    }
    \draw[->, black, thin] (0,0,0) -- (3,0,0) node[right] {$\braket{1\oplus X}$};
    \draw[->, black, thin] (0,0,0) -- (0,3,0) node[right] {$\braket{0 \oplus Y}$};
    \draw[->, black, thin] (0,0,0) -- (0,0,1) node[above] {$\Fens$};
    \filldraw[black, fill=white] (\D,0,0) circle (1.7pt);
    \filldraw[black, fill=black] (\D,0,-1) circle (1.7pt);
    \end{tikzpicture}
    \caption{
        Example of an ensemble constrained-search functional $\Fens$ defined on a disc that is discontinuous on the boundary.
    }
    \label{fig:Fens-discontinuous-ex}
\end{figure}

Through the direct-sum construction, the whole ensemble constrained-search functional is simply given by the convex hull of its boundary values. The functional is illustrated in Figure~\ref{fig:Fens-discontinuous-ex}.

\section{Discontinuous pure functional example}
\label{app:discont-pure}

Here, we will construct a functional theory whose pure functional $\Fpure$ is discontinuous on the interior of its domain. Take $\H=\C^3$, $\V=\R^3$, and let
\begin{equation}
    \iota(v)=\frac{v_1}{2}\begin{pmatrix} 0&1&0 \\ 1&0&1 \\ 0&1&0 \end{pmatrix} + 
    \frac{v_2}{2}\begin{pmatrix} 0&-\rmi&0 \\ \rmi&0&\rmi \\ 0&-\rmi&0 \end{pmatrix} + 
    v_3\begin{pmatrix} 1 \\ &0 \\ &&-1 \end{pmatrix}.
\end{equation}
For the fixed part $H_0$ of the functional theory we take
\begin{equation}
H_0 = \begin{pmatrix} 0 \\ & -9000 \\ && 0\end{pmatrix}.
\end{equation}
Let $\ket{1}, \ket{0}, \ket{-1}$ be the standard orthonormal basis on $\mathbb{C}^3$, with respect to which we write any pure state as
\begin{equation}
    \Psi = a\ket{1}+ b\ket{0}+ c\ket{-1},\quad a,b,c\in\C.
\end{equation}
The density map can then be evaluated as 
\begin{equation}
\mu(P_\Psi)=(\Re(b(\overline{a}+\overline{c})),\Im(b(\overline{a}+\overline{c})),|a|^2-|c|^2).
\end{equation}
We will consider the pure functional $\Fpure$ at density $\rho=(0,0,z)$ for different values of $z$, where it was checked that $\rho=(0,0,0)$ indeed lies in the interior of the observable range $\Bpure$. Such densities imply $b(\overline{a}+\overline{c})=0$ and $|a|^2-|c|^2=0$ for the coefficients. We distinguish two cases.

For $z\neq 0$ we can only have $b=0$, which leads to $|a|^2=(1+z)/2$ and $|c|^2=(1-z)/2$. If it would hold $b\neq 0$ then from $b(\overline{a}+\overline{c})=0$ it follows $a=-c$ which is incompatible with $z=0$. Since $H_0$ only gives a non-zero contribution if $b\neq 0$ we get $\Fpure(0,0,z)=0$ for all $z\in[-1,1]$.

For $z=0$ we must have $|a|=|c|$ and can allow $b\neq 0$ which lowers the expectation value $\langle\Psi,H_0\Psi\rangle$ that enters the pure-state constrained-search functional. It is clearly maximized at $|b|=1$ and $a=c=0$, so that we get $\Fpure(0,0,0)=-9000$, which makes the functional discontinuous.

\section{Lower semicontinuity, but not continuity, of a pure-state RDMFT Hubbard dimer functional \label{app:discont-real}}

In this section, we show that for the two-site Hubbard dimer at half-filling in the singlet subspace the pure-state constrained-search functional is lower semi-continuous but not continuous on $\Bpure^\R$, as mentioned without explicit proof in Ref.~\cite{Liebert2023-refining}, when one only considers \emph{real} coefficients.
We work with the real $2\times 2$ spin block of the 1RDM, normalized so that $\trace\gamma=1$. In the orthonormal basis
\begin{align}
\Phi_1 &= c_{1\uparrow}^* c_{1\downarrow}^* |0\rangle, \\
\Phi_2 &= c_{2\uparrow}^* c_{2\downarrow}^* |0\rangle, \\
\Phi_3 &= \frac{1}{\sqrt{2}}
\bigl(c_{1\uparrow}^* c_{2\downarrow}^* -c_{1\downarrow}^* c_{2\uparrow}^*\bigr)|0\rangle,
\end{align}
a normalized real singlet state has the form
\begin{equation}\label{eq:app-real-state}
\Psi = a\Phi_1+b\Phi_2+c\Phi_3,\quad a,b,c\in\mathbb R, \quad a^2+b^2+c^2=1.
\end{equation}
For such a state, the corresponding real 1RDM is parametrized by
\begin{equation}\label{eq:app-gamma-constraints}
\gamma_{11}=a^2+\frac{c^2}{2}, \quad \gamma_{12}=\frac{c(a+b)}{\sqrt{2}}.
\end{equation}
Therefore, the domain of the constrained-search functional that only varies over states of the form Eq.~\eqref{eq:app-real-state} is the disk
\begin{equation}\label{eq:app-domain}
\Bpure^\R= \left\{ (\gamma_{11},\gamma_{12})\in\mathbb R^2: \left(\gamma_{11}-\frac12\right)^2+\gamma_{12}^2\le \frac14 \right\}.
\end{equation}
This will make the resulting functional $\Fpure^\R$ depend on the basis choice $\{\Phi_1,\Phi_2,\Phi_3\}$, but note that it in quantum chemistry typically such expansions with real coefficients are considered.

We restrict our attention to the repulsive on-site interaction
\begin{equation}\label{eq:app-W}
H_0=W=U (n_{1\uparrow}n_{1\downarrow}+n_{2\uparrow}n_{2\downarrow}), \quad U>0 ,
\end{equation}
for which
\begin{equation}
\langle\Psi,W\Psi\rangle = U(a^2+b^2) = U(1-c^2).
\label{eq:app-W-exp}
\end{equation}
Then, it can be shown that for $\gamma\neq \gamma_*:=(\frac12,0)$, the constrained search yields the closed formula \cite{Pastor2011a,Cohen2016,Liebert2023-refining}
\begin{equation}\label{eq:app-F}
\begin{aligned}
&\Fpure^\R(\gamma)= U \frac{ \left(\gamma_{11}-\frac12\right)^2 +\frac12\,\gamma_{12}^2 \left[ 1-\sqrt{1-4\left(\gamma_{11}-\frac12\right)^2-4\gamma_{12}^2} \right]} { \left(\gamma_{11}-\frac12\right)^2+\gamma_{12}^2 }.
\end{aligned}
\end{equation}
The point $\gamma_*=(\frac12,0)$ is the only point at which Eq.~\eqref{eq:app-F} is not defined a priori because the denominator vanishes.
Then the constrained search value of the pure-state functional at $\gamma_*=(\frac12,0)$ for $U>0$ is $\mathcal F^\R_{\mathrm p}(\gamma_*)=0$.
In the following we show that $\Fpure^\R$ is continuous on $\Bpure^\R\setminus\{\gamma_*\}$, lower semicontinuous on all of $\Bpure^\R$, and not continuous at $\gamma_*$.

We split the proof of the above statement into four steps, and first show the value of $\mathcal F_{\rm p}$ at the exceptional point $\gamma_*$.
For this, the constraints of Eqs.~\eqref{eq:app-gamma-constraints} and \eqref{eq:app-real-state} become
\begin{equation}\label{eq:app-center-constraints}
a^2+\frac{c^2}{2}=\frac12,\quad  \frac{c(a+b)}{\sqrt{2}}=0, \quad a^2+b^2+c^2=1.
\end{equation}
The first and third relations imply $a^2=b^2=\frac{1-c^2}{2}$. Conversely, for every $c\in[0,1]$, the choice
\begin{equation}
a=\sqrt{\frac{1-c^2}{2}}, \quad b=-\sqrt{\frac{1-c^2}{2}}
\end{equation}
satisfies Eq.~\eqref{eq:app-center-constraints}. Therefore every $c\in[0,1]$ is admissible at $\gamma_*$, and thus $\Fpure^\R=0$ since $U>0$.

Next, we show continuity away from $\gamma_*$. For this, set $x:=\gamma_{11}-\frac12$ and $y:=\gamma_{12}$.
Then on $\Bpure^\R\setminus\{\gamma_*\}$, $\mathcal F_{\rm p}(\gamma)$ in Eq.~\eqref{eq:app-F} reads
\begin{equation}\label{eq:app-xy-form}
\Fpure^\R(x,y) = U \frac{x^2+\frac12 y^2\left(1-\sqrt{1-4x^2-4y^2}\right)}{x^2+y^2}.
\end{equation}
Because $(x,y)\in \Bpure^\R$, one has $x^2+y^2\le \frac14$, and hence $1-4x^2-4y^2\geq 0$.
The numerator of Eq.~\eqref{eq:app-xy-form} is continuous on $\Bpure^\R$, and the denominator $x^2+y^2$ is strictly positive on $\Bpure^\R\setminus\{\gamma_*\}$. Hence
$\Fpure^\R(\gamma)$ is continuous on $\Bpure^\R\setminus\{\gamma_*\}$. 

To show lower semicontinuity at $\gamma_*$, we observe that it suffices to verify the sequential criterion.
Let $\gamma_n\to\gamma_*$ in $\Bpure^\R$. If $\gamma_n=\gamma_*$ for infinitely many $n$, then
\begin{equation}
\liminf_{n\to\infty}\Fpure^\R(\gamma_n)\geq 0 = \Fpure^\R(\gamma_*),
\end{equation}
and there is nothing to prove. We therefore assume that
$\gamma_n\neq\gamma_*$ for all $n$ and consider
\begin{equation}
x_n:=\gamma_{11}^{(n)}-\frac12, \quad y_n:=\gamma_{12}^{(n)}.
\end{equation}
Then, Eq.~\eqref{eq:app-xy-form} yields
\begin{equation}
\Fpure^\R(\gamma_n) = U \frac{x_n^2+\frac12 y_n^2\left(1-\sqrt{1-4x_n^2-4y_n^2}\right)}{x_n^2+y_n^2}.
\end{equation}
In addition, $x_n^2\geq 0$, $y_n^2\geq 0$, and $0\leq \sqrt{1-4x_n^2-4y_n^2}\leq 1$, and therefore
\begin{equation}
1-\sqrt{1-4x_n^2-4y_n^2}\geq 0.
\end{equation}
Since $U>0$, every term in the numerator is nonnegative, hence
\begin{equation}
\Fpure^\R(\gamma_n)\geq 0 \quad\text{for all } n.
\end{equation}
Thus, it follows that
\begin{equation}
\liminf_{n\to\infty}\Fpure^\R(\gamma_n)\geq 0 =  \Fpure^\R(\gamma_*)
\end{equation}
and $\Fpure^\R$ is lower semicontinuous at $\gamma_*$. Combined with Step~2, this proves that $\Fpure^\R$ is lower semicontinuous on all of $\Bpure^\R$.

To finally show failure of continuity at $\gamma_*$,  we consider the two sequences
\begin{equation}
\gamma_n^{(1)}:=\left(\frac12+\frac1n,0\right), \quad \gamma_n^{(2)}:=\left(\frac12,\frac1n\right), \quad n\geq 2.
\end{equation}
Both belong to $\Bpure^\R$ and converge to $\gamma_*$. For the first sequence,
\begin{equation}
\lim_{n\to\infty} \Fpure^\R(\gamma_n^{(1)}) =   U.
\end{equation}
Along the second sequence,
\begin{equation}
 \Fpure^\R(\gamma_n^{(2)}) = \frac{U}{2}\left(1-\sqrt{1-\frac{4}{n^2}}\right).
\end{equation}
and, thus, 
\begin{equation}
\lim_{n\to\infty} \Fpure^\R(\gamma_n^{(2)})=0.
\end{equation}
Since $U>0$, these two limits are different. Hence the limit of $ \Fpure^\R(\gamma)$ as $\gamma\to\gamma_*$ does not exist, and $\Fpure^\R$ is not continuous at $\gamma_*$.

The above proof is specific to the repulsive case $U>0$. For $U<0$, the same constrained search argument at $\gamma_*=(\frac12,0)$ gives
\begin{equation}
\Fpure^\R(\gamma_*)=\min_{0\le c\le 1}U(1-c^2) = U,
\end{equation}
and the sign-dependent closed formula tends to the same value $U$ as $\gamma\to\gamma_*$. Hence, for attractive interaction, the functional is continuous at the center of the disk $\Bpure^\R$.

\section{Moment maps}
\label{app:symplectic}

Here, we will review some elements of symplectic geometry that are relevant for functional theories as discussed in Section~\ref{sec:Lie}.

\subsection{Group actions}

Let $G$ be a Lie group acting on $M$ from the left. That is, there is a smooth map
\begin{equation}
\begin{aligned}
G \times M &\rightarrow M \\
(g, p) &\mapsto g\cdot p
\end{aligned}
\end{equation}
that satisfies $g\cdot (h\cdot p) = (g\cdot h)\cdot p$.
For a vector $S\in \mathfrak{g}$, the \emph{fundamental vector field} $S^* \in\mathfrak{X}(M)$ is defined by
\begin{equation}
S^*_p = \frac{\rmd }{\rmd t}\exp (tS) \cdot p\Big|_{t=0}.
\end{equation}
The map $\mathfrak{g} \rightarrow \mathfrak{X}(M)$, $S\mapsto S^*$ is a Lie algebra antihomomorphism. That is,
\begin{equation}
 [S, T]^* = -[S^*, T^*].
\end{equation}

\subsection{Symplectic geometry and moment maps}
Let $(M, \omega)$ be a symplectic manifold. That is, $M$ is a mani\-fold and $\omega \in \Omega^2(M)$ is a nondegenerate two-form satisfying $\rmd\omega = 0$. 
For a point $p\in M$, let $\flat:T_pM \rightarrow T_p^*M$ be the isomorphism $v\mapsto \omega(v, \cdot)$, and let $\sharp: T_p^*M \rightarrow T_pM$ be its inverse.
The \emph{symplectic gradient} of a smooth function $f\in C^\infty(M)$ is
\begin{equation}
\sgrad  f:= (\rmd f)^\sharp.
\end{equation}
For two smooth functions $f,g \in C^\infty(M)$, the \emph{Poisson bracket} is defined as
\begin{equation}
\{f, g\} :=  \omega(\sgrad f, \sgrad g).
\end{equation}
This makes $(C^\infty(M), \{\cdot,\cdot\})$ into a Lie algebra.
The map $\sgrad : C^\infty(M) \rightarrow \mathfrak{X}(M)$ is a Lie algebra antihomomorphism and it holds
\begin{equation}
\sgrad \{f,g\} = -[\sgrad f, \sgrad g].
\end{equation}

Let $G$ be a Lie group acting on a symplectic manifold $(M, \omega)$ from the left. Let 
$\phi: \mathfrak{g}\rightarrow \mathfrak{X}(M)$ denote the antihomomorphism $S\mapsto S^*$.
\begin{definition}
A \emph{comoment map} for the $G$-action is a Lie algebra homomorphism $\Phi:\mathfrak{g}\rightarrow C^\infty(M)$ such that $\sgrad \circ\,\Phi = \phi$.
\end{definition}

There is a one-to-one correspondence between linear maps $\Phi: \mathfrak{\mathfrak{g}}\rightarrow C^\infty(M)$ and smooth maps $\mu: M\rightarrow \mathfrak{g}^*$ given by
\begin{equation}
\braket{\mu(p), X} = \Phi(X)(p).
\end{equation}
If $\Phi$ is a comoment map, the corresponding $\mu$ is called a \emph{moment map} for the $G$-action.

\begin{example}
Let $M = T^* \mathbb{R} = \mathbb{R}^2$ be the cotangent bundle of $\mathbb{R}$. 
Equip $M$ with the symplectic form $\omega = \rmd x \wedge \rmd p$, where $(x,p)$ are the Euclidean coordinates on $\mathbb{R}^2$.
Let $\mathbb{R}$ act on $\mathbb{R}^2$ by translating the first coordinate, i.e., $a\cdot (x,p) = (x+a, p)$. This action admits a moment(um) map
$(x,p) \mapsto p$.
\end{example}

\subsection{Properties of moment maps}

A basic but important property of a moment map is $G$-equivariance.
Recall that the \textit{coadjoint action} $\mathrm{Ad}^* : G\rightarrow \mathrm{GL}(\mathfrak{g}^*)$
is defined to be the dual of the adjoint representation.

\begin{lemma}\label{lem:moment-map-equivariance}
Assume $M$ and $G$ are connected. A moment map $\mu: M\rightarrow \mathfrak{g}^*$ for the $G$-action is $G$-equivariant with respect to the coadjoint action. That is, for all $g\in G$ and $p\in M$, it holds that
\begin{equation}
\mu(g\cdot p) = \Ad^*_g \mu(p).
\end{equation}
\end{lemma}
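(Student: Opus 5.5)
We need to prove equivariance of a moment map, assuming $M$ and $G$ are connected. The plan is to show that, for fixed $X\in\mathfrak{g}$, the discrepancy between $\mu(g\cdot p)$ and $\Ad^*_g\mu(p)$ is constant along orbits. Connectedness of $G$ then forces it to vanish, since it vanishes at $g=e$.

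\textbf{Step 1: reduce to a one-parameter statement.} Since $G$ is connected, it is generated by any neighborhood of the identity, hence by elements $\exp(tS)$ with $S\in\mathfrak g$. If the identity holds for $g_1$ and for $g_2$, then
\begin{equation*}
\mu(g_1g_2\cdot p)=\Ad^*_{g_1}\mu(g_2\cdot p)=\Ad^*_{g_1}\Ad^*_{g_2}\mu(p)=\Ad^*_{g_1g_2}\mu(p).
\end{equation*}
So it suffices to prove the identity for $g=\exp(tS)$, for all $t$, all $S$ and all $p$.

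\textbf{Step 2: differentiate along the curve.} Fix $S,X\in\mathfrak g$ and $p\in M$, and set $p_t=\exp(tS)\cdot p$. Define
\begin{equation*}
f(t)=\braket{\mu(p_t),X}-\braket{\mu(p),\Ad_{\exp(-tS)}X}.
\end{equation*}
Then $f(0)=0$. Using that $\Phi$ is a comoment map, the derivative of the first term is
\begin{equation*}
\rmd\Phi(X)(S^*_{p_t})=\rmd\Phi(X)\bigl(\sgrad\Phi(S)\bigr)(p_t),
\end{equation*}
which equals $\pm\{\Phi(S),\Phi(X)\}(p_t)$, with the sign fixed by the paper's convention for $\sharp$. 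Because $\Phi$ is a Lie algebra homomorphism, this is $\pm\Phi([S,X])(p_t)$.

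For the second term, I rewrite $\braket{\mu(p),\Ad_{\exp(-tS)}X}=\braket{\mu(p_t),\Ad_{\exp(-tS)}X}$ evaluated at time zero and use the group property. More cleanly, I consider $F(t)=\braket{\mu(p_t),\Ad_{\exp(tS)}Y}$ with $Y$ fixed. Its derivative is
\begin{equation*}
\pm\Phi\bigl([S,\Ad_{\exp(tS)}Y]\bigr)(p_t)+\Phi\bigl([S,\Ad_{\exp(tS)}Y]\bigr)(p_t).
\end{equation*}
The first contribution comes from the motion of $p_t$ (by the computation above), and the second from $\tfrac{\rmd}{\rmd t}\Ad_{\exp(tS)}Y=[S,\Ad_{\exp(tS)}Y]$, using linearity of $\Phi$.

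\textbf{Step 3: the main obstacle is the sign bookkeeping.} The conventions involved are that $S\mapsto S^*$ is an antihomomorphism, that $\sgrad$ is an antihomomorphism, and the definition of $\sharp$. These must combine so that the two contributions in $F'(t)$ cancel, i.e.\ the first sign is $-$. This cancellation is exactly what consistency of $\sgrad\circ\Phi=\phi$ with the two antihomomorphism identities guarantees. I would verify it by checking that
\begin{equation*}
\rmd\Phi(X)(S^*)=\omega\bigl(\sgrad\Phi(X),\sgrad\Phi(S)\bigr)=\{\Phi(X),\Phi(S)\}=\Phi([X,S]).
\end{equation*}

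\textbf{Step 4: conclude.} With the cancellation, $F$ is constant, so
\begin{equation*}
\braket{\mu(\exp(tS)\cdot p),\Ad_{\exp(tS)}Y}=\braket{\mu(p),Y}\quad\text{for all }Y.
\end{equation*}
Substituting $Y=\Ad_{\exp(-tS)}X$ gives $\mu(\exp(tS)\cdot p)=\Ad^*_{\exp(tS)}\mu(p)$. Step 1 then extends this to all $g\in G$.

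Connectedness of $M$ is not needed in this argument. It would only matter if moment maps were allowed to differ by constants, which is excluded here because $\Phi$ is required to be a homomorphism.
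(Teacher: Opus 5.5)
The paper states this lemma without proof, as standard background in Appendix~\ref{app:symplectic}. Your argument is the usual textbook one, and it is correct.

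The decisive identity is
$\rmd\Phi(X)(S^*)=\omega\bigl(\sgrad\Phi(X),\sgrad\Phi(S)\bigr)=\{\Phi(X),\Phi(S)\}=\Phi([X,S])$.
It follows from $\omega(\sgrad f,\cdot)=\rmd f$, the definition of the bracket, and the homomorphism property of $\Phi$. The two antihomomorphism identities are not needed for it. It gives the first contribution to $F'(t)$ as $-\Phi([S,\Ad_{\exp(tS)}Y])(p_t)$, which cancels exactly against $\Phi\bigl(\tfrac{\rmd}{\rmd t}\Ad_{\exp(tS)}Y\bigr)(p_t)$. So $F$ is constant, and Step~1 extends the identity from $\exp(\mathfrak g)$ to the whole connected group $G$.

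For the write-up:
\begin{itemize}
\item Drop the abandoned function $f(t)$ and state the sign as $-$ from the outset, rather than writing $\pm$ and resolving it later.
\item Say explicitly that you use $\Ad^*_g=(\Ad_{g^{-1}})^*$. This convention makes $\Ad^*$ a homomorphism, which Step~1 needs, and it is the one under which the stated identity holds.
\end{itemize}
Your remark that connectedness of $M$ is not used is correct, because the paper requires $\Phi$ to be a Lie algebra homomorphism.
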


The following lemma is convenient for constructing new moment maps 
out of old ones:

\begin{lemma}
\label{lem:mom_map_composition}
Let $b: H\rightarrow G$ be a Lie group homomorphism. Let $\mu: M\rightarrow \mathfrak{g}^*$ be a moment map for the $G$-action. Then $b^*\circ \mu: M \rightarrow \mathfrak{h}^*$ is a moment map for the $H$-action.
\end{lemma}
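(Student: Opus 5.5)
We need to show two things: $b^*\circ\mu$ is equivariant-free linear data giving a comoment map. Set $\Phi_G:\mathfrak g\to C^\infty(M)$ the comoment map of $\mu$, so $\langle\mu(p),X\rangle=\Phi_G(X)(p)$, $\sgrad\circ\Phi_G=\phi_G$, and $\Phi_G$ is a Lie algebra homomorphism. The $H$-action on $M$ is understood as $h\cdot p := b(h)\cdot p$. Here $b^*$ denotes the dual of the differential $b_*:\mathfrak h\to\mathfrak g$, so that $\langle b^*\circ\mu(p),Y\rangle=\langle\mu(p),b_*Y\rangle$. The plan is to define $\Phi_H:=\Phi_G\circ b_*$ and verify the two defining properties of a comoment map. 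Then $b^*\circ\mu$ corresponds to $\Phi_H$ under the stated one-to-one correspondence between linear maps $\mathfrak h\to C^\infty(M)$ and smooth maps $M\to\mathfrak h^*$.

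First, $\Phi_H$ is a Lie algebra homomorphism, since it is a composition of two Lie algebra homomorphisms. The differential $b_*$ of a Lie group homomorphism is a Lie algebra homomorphism, and $\Phi_G$ is one by assumption. Linearity is likewise immediate.

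Second, we must check $\sgrad\circ\Phi_H=\phi_H$, i.e. that for $Y\in\mathfrak h$ the fundamental vector field of the $H$-action equals that of the $G$-action for $b_*Y$. Using $b(\exp(tY))=\exp(t\,b_*Y)$, which is the naturality of the exponential map under Lie group homomorphisms, we get
\begin{equation}
Y^*_p=\frac{\rmd}{\rmd t}\exp(tY)\cdot p\Big|_{t=0}=\frac{\rmd}{\rmd t}\exp(t\,b_*Y)\cdot p\Big|_{t=0}=(b_*Y)^*_p.
\end{equation}
Hence $\phi_H=\phi_G\circ b_*=\sgrad\circ\Phi_G\circ b_*=\sgrad\circ\Phi_H$.

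The only point needing care is this identification of fundamental vector fields through the exponential. It rests on the standard fact $b\circ\exp_H=\exp_G\circ b_*$, so I do not expect a genuine obstacle. The rest is bookkeeping of the pairing conventions, in particular that the paper's $b^*$ means the dual of $b_*$.

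In the application to Theorem~\ref{thm:densmapismommap}, one takes $G=\mathrm U(\H)$ with moment map $\mu_0(P)=(S\mapsto \rmi\trace(PS))$. One then composes with $b$, or with the inclusion $K\hookrightarrow\mathrm U(\H)$, to obtain the density map $\mu$ under the identification $(\rmi\mathfrak k)^*\cong\mathfrak k^*$.
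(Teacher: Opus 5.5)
Your proof is correct and complete. You pull the comoment map back along $b_*$ by setting $\Phi_H=\Phi_G\circ b_*$, which is a Lie algebra homomorphism. The naturality $b\circ\exp_H=\exp_G\circ b_*$ then identifies the fundamental vector field of $Y\in\mathfrak h$ with that of $b_*Y$, so $\sgrad\circ\Phi_H=\phi_H$, and $b^*\circ\mu$ is the moment map corresponding to $\Phi_H$.

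The paper gives no proof of this lemma; its appendix states it as a standard fact. Your argument is the standard verification that fills this in, including the useful clarifications that the $H$-action is the one induced through $b$ and that $b^*$ means the dual of $b_*$. The only thing to fix is your opening sentence, which is garbled and should simply state the two properties you go on to check.
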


Finally, we state the convexity theorems on the image of a moment map.
Let $G$ be a compact connected Lie group acting on a compact connected symplectic manifold $M$ with moment map $\mu: M\rightarrow \mathfrak{g}^*$. Then:

\begin{theorem}[Atiyah--Guillemin--Sternberg \cite{Atiyah82,GS82}]
\label{thm:atiyah}
Suppose $G$ is abelian (i.e., a torus). Then 
\begin{equation}
\mu(M) = \conv\{\mu(p)\mid p \text{ is } G\text{-fixed point}\}.
\end{equation}
\end{theorem}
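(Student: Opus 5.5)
The plan follows Atiyah's original strategy: reduce convexity to connectedness of the fibres of projected moment maps, establish that connectedness with Morse--Bott theory, and then identify the extreme points with images of fixed points. Throughout, $G=T$ is a torus with Lie algebra $\mathfrak{t}$. For $X\in\mathfrak{t}$ write $\mu^X:=\langle\mu,X\rangle$; by the definition of a moment map, $\mu^X$ is the Hamiltonian generating the fundamental vector field $X^*$.

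\emph{Step 1 (local structure).} The first goal is to show that each $\mu^X$ is a Morse--Bott function whose critical manifolds all have even index and even coindex. Since $\omega$ is nondegenerate, $\rmd\mu^X_p=0$ if and only if $X^*_p=0$. Hence the critical set is the fixed-point set of the subtorus $T_X=\overline{\exp(\R X)}$. At such a $p$, I would average a compatible almost complex structure over $T$ and apply the equivariant Darboux (linearization) theorem. In these coordinates $T_pM\cong\C^n$ splits into weight spaces, and $\mu^X(p+z)=\mu^X(p)+\tfrac12\sum_j\alpha_j(X)|z_j|^2+O(|z|^3)$ with weights $\alpha_j\in\mathfrak{t}^*$. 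It follows that the critical set is a smooth submanifold, namely the span of the $z_j$ with $\alpha_j(X)=0$. The Hessian is nondegenerate in the normal directions, and it is negative definite on a complex subspace, so the index is even. The same holds for the coindex.

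\emph{Step 2 (connectedness, the main obstacle).} Two standard Morse--Bott facts follow from the even indices. First, a Morse--Bott function on a compact connected manifold with all indices and coindices $\ne1$ has connected level sets. Second, each local minimum or maximum of such a function is global. Passing through a critical level of index $\ge2$ does not disconnect sublevel sets; this is Atiyah's lemma, proved by attaching disc bundles of fibre dimension at least $2$.

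Using these facts, I would prove by induction on $k\le\dim T$ that for every choice of $X_1,\dots,X_k\in\mathfrak{t}$ the map $(\mu^{X_1},\dots,\mu^{X_k})$ has connected fibres. The inductive step restricts $\mu^{X_{k+1}}$ to a fibre $N$ of $(\mu^{X_1},\dots,\mu^{X_k})$, after a perturbation ensuring that $N$ is a regular fibre and hence a submanifold. I then need to show that this restriction is again Morse--Bott with even indices; this is the delicate part. The plan is to use a Lagrange-multiplier argument: a critical point of $\mu^{X_{k+1}}|_N$ is a critical point of $\mu^{Y}$ for some $Y\in\operatorname{span}(X_1,\dots,X_{k+1})$. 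The Step~1 normal form for $\mu^Y$ at that point then gives the index count. Fibres over nonregular values are handled by a limiting argument, since a decreasing intersection of compact connected sets is connected.

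\emph{Step 3 (convexity).} Taking $k=2$, the image $\mu(M)$ meets every line in a connected set. To see this, fix a line $\ell=\{\xi:\langle\xi,X_1\rangle=c\}$ in a chosen $2$-plane, and note that $\mu(M)\cap\ell$ projects to the image of the connected fibre $(\mu^{X_1})^{-1}(c)$ under $\mu^{X_2}$. Applying this with all pairs of directions gives that $\mu(M)$ is convex. Compactness of $\mu(M)$ is immediate from compactness of $M$.

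\emph{Step 4 (vertices).} The set $M^T$ of $T$-fixed points has finitely many components by compactness and Step~1. On each component $\rmd\mu^Y=0$ for all $Y$, so $\mu$ is constant there, and $\mu(M^T)$ is a finite set. Now take $X$ generic, meaning $X$ avoids the finitely many hyperplanes $\ker\alpha_j$ coming from the isotropy weights. Then $T_X=T$, so the maximum set of $\mu^X$ lies in $M^T$. Consequently every exposed point of $\mu(M)$ in a generic direction is in $\mu(M^T)$. Since a compact convex set is the closed convex hull of its exposed points (Straszewicz), and the generic directions are dense, we get $\mu(M)\subseteq\conv\mu(M^T)$. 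The reverse inclusion follows from the convexity established in Step~3.
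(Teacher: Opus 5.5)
The paper gives no proof of this statement; it is quoted from Atiyah and Guillemin--Sternberg. Your outline follows Atiyah's route, but as written Step~3 does not establish convexity. The argument there only uses connectedness of level sets $(\mu^{X_1})^{-1}(c)$ of single components. What it actually shows is that every planar projection $(\mu^{X_1},\mu^{X_2})(M)\subseteq\R^2$ is convex, and for $\dim T\ge 3$ that does not imply convexity of $\mu(M)$. The Bloch-sphere map $\puremf(\C^2)\to\R^3$ of Section~\ref{sec:Bloch} is a counterexample to this inference. Every component $\langle\mu,u\rangle$, $u\in\R^3$, is a height function on $S^2$ with connected level sets, and every planar projection of the image is an elliptical disc. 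Yet the image is the sphere, which is not convex. (That map comes from $\mathrm{SU}(2)$, not a torus, which is exactly why the stronger connectedness fails for it.) Note also that $\{\xi:\langle\xi,X_1\rangle=c\}$ is a hyperplane, not a line, once $\dim T>2$. The correct use of Step~2 is this: a line $\ell\subseteq\mathfrak t^*$ is cut out by $n-1$ equations $\langle\xi,X_i\rangle=c_i$ with $n=\dim T$. Then $\mu(M)\cap\ell=\mu(F)$, where $F$ is the fibre of $(\mu^{X_1},\dots,\mu^{X_{n-1}})$ over $c$. So you need the case $k=n-1$, not $k=2$.

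Step~1 and the regular-fibre induction in Step~2 are sound, but the treatment of nonregular values is not yet an argument. To use ``a decreasing intersection of compact connected sets is connected'' you must exhibit such sets. The natural candidates $f^{-1}(\bar B_\epsilon(c))$ are unions of fibres, and you do not know they are connected. The uncontrolled fibres lie over walls of critical values (images of points with positive-dimensional stabilisers), and these walls can separate. In general, connected fibres over a dense set of values do not force all fibres to be connected. This matters because Step~3 uses arbitrary lines and the inductive step uses connectedness of $N$.

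The cheapest repair avoids singular fibres altogether.
\begin{enumerate}
\item Pass first to an effective action. Otherwise your ``perturbation'' cannot produce regular values whenever some nonzero $Y$ in the span of the $X_i$ has $\mu^Y$ constant.
\item Prove inductively that fibres are connected over a dense set $D_k\subseteq\R^k$ of regular values of $(\mu^{X_1},\dots,\mu^{X_k})$. For $c'\in D_k$, the Morse--Bott argument on $N$ gives connected fibres over all of $\{c'\}\times\R$. Intersecting $D_k\times\R$ with the open dense set of regular values of the $(k+1)$-tuple preserves density.
\item For $a,b$ in the interior of $\mu(M)$, approximate the line through them by parallel lines over $D_{n-1}$. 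Then use that $\mu(M)$ is closed and equals the closure of its interior.
\end{enumerate}

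A smaller point in Step~4: avoiding finitely many hyperplanes does not give $T_X=T$, since a rational $X$ generates a circle. What you need is $\{X^*=0\}=M^T$, which holds for $X$ outside the finitely many proper isotropy subalgebras. Comparing support functions on this dense set of $X$ then gives $\mu(M)=\conv\mu(M^T)$ directly, without Straszewicz.
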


\begin{theorem}[Kirwan \cite{Kirwan84}]
\label{thm:kirwan}
Let $\mathfrak{t}\subseteq \mathfrak{g}$ be a Cartan subalgebra. Choose an $\Ad$-invariant inner product $\braket{\cdot,\cdot}$ on $\mathfrak{g}$ so that $\mathfrak{t}^*$ can be identified with a subspace of $\mathfrak{g}^*$. Let $\mathfrak{t}^*_+$ be a positive Weyl chamber. Then
\begin{equation}
\mu(M) \cap \mathfrak{t}^*_+
\end{equation}
is a convex polytope.
\end{theorem}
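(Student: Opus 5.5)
We will prove Kirwan's theorem (Theorem~\ref{thm:kirwan}) by reducing it to the abelian case, Theorem~\ref{thm:atiyah}, and then assembling local convexity into global convexity. First, equivariance (Lemma~\ref{lem:moment-map-equivariance}) shows that $\mu(M)$ is a union of coadjoint orbits. Each coadjoint orbit meets $\mathfrak{t}^*_+$ in exactly one point. Hence $\Delta:=\mu(M)\cap\mathfrak{t}^*_+$ is a complete invariant of the image, and it is compact because $M$ is compact. We then consider the map $\tau:=\pi\circ\mu:M\to\mathfrak{t}^*_+$, where $\pi$ sends $\xi\in\mathfrak{g}^*$ to the unique point of $\mathfrak{t}^*_+$ in its orbit. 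The map $\pi$ is continuous, and $\Delta=\tau(M)$.

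The key local statement concerns the symplectic cross-section. Fix an open face $\sigma$ of $\mathfrak{t}^*_+$ and let $G_\sigma$ be its stabilizer. There is a $G_\sigma$-invariant slice $S_\sigma\subseteq\mathfrak{g}^*_\sigma$ through $\sigma$. Its preimage $Y_\sigma=\mu^{-1}(S_\sigma)$ is a $G_\sigma$-invariant symplectic submanifold, with moment map $\mu|_{Y_\sigma}$ taking values in $\mathfrak{g}_\sigma^*$. On the principal face, $G_\sigma=T$, so $Y$ carries a Hamiltonian torus action. There $\tau|_Y$ is just the $T$-moment map, and we can apply the local form of Atiyah--Guillemin--Sternberg, namely the local normal form (Marle--Guillemin--Sternberg). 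This shows that, near each point $\tau(p)$, the image $\tau(M)$ coincides with a convex polyhedral cone $C_p$ with apex $\tau(p)$. The same cone analysis applies on lower faces after an induction on the rank of $G_\sigma$.

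To pass from local to global, we use a local-to-global principle for proper maps (Condevaux--Dazord--Molino / Hilgert--Neeb--Plank). Suppose $M$ is connected, $\tau$ is proper, each point has local convexity data as above, and the fibers of $\tau$ are connected. Then $\tau(M)$ is convex and $\tau$ is open onto its image. The argument is the standard one: define a length metric on $M$ induced by pulling back segments, and show that locally convex images of connected spaces with connected fibers are globally convex. Polyhedrality then follows from two facts. First, $\Delta$ is compact and convex. Second, it is locally polyhedral: each point has a neighborhood in which $\Delta$ agrees with a polyhedral cone. Compactness lets us cover $\Delta$ by finitely many such neighborhoods, and a compact convex set that is locally polyhedral is a polytope.

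The main obstacle will be connectedness of the fibers of $\tau$, equivalently of $\mu^{-1}(\mathcal{O})$ for coadjoint orbits $\mathcal{O}$. I would prove it the way Atiyah did for the torus case, by Morse--Bott theory. For generic $X\in\mathfrak{t}$, the component $\langle\mu,X\rangle$ is Morse--Bott with only even indices, so its level sets are connected. Induction on the dimension of the subtorus then gives connected fibers of the $T$-moment map on the cross-section. Alternatively, I would invoke Kirwan's argument that $\|\mu\|^2$ is minimally degenerate, which yields connectedness of $\mu^{-1}(0)$, and then apply the shifting trick $M\times\mathcal{O}^-$ to handle arbitrary orbits. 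The remaining technical care lies in the strata where $\tau$ fails to be smooth, on the walls of the chamber. There I would rely on the cross-section theorem to ensure that $\tau$ restricted to $Y_\sigma$ is continuous, with cone-shaped local images.
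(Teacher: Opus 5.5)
The paper does not prove this theorem. It quotes it from Kirwan and uses it as a black box, so your outline has to stand on its own.

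\textbf{What works.} Your architecture is that of the modern proofs: the continuous proper map $\tau=\pi\circ\mu$, local convexity data, a local-to-global principle, and finite covering for polyhedrality. The part over the open chamber is fine. A $G$-invariant neighbourhood of such a $p$ is $G\times_T(U\cap Y)$, and the Marle--Guillemin--Sternberg form for the $T$-action on $Y$ gives a polyhedral cone and local fiber connectedness. Note that this controls the image of a neighbourhood of $p$, not of the whole fiber; that is what the local-to-global step is for.

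\textbf{The gap: local convexity at points $p$ with $\tau(p)$ on a wall of $\mathfrak t^*_+$.} The local-to-global principle needs such data at every point of $M$, and your justification for wall points does not work.
\begin{enumerate}
\item Every $G_\sigma$ is a stabilizer of a point of $\mathfrak t^*$, hence contains $T$. Its rank therefore equals that of $G$, so an induction on rank never starts.
\item Induction on semisimple rank fails too. At the minimal face (e.g.\ $\mu(p)=0$ for semisimple $G$) one has $G_\sigma=G$, the cross-section is $M$ itself, and nothing has been reduced.
\item At such a point the local normal form says the image under $\tau$ of a neighbourhood of $p$ is governed by the nonabelian model $G_{\mu(p)}\times_{G_p}(\mathfrak m^*\times V)$. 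In particular it depends on the quadratic moment map of the stabilizer $G_p$ acting linearly on the symplectic slice $V$, intersected with a Weyl chamber.
\item That this image is a convex polyhedral cone is itself a nonabelian convexity statement. The cross-section theorem does not give it: it provides a $G_\sigma$-invariant symplectic submanifold, not the shape of the image. The abelian normal form does not give it either.
\end{enumerate}
This is exactly where the known proofs need an extra nonabelian ingredient: Morse theory of $\|\mu\|^2$ in Kirwan's argument, highest weights of linear representations in Sjamaar's, and symplectic cutting in Lerman--Meinrenken--Tolman--Woodward's.

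\textbf{The main obstacle is misplaced.} In the Hilgert--Neeb--Plank form of the local-to-global principle, only \emph{local} fiber connectedness is assumed. Global connectedness of the fibers $\tau^{-1}(\lambda)=\mu^{-1}(\mathcal O_\lambda)$ is then a conclusion, not an input.

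Your first route to fiber connectedness is also not valid as written:
\begin{enumerate}
\item Atiyah's Morse--Bott argument relies on compactness. On the cross-section the $T$-moment map is not proper, so the argument does not transfer there.
\item On $M$ itself the argument only controls fibers of $\mathrm{pr}_{\mathfrak t^*}\circ\mu$, which are not fibers of $\tau$.
\end{enumerate}
Your second route, Kirwan's $\|\mu\|^2$ argument with the shifting trick applied to $M\times\mathcal O^-$, does correctly give connected $\mu^{-1}(\mathcal O)$. However, it says nothing about the cone structure of the image near wall points, so it does not close the gap above.
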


\bibliography{refs}

\end{document}